\definecolor{darkblue}{rgb}{0,0,0.5}
\apptocmd{\sloppy}{\hbadness 9999\relax}{}{}
\DeclareMathOperator{\E}{\mathbb{E}}
\DeclarePairedDelimiterX\parvertent[2]{\lparen}{\rparen}%
{#1\,\delimsize\vert\,\mathopen{}#2}
\newtheorem{theorem}{Theorem}
\newtheorem{corollary}[theorem]{Corollary}
\newtheorem{lemma}[theorem]{Lemma}
\newenvironment{proof}[1][Proof]{\noindent\textbf{#1.} }{\ \rule{0.5em}{0.5em}}
\newcommand{\calA}{{\cal A}}
\newcommand{\calE}{{\cal E}}
\newcommand{\calF}{{\cal F}}
\newcommand{\calL}{{\cal L}}
\newcommand{\calO}{{\cal O}} 
\newcommand{\calP}{{\cal P}}
\newcommand{\calH}{{\cal H}}
\newcommand{\calU}{{\cal U}}
\newcommand{\1}{^{(1)}}
\newcommand{\state}[1]{\ketbra{#1}{#1}}
\newcommand{\bI}{\boldsymbol I}
\newcommand{\bfz}{\mathbf{z}}
\def\be{\begin{equation}}
\def\ee{\end{equation}}
\def\ba{\begin{eqnarray}}
\def\ea{\end{eqnarray}}
\newcommand{\QZ}[1]{{{\textcolor{black}{#1}}}}
\begin{document}
%TC:ignore

\title{\QZ{Holographic deep thermalization \\
for secure and efficient quantum random state generation}}

\author{Bingzhi Zhang}
\thanks{These authors contributed equally.}
\email{bingzhiz@usc.edu}
\affiliation{ Department of Physics and Astronomy, University of Southern California, Los
Angeles, California 90089, USA
}
\affiliation{
Ming Hsieh Department of Electrical and Computer Engineering, University of Southern California, Los
Angeles, California 90089, USA
}

\author{Peng Xu}
\thanks{These authors contributed equally.}
\affiliation{Department of Statistics, University of Illinois at Urbana-Champaign, Champaign, IL 61820, USA}

\author{Xiaohui Chen}
%\email{xiaohuic@usc.edu}
\affiliation{Department of Mathematics, University of
Southern California, Los
Angeles, California 90089, USA}

\author{Quntao Zhuang}
\email{qzhuang@usc.edu}
\affiliation{
Ming Hsieh Department of Electrical and Computer Engineering, University of Southern California, Los
Angeles, California 90089, USA
}
\affiliation{ Department of Physics and Astronomy, University of Southern California, Los
Angeles, California 90089, USA
}

\begin{abstract}

Quantum randomness, especially random pure states, \QZ{underpins} fundamental questions like black hole physics and quantum complexity, as well as in practical applications such as quantum device benchmarking and quantum advantage certification. The conventional approach for generating genuine random states, \QZ{known as} `deep thermalization', faces significant challenges, including scalability issues due to the need for a large ancilla system and susceptibility to attacks, as demonstrated in this work.
We introduce holographic deep thermalization, a secure and hardware-efficient quantum random state generator.
\QZ{Via a sequence of scrambling-measure-reset processes}, it continuously trades space with time, and substantially reduces the required ancilla size to as small as a system-size-independent constant; at the same time, it guarantees security by \QZ{eliminating} quantum correlation between the data system and attackers.
Thanks to the resource reduction, our circuit-based implementation on IBM Quantum devices achieves genuine $5$-qubit random state generation utilizing only a total of $8$ qubits.
\end{abstract}

\maketitle

%TC:endignore

\section{Introduction}

Randomness is a fundamental concept in science, underpinning the backbone of fields such as statistics, information theory, dynamical systems, and thermodynamics. Its applications are extensive, spanning cryptography, Monte Carlo simulations, and machine learning. True randomness originates from quantum physics, where the act of measurement collapses quantum superpositions into inherently unpredictable outcomes.
Random quantum states, in particular, have become central to modern quantum science, playing a crucial role in understanding foundational theories, such as black hole physics~\cite{hayden2007black}, quantum chaos~\cite{roberts2017chaos}, and the complexity of quantum circuits~\cite{brandao2021models}. Random quantum states are equally essential in practical applications, including quantum device benchmarking~\cite{cross2019validating}, quantum advantage certification~\cite{arute2019quantum, tillmann2013experimental}, quantum cryptography~\cite{pirandola2020advances} and quantum learning~\cite{huang2020predicting, elben2023randomized}. 
The unique type of random quantum states necessary in these scenarios is the Haar-random ensemble \QZ{or its approximations. \QZ{The} Haar-random ensemble is} uniformly distributed over the entire Hilbert space, which represents typical quantum states with high complexity and volume-law entanglement. For the same reason, though, \QZ{the} Haar ensemble is also difficult to sample.

Several approaches have been developed to generate approximate Haar-random states, \QZ{formally known as the $K$-design~\cite{roberts2017chaos}}. \QZ{The} early approach \QZ{based on} local random unitary circuits requires a large \QZ{circuit} depth linear in system size~\cite{brandao2016local, harrow2023approximate}. 
In addition, \QZ{different samples are generated by} by randomly adjusting the quantum gates, leading to significant overhead in quantum circuit compilation and experimentation. \QZ{At the same time, the quality of the random states generated by the approach relies on the quality of classical randomness.} 

Recently, \emph{deep thermalization} (DT)~\cite{ho2022exact,ippoliti2022solvable,cotler2023emergent,ippoliti2023dynamical,choi2023preparing} suggests a promising approach to generate genuine random states from partial quantum measurements, with both initial state and quantum dynamics fixed. Random states emerge within a small subsystem via projective measurements over the rest \QZ{of the system} when the whole system sufficiently thermalizes. However, such approaches require a large number of ancilla qubits linearly growing with the system size, \QZ{posing} an experimental challenge considering near-term quantum devices~\cite{choi2023preparing}. Furthermore, we prove that DT is vulnerable to \QZ{an} entanglement attack---\QZ{the} initial quantum correlation between the data system and \QZ{the} attacker \QZ{decreases by} at most a constant \QZ{amount} regardless of \QZ{the} ancilla size.

In this work, we introduce the \emph{holographic deep thermalization} (HDT), a secure and hardware-efficient paradigm for generating random quantum states with only a \emph{constant} number of ancilla qubits, via trading space with time~\cite{anand2023holographic}. In each time step, a fixed high-complexity unitary, similar to that used in DT but acting on a much smaller system, is applied, and then the ancillae are measured and reset. 
We rigorously prove the convergence of HDT in terms of the frame potential and show that quantum correlation between the data system and any attacker is erased step-by-step to guarantee security. Thanks to the resource reduction, our circuit-based implementation on IBM Quantum devices achieves genuine $2$-qubit random state generation utilizing only a $4$-qubit system, while previous work adopts a 10-atom system~\cite{choi2023preparing}. When adopting 5 data qubits and 3 ancilla qubits, we also extend the dimension of random states to $d_A=32$, representing a six-fold increase compared to the previous experiment~\cite{choi2023preparing}.

Furthermore, via balancing the ancilla size and number of time steps, HDT achieves a continuous trade-off between the total quantum circuit size and the ancilla size, as we demonstrate \QZ{both} analytically and experimentally. Besides the minimum-space operating point, we also identify a minimum-quantum-circuit-size operating point where \QZ{the} ancilla size equals the system size.

%TC:ignore
\begin{figure}[t]
    \centering
    \includegraphics[width=0.45\textwidth]{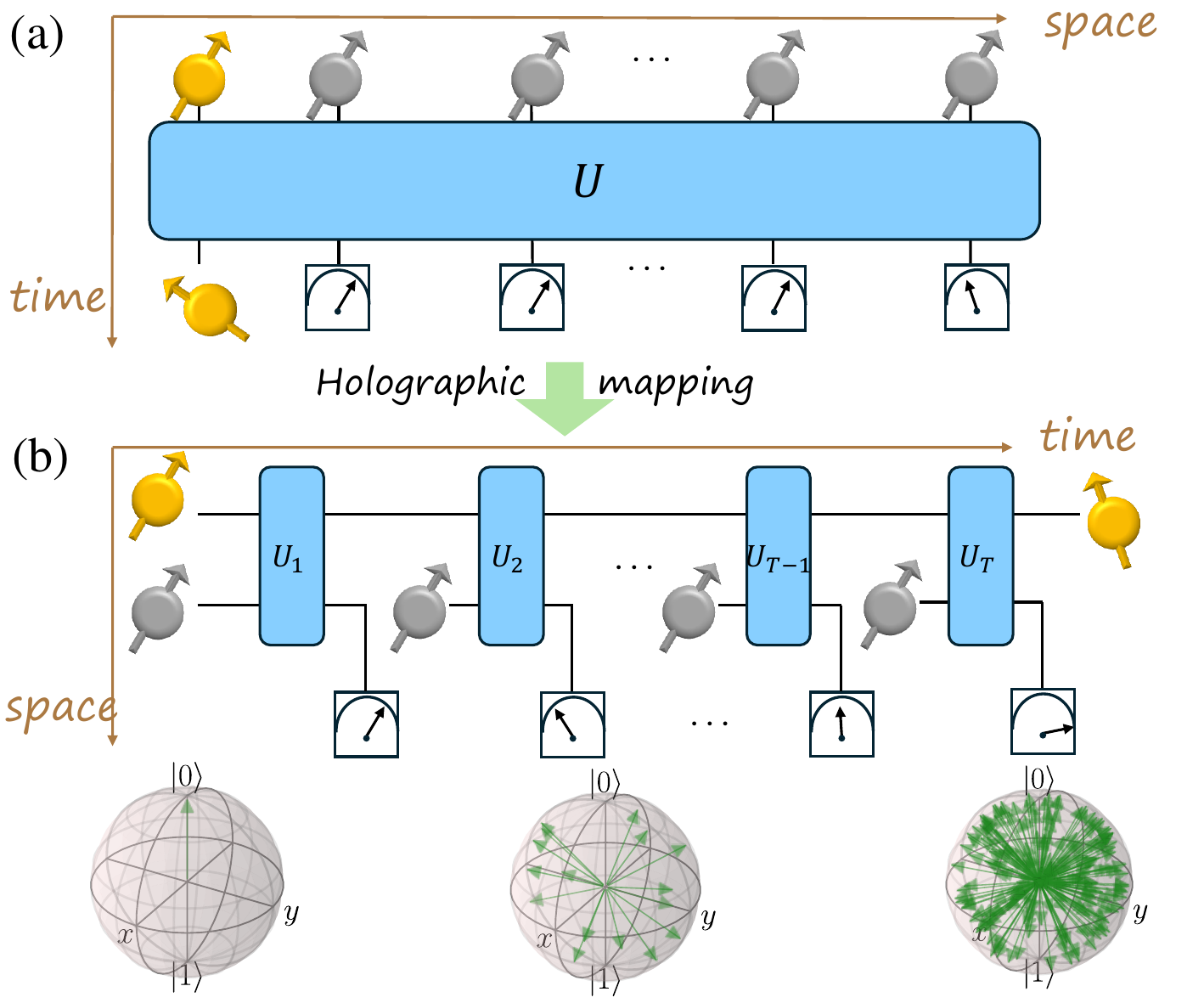}
    \caption{\QZ{\bf Conceptual plot of the schemes.} (a) Deep thermalization  with a single chaotic unitary $U$. (b) Holographic deep thermalization (HDT) with a sequence of chaotic unitaries $\{U_t\}_{t=1}^{T}$. Yellow and gray \QZ{represent} data and ancilla systems. \QZ{In the bottom \QZ{panel}, we plot the {\em projected ensemble} in the Bloch sphere, starting from the initial single-state input toward the late-time distribution to represent the growth of complexity and randomness of the generated state ensemble in HDT.}}
    \label{fig:scheme}
\end{figure}
%TC:endignore

\section{Problem set-up and protocol design}
Generation of genuinely quantum random states in DT relies on quantum measurement projection. For a fixed pure quantum state $\ket{\Psi}_{AB}$ on a joint system of data $A$ and ancilla $B$, the \emph{projected ensemble} $\calE(\Psi)$ on subsystem $A$ is generated via a projective measurement on subsystem $B$, e.g., with projectors $\{\ketbra{z}{z}_B\}$ in the computational basis~\cite{cotler2023emergent}. Conditioned on the measurement result $z$ on $B$, data system $A$ \QZ{collapses to} a pure state $\ket{\psi_z} \propto {{}_B\braket{z|\Psi}_{AB}}$, with probability $p_z=\tr(\ketbra{\Psi}{\Psi}_{AB}\ketbra{z}{z}_B)$.

To quantify \QZ{how close} a state ensemble $\calE(\Psi)$ \QZ{is} to the Haar-random ensemble up to the $K$-th moment, we consider its frame potential~\cite{ippoliti2023dynamical}, defined as
\be\label{eq:fp_def}
\begin{aligned}
\calF^{(K)}&\equiv \sum_{z, z'} p_z p_{z'}\abs{\braket{\psi_z|\psi_{z'}}}^{2K} \\
&\ge \calF_\text{Haar}^{(K)} = \binom{d_A + K-1}{d_A - 1}^{-1},
\end{aligned}
\ee
where $d_A=2^{N_A}$ is the dimension of the $N_A$-qubit data system $A$.
The frame potential quantifies the \QZ{inverse dimension of the} occupied symmetric space and is thus
lower bounded by $\calF_{\rm Haar}^{(K)}$---\QZ{the frame potential of the Haar-random ensemble}.

When an ensemble achieves the minimum frame potential $\calF_\text{Haar}^{(K)}$, one regards the ensemble as the $K$-design~\cite{roberts2017chaos, cotler2023emergent, ho2022exact, ippoliti2023dynamical}, which approximates the Haar-random ensemble to at least the $K$-th moment.
An $\epsilon$-approximate $K$-design state ensemble $\calE$ is defined via the relative deviation of frame potential as 
$
\delta^{(K)} \equiv \calF^{(K)}/\calF_\text{Haar}^{(K)} - 1 \le \epsilon
$ (see Methods).

%TC:ignore

\begin{figure*}[t]
    \centering
    \includegraphics[width=0.85\textwidth]{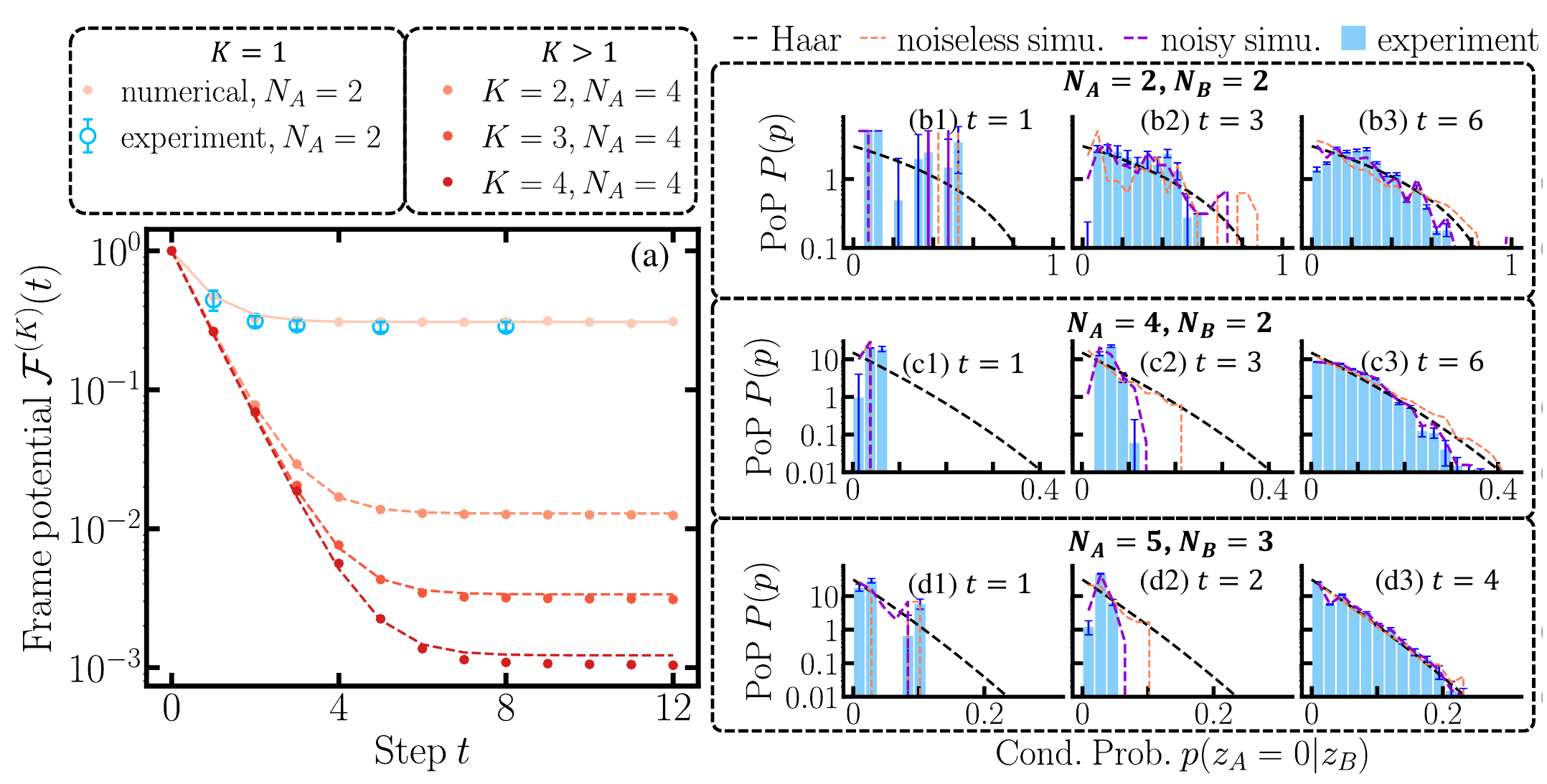}
    \caption{\QZ{\bf Dynamics of frame potential and probability distribution of probability (PoP) in HDT.} In (a), the red dots from light to dark represent numerical simulation results of $\calF^{(K)}(t)$ in HDT from $K=1$ to $K=4$. Here $N_B=2$. The red solid line represents first-order theoretical prediction of Eq.~\eqref{F1_exact} in Theorem~\ref{first_order_F} and red dashed lines from light to dark represent \QZ{the asymptotic lower bounds of the higher-order frame potential} in \QZ{Ineq.~\eqref{high_order_F}} for corresponding $K$. Blue dots with errorbars show IBM Quantum experimental results for $\calF^{(1)}(t)$.   
    In the right panel, we plot the PoP at different time steps in HDT with various choices of data and ancilla system \QZ{size} $N_A, N_B$ (listed in titles). 
    Blue bars represent the IBM Quantum experimental results of average PoP with errorbars indicating fluctuations.
    Black dashed lines are PT distribution for Haar theory, orange and purple dashed lines show the noiseless and noisy \QZ{device} simulations for reference.} 
    \label{fig:Fevo}
\end{figure*}

%TC:endignore

Previous studies on the conventional DT~~\cite{ho2022exact,ippoliti2022solvable,cotler2023emergent,ippoliti2023dynamical,choi2023preparing} (cf. Fig.~\ref{fig:scheme}a) focused on the {\em projected ensemble} obtained from a single fixed unitary on a trivial initial state. It requires $N_B = \Omega(K N_A + \log (1/\epsilon))$ ancilla qubits to achieve the $\epsilon$-approximate $K$-design\QZ{~\cite{ippoliti2022solvable, ippoliti2023dynamical}}, which creates an overhead growing with the system size\QZ{---}especially for high-order state design generation.

We propose HDT (cf. Fig.~\ref{fig:scheme}b), which extends the time direction to $T$ discrete steps to compensate for the reduction in space---in the same spirit of holographic quantum simulation~\cite{anand2023holographic}.
HDT also begins with a trivial state, e.g., $\ket{\psi_0} = \ket{0}^{\otimes N_A}$. In each of the $T$ time steps, $N_B$ ancilla qubits initialized in $\ket{0}^{\otimes N_B}$ interact with \QZ{the} data system $A$ through a fixed \QZ{high-complexity} unitary $U_t$, which can be implemented by either a randomly sampled fixed deep quantum circuit~\cite{nahum2017quantum, nahum2018operator} or chaotic Hamiltonian evolution~\cite{cotler2023emergent, ho2022exact}. Following projective measurements on the ancilla system, the \emph{projected} ensemble $\calE_t$ of the data system at step $t$ is fed towards the next time step as input.

\section{Convergence towards the Haar ensemble}
Assuming that each $U_t$ is a typical unitary, one expects the \emph{projected ensemble} $\calE_t$ to converge towards the fixed point of \QZ{the} Haar ensemble, as $t$ increases. Indeed, our HDT only requires $K + \log_2(1/\epsilon)$ constant number of ancilla qubits and a time step of $K N_A/[K+\log_2(1/\epsilon)]$ to achieve $\epsilon$-approximate $K$-design.

Below, we evaluate the frame potential to quantify the rate of convergence. Denoting the data and ancilla system \QZ{dimensions} as $d_A=2^{N_A}$ and $d_B=2^{N_B}$, we begin with the exact solution for the typical first-order frame potential \QZ{and then proceed to the higher-order frame potential. A proof sketch is provided in Methods and the complete proof can be found in Appendix \ref{app:F_dynamics}.}
\begin{theorem}
     The typical first-order frame potential $\calF^{(1)}(t)$ of the projected ensemble $\calE_t$ at step $t$ in the holographic deep thermalization \QZ{(with each $U_t$ typical for at least $2$-design)} is
     \begin{equation}\label{F1_exact}
         \begin{aligned}
         %\E_\textup{Haar}
         \calF^{(1)}(t)
         &=\frac{(d_A-1) (d_A d_B-1)}{d_A^2 d_B+1}\left(\frac{\left(d_A^2-1\right) d_B}{d_A^2 d_B^2-1}\right)^t \\
         &\quad + \frac{d_A^2(d_B+1)}{d_A^2 d_B+1} \calF_\textup{Haar}^{(1)}.
         \end{aligned}
     \end{equation}
    In the thermodynamic limit of \QZ{the} data system ($d_A \to \infty$), we have
    $
    %\E_\textup{Haar}
    \QZ{\calF^{(1)}(t) = d_B^{-t} + (1+1/d_B)\left(1-d_B^{-t}\right)\calF_\textup{Haar}^{(1)} + \calO(1/d_A^2).}
    $
\label{first_order_F}
\end{theorem}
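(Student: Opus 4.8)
The plan is to collapse the first-order frame potential onto a single scalar---the purity of the ensemble-averaged data state---and then propagate that scalar through the scrambling-measure-reset map with a two-copy twirl. The starting point is the elementary identity $\calF^{(1)}=\tr(\rho_A^2)$, valid for any projected ensemble: since $\rho_A=\tr_B\ketbra{\Psi}{\Psi}_{AB}=\sum_z p_z\ketbra{\psi_z}{\psi_z}$, expanding $\tr(\rho_A^2)=\sum_{z,z'}p_z p_{z'}\abs{\braket{\psi_z|\psi_{z'}}}^2$ is precisely \eqref{eq:fp_def} at $K=1$. Writing $\rho_t$ for the reduced state on $A$ of the projected ensemble $\calE_t$, it therefore suffices to track $\mathcal{P}_t\equiv\tr(\rho_t^2)$, with initial value $\mathcal{P}_0=1$ since $\calE_0$ is the single pure state $\ket{0}^{\otimes N_A}$.

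Next I would set up the one-step recursion. Appending $\ket{0}^{\otimes N_B}$, applying $U_t$, and tracing out $B$ (which is what averaging over the ancilla measurement outcome amounts to) sends $\rho_{t-1}\mapsto\rho_t=\tr_B[\,U_t(\rho_{t-1}\otimes\ketbra{0}{0}_B)U_t^\dagger\,]$. This is quadratic in $U_t$, so $\mathcal{P}_t$ is quartic in $U_t$ and the two-copy twirl of a $2$-design is exact---this is the meaning of ``typical $U_t$.'' With $M\equiv\rho_{t-1}\otimes\ketbra{0}{0}_B$ and replicas $A,A'$ and $B,B'$, the swap trick gives
\be
\mathcal{P}_t=\E_{U_t}\tr\!\big[\, U_t^{\otimes 2}(M\otimes M)U_t^{\dagger\otimes 2}\,(S_{AA'}\otimes I_{BB'}) \,\big],
\ee
and the standard two-fold twirl identity $\E_U[U^{\otimes 2}X U^{\dagger\otimes 2}]=\alpha\,I+\beta\,(S_{AA'}\otimes S_{BB'})$---with $\alpha,\beta$ affine in $\tr X$ and $\tr[(S_{AA'}\otimes S_{BB'})X]$, and $d=d_Ad_B$---reduces everything to four elementary traces: $\tr M=1$, $\tr[(S_{AA'}\otimes S_{BB'})(M\otimes M)]=\tr(M^2)=\tr(\rho_{t-1}^2)=\mathcal{P}_{t-1}$, $\tr(S_{AA'}\otimes I_{BB'})=d_Ad_B^2$, and $\tr[(S_{AA'}\otimes S_{BB'})(S_{AA'}\otimes I_{BB'})]=d_A^2d_B$. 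Collecting terms yields a closed affine recursion $\mathcal{P}_t=a+b\,\mathcal{P}_{t-1}$ with $a=d_A(d_B^2-1)/(d_A^2d_B^2-1)$ and $b=d_B(d_A^2-1)/(d_A^2d_B^2-1)$; it closes because $\tr\rho_{t-1}=1$ holds deterministically, so no higher moment of $\rho_{t-1}$ enters.

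Finally I would solve the recursion: $\mathcal{P}_t=b^t+\frac{a}{1-b}(1-b^t)$ with $\mathcal{P}_0=1$. A short algebraic simplification identifies the fixed point $\frac{a}{1-b}=\frac{d_A(d_B+1)}{d_A^2d_B+1}=\frac{d_A^2(d_B+1)}{d_A^2d_B+1}\calF_\text{Haar}^{(1)}$, the amplitude $1-\frac{a}{1-b}=\frac{(d_A-1)(d_Ad_B-1)}{d_A^2d_B+1}$, and the common ratio $b=\frac{(d_A^2-1)d_B}{d_A^2d_B^2-1}$---which is exactly \eqref{F1_exact}. For the $d_A\to\infty$ limit I would expand each piece in $1/d_A$ at fixed $t$: $\frac{a}{1-b}=(1+1/d_B)\calF_\text{Haar}^{(1)}+\calO(1/d_A^3)$, $1-\frac{a}{1-b}=1-(1+1/d_B)\calF_\text{Haar}^{(1)}+\calO(1/d_A^3)$, and $b^t=d_B^{-t}+\calO(1/d_A^2)$; substituting and recombining the cross term between the $\calO(\calF_\text{Haar}^{(1)})$ amplitude correction and the leading $d_B^{-t}$ produces $\calF^{(1)}(t)=d_B^{-t}+(1+1/d_B)(1-d_B^{-t})\calF_\text{Haar}^{(1)}+\calO(1/d_A^2)$. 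I expect the only real difficulty to be bookkeeping---keeping straight which swap acts on which replica pair in passing to the four traces, and organizing the $1/d_A$ expansion so the amplitude correction and the fixed point recombine into the clean $(1-d_B^{-t})$ prefactor; the remaining steps are routine Weingarten algebra and a linear-recursion solve.
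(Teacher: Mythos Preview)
Your proposal is correct and yields exactly the same recursion and closed form as the paper, but the route is organized differently. The paper works in the doubled-Hilbert-space/permutation-spin formalism: it expands the bath, writes $\calF^{(1)}(t)$ as ${}_A(0|\,(\prod_i {}_{B_i}(0|)\,\prod_i \E[(U_i\otimes U_i^*)^{\otimes 2}]\,|\hat e)_A\,\bigotimes_i|Q_2)_{B_i}$, and then shows that each twirl-plus-bath-contraction acts on the two-dimensional span $\{|\hat e)_A,|\hat\tau_1)_A\}$ as the lower-triangular transfer matrix
\[
\calA=\frac{1}{d_A^2d_B^2-1}\begin{pmatrix} d_B(d_A^2-1) & 0\\ d_A(d_B^2-1) & d_A^2d_B^2-1\end{pmatrix},
\]
so that $\calF^{(1)}(t)$ is read off from $\calA^t$ applied to the initial vector $(1,0)^T$ and contracted with the left boundary. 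Your approach instead exploits the identification $\calF^{(1)}=\tr(\rho_t^2)$ together with the closed CPTP recursion $\rho_t=\tr_B[U_t(\rho_{t-1}\otimes\ketbra{0}{0}_B)U_t^\dagger]$; because the two-copy twirl depends on $\rho_{t-1}$ only through $\tr\rho_{t-1}=1$ and $\tr\rho_{t-1}^2$, you land immediately on a scalar affine recursion $\mathcal P_t=a+b\,\mathcal P_{t-1}$ with the same $a,b$ appearing in $\calA$. The underlying Weingarten algebra is identical; what you gain is economy for $K=1$ (no permutation-spin bookkeeping, no expanded bath), while what the paper's formulation buys is that it is the $K=1$ instance of the statistical-model/transfer-matrix machinery it then reuses for the higher-$K$ analysis.
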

The results are derived by representing the typical case of $\{U_t\}$ with Haar expectation values due to self-averaging (see %Appendix.~
Appendix \ref{app:numerical_details}), in the same spirit of Refs.~\cite{cotler2023emergent,ippoliti2023dynamical}.

We numerically (red dots) and experimentally (blue circles) verify Eq.~\eqref{F1_exact} (light red solid) in Fig.~\ref{fig:Fevo}a.  
The experiment on IBM quantum devices, with $N_A=2, N_B=2$ and up to $t=8$ steps, demonstrates a good agreement with the theory up to some additional decay from decoherence noise.

For higher-order frame potentials, we have the following result.
\QZ{Under the commonly-adopted approximation of leading-order Weingarten function~\cite{chan2024projected, ippoliti2022solvable} (see Methods), the typical $K$-th frame potential $\calF^{(K)}(t)$ of the {\em projected ensemble} $\calE_t$ from HDT is lower bounded \QZ{by},
\be
\calF^{(K)}(t) \gtrsim \frac{1}{d_B^t} + \left[1 + \frac{2^K-1}{d_B} + \calO\left(\frac{{\rm exp}(K)}{d_A d_B}\right)\right]\calF_{\rm Haar}^{(K)},
\label{high_order_F}
\ee
in the thermodynamic limit of the system ($d_A \to \infty$).
Here `$\gtrsim$' indicates the approximation involved. To obtain the above results, we map \QZ{the} higher-order frame potential in HDT to the partition function of a statistical model with a chain structure (see Methods). By comparing with Theorem~\ref{first_order_F}, one can immediately realize that Ineq.~\eqref{high_order_F} shares the same decay rate as the exact result of $K=1$. At the same time, the approximation of leading-order Weingarten function requires at least both $d_A\gg1$ and $d_B\gg1$.
}
In Fig.~\ref{fig:Fevo}a, the \QZ{asymptotic lower bound} in \QZ{Ineq.~\eqref{high_order_F}} (red dashed lines) \QZ{surprisingly} aligns with the numerical results (dots) and thus captures the convergence dynamics of higher-order frame potentials. \QZ{The deviation of this lower bound from numerical solution \QZ{is} due to both the approximation error and finite-size effect. Further numerical results in Appendix \ref{app:numerical_details} indicate that Ineq.~\eqref{high_order_F} may indeed be the leading order at $d_A\gg1$ and $d_B\gg1$.}
\QZ{The requirement for Ineq.~\eqref{high_order_F} on the unitary design level is unclear due to the replica technique involved (see Methods), similar to recent studies of frame potential in DT~\cite{ippoliti2022solvable, chan2024projected}.}

We next turn to an efficiently-measurable signature of randomness adopted in the study of quantum device characterization and benchmarking~\cite{boixo2018characterizing, arute2019quantum}---the \emph{probability distribution of probability} (PoP), which describes the distribution of the state \QZ{overlaps} between the ensemble and an arbitrary fixed quantum states $\ket{\phi}$. For \QZ{the} Haar-random ensemble, the PoP is known to follow the Porter-Thomas (PT) distribution~\cite{boixo2018characterizing,porter1956fluctuations},
$
P_{\ket{\psi} \in \text{Haar}}\left[\abs{\braket{\phi|\psi}}^2 = p\right]\propto (1-p)^{d_A - 2}.
$
We experimentally measure the PoP of the \emph{projected ensemble} \QZ{generated via} HDT \QZ{across} various time steps in Fig.~\ref{fig:Fevo}b-d, and compare to \QZ{the} PT distribution (black dashed line). 
At \QZ{the} early stage $t=1$, the measured PoP (light blue) differs significantly from the PT distribution (black dashed) as expected due to limited measurements. \QZ{However}, with iterative evolution and measurements, the histogram converges towards the PT distribution, consistent with the numerical simulation with hardware noise (purple dashed).  
Across subplots b, through c, to d, the agreement with Haar theory improves \QZ{as} $N_A$ increases from $2$, through $4$, to $5$ \QZ{because the finite-system-size} effect \QZ{is} mitigated, as \QZ{indicated by} the noiseless simulation (orange dashed). Compared with previous conventional DT experiment, which is limited to randomizing a data system of dimension $d_A=5$ with a total of $10$ atoms~\cite{choi2023preparing}, the drastic resource reduction of HDT allows the extension to $d_A=32$ dimensional data system ($N_A=5$) with a total of only $8$ qubits.

\section{Space-time trade-off}
Here, we analyze the space-time trade-off in \QZ{detail}, assuming that each unitary $U_t$ is implemented via a gate-based local quantum circuit. We will also consider the \QZ{thermodynamic limit of the data system at $d_A \to \infty$}, although this approximation is not essential. From Theorem~\ref{first_order_F} for $K=1$ and \QZ{Ineq.~\eqref{high_order_F}} for $K\ge 2$, HDT only needs a constant number of \QZ{$N_B \ge K + \log_2(1/\epsilon) + \log_2\left(1-2^{-K}\right)$} ancilla qubits to generate an $\epsilon$-approximate $K$-design state ensemble, given \QZ{a sufficient} number of steps. 
Assuming that \QZ{$N_B \ge K + \log_2(1/\epsilon)$}, due to the fast exponential convergence of $\propto d_B^{-t}$ regardless of order $K$, the necessary number of time steps for convergence in $K$-th order becomes 
\be 
\QZ{N_B \times T \ge K N_A - \log_2(K!) + \log_2\left(1/\epsilon\right) + \calO\left(1/2^{N_A}\right)},
\label{NB_T_tradeoff}
\ee  
leading to the trade-off between space --- the number of ancilla $N_B$ --- and time step $T$, hence the name \QZ{{\em holographic deep thermalization}}. To reach the same performance on $N_A$ qubits, the space-time product $ N_B\times T$ is roughly constant. Note that conventional DT corresponds to $T=1$, which requires an ancilla size \QZ{$N_B\ge KN_A+ \log_2(1/\epsilon)$}, compared to the minimum constant number ancillae \QZ{$N_B \ge K + \log_2(1/\epsilon)$} in the HDT. 

The space-time trade-off can be most easily verified by data collapse when introducing a rescaled time $
\tau 
\QZ{=} t N_B
$ and the rescaled frame potential $f^{(K)}(t) = \calF^{(K)}(t)/\calF^{(K)}(\infty)$. Via the rescaling, the typical frame potential \QZ{asymptotic lower bounds} of HDT \QZ{(Ineq.~\eqref{high_order_F})} with different ancilla size $N_B$ collapse to $ %\E_\text{Haar}
\QZ{f^{(K)}(\tau) \ge 1 +  2^{-\tau}\left(1-\calO(1/2^{N_B})\right)/\calF_\text{Haar}^{(K)}}$,  indicated by the black dashed curves in Fig.~\ref{fig:qVol}ab. Indeed, our numerical results (Fig.~\ref{fig:qVol}ab) for various choices of $N_B$ all collapse to the theory prediction with slight deviations due to limited $d_A$, verifying the space-time trade-off. With increasing ancilla size $N_B$ and time steps $t$, we see an alignment in the monotonic decrease of relative frame potential deviation $\delta^{(1)}(T)$ between Haar theory and experiment in Fig.~\ref{fig:qVol}cd. The rescaled frame potential of experiments (blue triangles in Fig.~\ref{fig:qVol}a) also approximately collapses, confirming the universal scaling law.

%TC:ignore

\begin{figure}[t]
    \centering
    \includegraphics[width=0.45\textwidth]{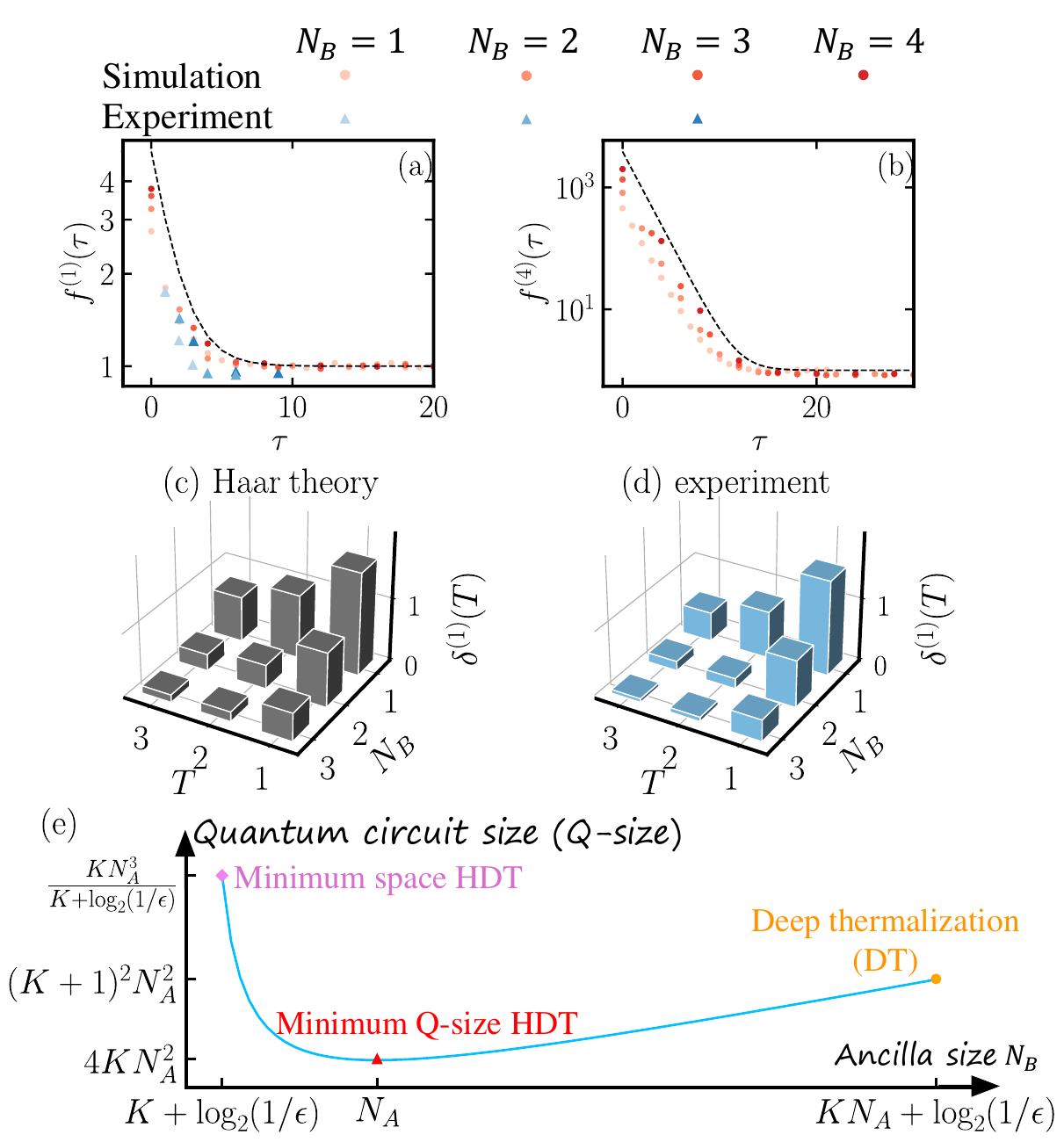}
    \caption{
    \QZ{\bf Space-time trade-off in HDT.} The rescaled frame potential dynamics for (a) $K=1$, $N_A=2$ and (b) $K=4$, $N_A=4$ versus a rescaled time, indicating a universal scaling over different ancilla \QZ{sizes} $N_B$ (light to dark red dots). Black dashed lines represent the theoretical asymptotic universal scaling as $f^{(K)}(\tau) \QZ{=} 1 + 2^{-\tau}/\calF_\text{Haar}^{(K)}$. In the central panel, we plot the relative deviation of frame potential $\delta^{(1)}(T) = \calF^{(1)}(T)/\calF^{(1)}_\text{Haar}(T) - 1$ versus ancilla size $N_B$ and evolution steps $T$ for (c) Haar theory and (d) experimental results. The decrease of the relative deviation $\delta^{(1)}(T)$ reveals the space-time trade-off as $N_B T \gtrsim \rm const$. The corresponding rescaled $f^{(1)}(\tau)$ \QZ{from} experiments are shown by light to dark blue triangles in (a) for $N_B = 1, 2, 3$.
    (e) Quantum circuit size
    of HDT in generating $\epsilon$-approximate $K$-design versus ancilla size for large system $N_A \gg K \log_2(1/\epsilon)$ assuming $K< \log_2(1/\epsilon)$. \QZ{The} orange circle represents conventional DT, which also corresponds to $T=1$ in HDT. \QZ{The} red triangle and violet diamond represent Q-size in HDT with minimum Q-size and minimum space, \QZ{respectively}. The labels on $y$-axis represent the scaling of Q-size from Eq.~\eqref{eq:qvol} \QZ{of Appendix \ref{app:qVol}}.}
    \label{fig:qVol}
\end{figure}
%TC:endignore

In a gate-based realization of HDT, each unitary $U_t$ requires at least $\propto (N_A+N_B)$ layers of quantum circuits and therefore $\propto (N_A+N_B)^2$ gates in order to represent a fixed typical unitary from \QZ{the approximate Haar random ensemble~\cite{brandao2016local, harrow2023approximate}}. Thus, a more comprehensive way of quantifying the space-time resource is via the quantum circuit size --- the total number of gates --- $\mbox{Q-size}\propto (N_A+N_B)^2 T$. From Ineq.~\eqref{NB_T_tradeoff}, we can perform asymptotic analyses of Q-size, as detailed in %Appendix~
Appendix \ref{app:qVol}. Fig.~\ref{fig:qVol}e summarizes the trade-off for large systems $N_A \gg K \log_2(1/\epsilon)$ and the common case of $K < \log_2(1/\epsilon)$: the minimum quantum circuit size $\mbox{Q-size}\propto 4KN_A^2$ is achieved at $N_B=N_A$, as indicated by the red triangle. Meanwhile, in conventional DT corresponding to $T=1$ with $N_B = KN_A + \log_2(1/\epsilon)$, we have $\mbox{Q-size} \propto (K+1)^2 N_A^2$, as indicated by the orange dot. For the minimum space (ancilla) case of $N_B \sim K + \log_2(1/\epsilon)$, the quantum circuit size grows as $KN_A^3/(K + \log_2(1/\epsilon))$.

\section{Security of quantum randomness}
An important property of true randomness is that it cannot be predicted prior to generation. In the classical world, prediction of random \QZ{numbers} means that a sequence of classical \QZ{numbers} $\bm X$ is chosen and fixed prior to the generation of random \QZ{numbers} $\bm Y$. In multiple experiments, if the correlation measured by mutual information $I(\bm X:\bm Y)$ is large, then $\bm X$ can \QZ{be used to} infer \QZ{the} random number $\bm Y$. In a lottery scenario, this means one can cheat to win the prize. In the quantum case, we can generalize such a notion by introducing a reference quantum system $R$. Under the condition that no quantum operation can be performed on $R$ after \QZ{the} random number generation process starts, we regard the quantum randomness as insecure if the quantum mutual information between $R$ and the final random quantum system $A_{\rm out}$ is large. For example, if $RA_{\rm out}$ is eventually in the maximally-entangled Bell \QZ{state}, then any measurement on $A_{\rm out}$ and $R$ agrees and the `quantum lottery' is hacked.

%TC:ignore
\begin{figure}[t]
    \centering
    \includegraphics[width=0.48\textwidth]{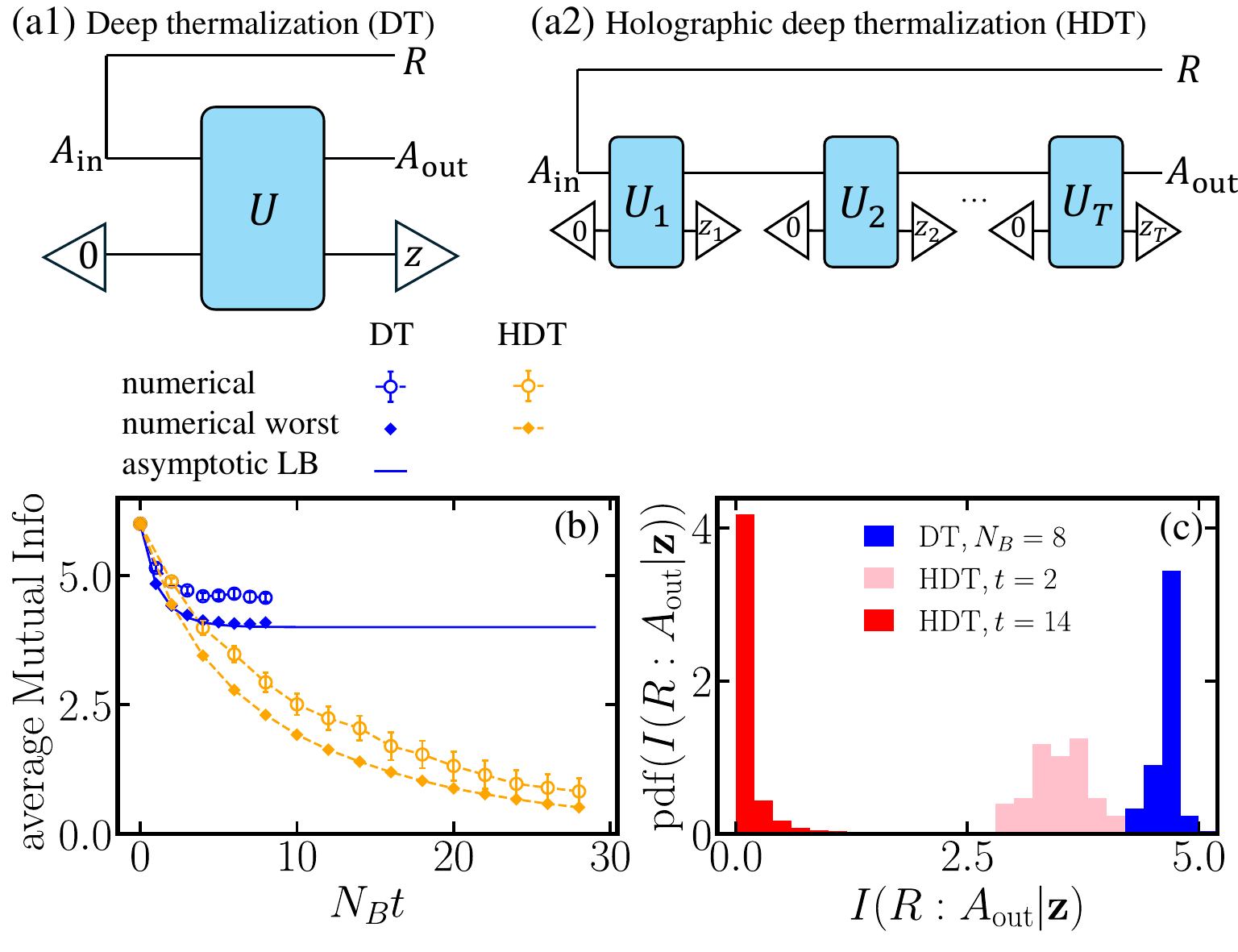}
    \caption{{\bf Dynamics of average mutual information.} 
    \QZ{(a1) and (a2) illustrate the conceptual plot for trajectory-average mutual information in DT and HDT, respectively.}
    The input data system is initially maximally entangled with a reference system $R$. In (b), we plot the dynamics of trajectory-average mutual information (dots) and its R\'enyi lower bound (diamonds) versus $N_B t$ \QZ{for a typical sequence of \QZ{unitaries}}. \QZ{The errorbars in circles represent the fluctuation of mutual information $I(R:A_{\rm out}|\bfz)$ over different conditional \QZ{states}.} The blue solid line is the theory of Eq.~\eqref{eq:avgMI} in Lemma~\ref{lemma:security_DT}. 
    \QZ{In (c), we show the probability density function (pdf) of $I(R:A_{\rm out}|\bfz)$ for DT with large ancilla $N_B = 8$ (blue), HDT at early time $t=2$ (pink) and HDT at late time $t=14$ (red).}
    In DT, we enlarge ancilla system $N_B$ \QZ{while} holding $t=1$, and in HDT, we fix $N_B=2$ but extend evolution steps $t$. In both cases, the data system consists of $N_A = 3$ qubits. \QZ{In the case of DT, $N_Bt$ is limited due to limited numerical resource.}}
    \label{fig:MI}
\end{figure}
%TC:endignore

We can consider a particular attack based on entanglement engineering. As shown in Fig.~\ref{fig:MI}a, the attack engineers the reference $R$ and the initial data system $A_\text{in}$ in the maximally entangled Bell state. In a DT process (shown in a1), the data system interacts with the large ancilla $B$; while in the HDT process (shown in a2), the data system repetitively interacts with the ancilla via measurement and reset. We evaluate the expected quantum mutual information averaged over the measurement results ${\bfz} = (z_1, \dots, z_T)$, \QZ{$\E_{\bfz} I(R:A_{\rm out}|{\bfz})$, where $I(R:A_{\rm out}|\bfz)$ is the quantum mutual information between $R$ and $A_{\rm out}$ in \QZ{the} conditional state $\ket{\psi_z}_{RA_{\rm out}}$~\cite{gullans2020dynamical}. Note that quantum mutual information of the conditional state is different from the conventional quantum conditional mutual information~\cite{fawzi2015quantum, kato2019quantum, kuwahara2020clustering}. 
One can further show that $\E_{\bfz} I(R:A_{\rm out}|{\bfz}) = \E_{\bfz} 2S(\rho_{A_{\rm out}|\bfz})$ where $S(\cdot)$ is the von Neumann entropy and $\rho_{A_{\rm out}|\bfz} = \tr_{R}(\ketbra{\psi_\bfz}{\psi_\bfz}_{RA_{\rm out}})$ is the reduced state on subsystem $A_{\rm out}$ of conditional output state $\ket{\psi_z}_{R A_{\rm out}}$. } 
When further averaged over the Haar ensemble to represent the typical case \QZ{of unitary implementation}, we obtain the following results, with \QZ{proofs} in Appendix \ref{app:mutual_info}. 
\begin{lemma}
\label{lemma:security_DT}
    The typical quantum mutual information averaged over the measurement results in deep thermalization only \QZ{decreases} by at most $2$, regardless of ancilla size. More precisely, \QZ{in the thermodynamic limit of $d_A \to \infty$, it is} lower bounded by,
    \be\label{eq:avgMI}
    \begin{aligned}        
        &
        \E_{\bfz}[I(R:A_\textup{out}|{\bfz})]\ge %\E_\textup{Haar}
        \E_{\bfz} [2S_2(\QZ{\rho_{A_{\rm out}|\bfz}})]\\
        &\ge \QZ{2(N_A-1) - 2\log_2\left(1- \frac{1}{2d_B} +\calO\left(\frac{1}{2d_A^2}\right)\right)},
    \end{aligned}
    \ee
    where $S_2(\QZ{\rho_{A_{\rm out}|\bfz}})$ is the R\'enyi-2 entropy of reduced state $\tr_{\QZ{R}}(\ketbra{\psi_{\bfz}}{\psi_{\bfz}}_{RA_{\rm out}})$ conditioned on measurement result $\bfz$. 
\end{lemma}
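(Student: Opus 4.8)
The plan is to prove the chain of inequalities in Ineq.~\eqref{eq:avgMI} from right to left, reducing everything to a Haar-averaged computation of a purity. First I would establish the identity $\E_{\bfz}[I(R:A_\textup{out}|\bfz)] = \E_{\bfz}[2S(\rho_{A_\textup{out}|\bfz})]$ stated in the text: since the initial state $\ket{\Psi}_{R A_\textup{in}}$ is maximally entangled and the HDT/DT dynamics act only on $A$ and the ancillae (never on $R$), each conditional output $\ket{\psi_\bfz}_{R A_\textup{out}}$ is a bipartite pure state, so $I(R:A_\textup{out}|\bfz) = S(\rho_{R|\bfz}) + S(\rho_{A_\textup{out}|\bfz}) - S(\ket{\psi_\bfz}) = 2S(\rho_{A_\textup{out}|\bfz})$. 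The first inequality $S(\rho) \ge S_2(\rho)$ is the standard monotonicity of Rényi entropies in the order, applied inside the expectation.

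The substantive step is the second inequality, lower bounding $\E_\bfz[2S_2(\rho_{A_\textup{out}|\bfz})]$. I would write $S_2(\rho_{A_\textup{out}|\bfz}) = -\log_2 \tr(\rho_{A_\textup{out}|\bfz}^2)$ and use Jensen's inequality (concavity of $-\log_2$) to pull the expectation inside: $\E_\bfz[2S_2(\rho_{A_\textup{out}|\bfz})] \ge -2\log_2 \E_\bfz[\tr(\rho_{A_\textup{out}|\bfz}^2)]$. The remaining task is to compute, or upper bound, the $\bfz$-averaged purity $\E_\bfz \tr(\rho_{A_\textup{out}|\bfz}^2)$, and then average over the Haar ensemble of unitaries to get the typical value. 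Crucially, $\E_\bfz[\tr(\rho_{A_\textup{out}|\bfz}^2)] = \sum_\bfz p_\bfz \tr(\rho_{A_\textup{out}|\bfz}^2)$, which is exactly the (unnormalized) purity of the conditional state weighted by the Born probabilities — this is structurally the same second-moment quantity as the frame potential $\calF^{(1)}$ but now on the enlarged system $RA$, so I expect to reuse the machinery of Theorem~\ref{first_order_F}. Concretely, for DT ($T=1$) this is a single Haar average over $U$ acting on $A_\textup{in}B$ with the Bell state on $RA_\textup{in}$ fixed; I would compute $\E_U \sum_z \tr[(\tr_{RB}|\phi_z\rangle\langle\phi_z|)^2]$ using the Weingarten calculus / the two-replica swap-operator trick, exactly as in the frame-potential derivation, and find that it equals $\calF^{(1)}$-type expression on a system of effective data dimension $d_A$ but with the Bell-state constraint contributing a factor that, in the $d_A \to \infty$ limit, gives $\E_\bfz \tr(\rho_{A_\textup{out}|\bfz}^2) = 2^{-(N_A-1)}(1 - \tfrac{1}{2d_B} + \calO(1/d_A^2))$ — note the $N_A - 1$ rather than $N_A$ comes from the maximal entanglement with $R$, which caps the achievable purity. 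Substituting this into the Jensen bound and simplifying $-2\log_2$ yields precisely Ineq.~\eqref{eq:avgMI}.

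For HDT with general $T$, I would argue that iterating the measure-and-reset map composes the single-step channel $T$ times, but the key observation is that the $R$ system remains untouched throughout, so the reduced state $\rho_{A_\textup{out}|\bfz}$ always has rank at most $d_A/2 \cdot \ldots$ — more carefully, the entanglement across $R:A$ can only be redistributed, never created or destroyed beyond the initial $N_A$ bits minus what leaks to the (reset) ancilla, and I would show the leakage per step is bounded, so that after any number of steps $\E_\bfz[2S_2]$ stays above $2(N_A-1) - \calO(1/d_B)$. This monotone/bounded-leakage argument is the main obstacle: one must verify carefully that the resetting of $B$ does not progressively decouple $R$ from $A$ (which would lower the mutual information and break security), i.e., that information about $R$ cannot be flushed out through the measured-and-discarded ancilla. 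I expect this follows because any correlation between $R$ and $B$ just before measurement is purely classical-quantum of a form that, after tracing out the measured $B$, leaves $R$ still maximally mixed and still purifying $A_\textup{out}$; making this precise — likely by tracking that $\rho_{R|\bfz} = \mathbb{I}_R/d_R$ exactly for every $\bfz$, since $R$ is dynamically trivial — closes the gap and shows the bound is $T$-independent, which is the whole point of the lemma (DT's mutual information cannot be reduced by adding ancillas, and HDT's cannot be reduced by adding time steps).
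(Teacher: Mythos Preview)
Your first two paragraphs are essentially the paper's approach: reduce to the Haar-averaged, measurement-averaged purity via purity of the bipartite state, R\'enyi monotonicity, and Jensen, then compute $\E_U\E_\bfz\tr(\rho_{A_{\rm out}|\bfz}^2)$ by Weingarten calculus. One technical point you gloss over: because $\E_\bfz\tr(\rho_{A_{\rm out}|\bfz}^2)=\sum_\bfz p_\bfz^{-1}\tr(\tilde\rho_{A_{\rm out}|\bfz}^2)$ carries a \emph{negative} power of $p_\bfz$, this is \emph{not} structurally the $\calF^{(1)}$ calculation (which has no denominators) but rather requires the replica trick, introducing $\gamma^{(n)}=\sum_\bfz p_\bfz^{\,n}\tr(\tilde\rho^{\otimes 2}\hat\tau_1)$ and taking $n\to -1$ after the $(n{+}2)$-copy Haar average. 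Your quoted expression $\sum_z\tr[(\tr_{RB}|\phi_z\rangle\langle\phi_z|)^2]$ drops this normalization and would give the wrong answer.

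Your third paragraph is where the proposal goes off the rails. First, the lemma is about DT only; the HDT discussion is not part of proving it. Second, and more seriously, you have the security direction inverted: in this paper's framework, \emph{large} $I(R{:}A_{\rm out})$ means the attacker $R$ remains correlated with the output and randomness is \emph{insecure}; the point of Lemma~\ref{lemma:security_DT} is precisely that DT is vulnerable because the mutual information stays high no matter how large $d_B$ is. The paper then shows (numerically, Fig.~\ref{fig:MI}) that HDT's mutual information \emph{does} decay to zero with $t$---exactly the ``progressive decoupling of $R$ from $A$'' you are trying to rule out. Your proposed mechanism for a $T$-independent lower bound, that ``$\rho_{R|\bfz}=\mathbb{I}_R/d_R$ exactly for every $\bfz$ since $R$ is dynamically trivial,'' is also false: the \emph{unconditional} $\rho_R$ is maximally mixed because $R$ is untouched, but conditioning on the measurement outcome $\bfz$ steers $R$ to a $\bfz$-dependent state whose purity grows as information leaks through the measured ancillae---which is exactly how HDT erases the $R$--$A$ correlation step by step.
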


In Fig.~\ref{fig:MI}b, the numerical results of average mutual information $ %\E_\textup{Haar}
\E_{\bfz}[I(R:A_\textup{out}|{\bfz})]$ (dots) decay synchronously with the R\'enyi lower bound $ %\E_\textup{Haar}
\E_{\bfz} [2S_2(\QZ{\rho_{R|\bfz}})]$ (diamonds) in DT (blue) and HDT (orange), indicating a well-behaved numerically-evaluated lower bound. 
Moreover, the numerical results of DT lower bound (blue diamonds) agree with the analytical lower bound (blue solid line) in Ineq.~\eqref{eq:avgMI}, which only decays by less than 2. On the contrary, HDT allows the continuous decrease of mutual information (orange dots and diamonds) towards zero, as the number of time steps increases, converging towards secure quantum randomness independent of the reference system. \QZ{In both DT and HDT, the fluctuation (errorbar) of mutual information over different conditional state is relatively small, indicating that the average mutual information can represent the typical case being sampled. Moreover, Fig.~\ref{fig:MI}c depicts the distribution of the mutual information for \QZ{the} conditional state $\ket{\psi_\bfz}_{RA_{\rm out}}$ of the \emph{projected ensemble} generated by DT and HDT. With increasing number of effective ancilla qubits $N_B t$ in both DT and HDT, the mutual information for \QZ{the} conditional state in \QZ{the} \emph{projected ensemble} becomes highly concentrated (\QZ{red and blue}), therefore the mutual information for a typical sampled state is well represented by the defined average quantum mutual information $\E_z I(R:A_{\rm out}|\bfz)$. In fact, we expect that the probability to sample any non-typical conditional state would be exponentially small $\sim 2^{-N_B t}$.}

\section{Discussion}
In conclusion, we have proposed HDT for \QZ{the generation of} genuine quantum random states, and have shown the convergence dynamics of output states \QZ{toward the} Haar ensemble, and its robustness against \QZ{the} entanglement attack in contrast to conventional DT. Moreover, we have shown the \QZ{emergence of} space-time trade-off in HDT and identified the optimal configuration meeting \QZ{the} minimum resource requirement.
The proposed HDT protocol can be further enhanced by generative quantum machine learning. 
As we detail in the Methods, parameterizing and training the unitaries in each step enables a constant-factor \QZ{improvement in} convergence. Our model can also be regarded as a generalized monitored circuit with fixed measurement rate and location, which provides insights \QZ{into} the understanding of complexity growth in the monitored circuit beyond entanglement growth transition~\cite{skinner2019measurement}.

We point out some future directions. The \QZ{exact solution of the} higher-order frame potential \QZ{remains an open problem for future theoretical investigation}. \QZ{Moreover, in order to generate $K$-design states, the required level of design in the unitary of HDT is an important open problem, to enable practical deep thermalization. The approaches in Ref.~\cite{cotler2023emergent} may be beneficial towards this goal.} It is also an open direction to explore \QZ{a} Hamiltonian-dynamics\QZ{-based} version of HDT, where \QZ{the effective temperature of the quantum system can be either infinite or finite~\cite{mark2024maximum}, and the system Hamiltonian \QZ{may exhibit} constraints} (e.g., symmetry\QZ{~\cite{varikuti2024unraveling, chang2024deep}} or quantum many-body scar~\cite{bhore2023deep}).

Another intriguing direction is to establish the rigorous notion of security in quantum random state generation. While quantum random number generation \QZ{is primarily concerned with} the generation of classical random numbers from quantum \QZ{devices}~\cite{vazirani2012certifiable,cao2016source,drahi2020certified}, the security of \QZ{the} random quantum state generation \QZ{has been} less explored. The study of computational pseudo-random quantum \QZ{states focuses} on the closeness to Haar~\cite{ji2018pseudorandom,ananth2022cryptography}, while in our case we \QZ{focus on} the device-level attack. The information security paradigm in DT and HDT can also be explored as an extension of \QZ{the} Hayden-Preskill protocol~\cite{hayden2007black} to study the intersection between quantum measurements and \QZ{the} black hole information paradox.

\section{Methods}

\subsection{Preliminary}
Statistical properties of a quantum state ensemble $\calE(\Psi)=\{\ket{\psi_z}, p_z\}$ are characterized by its $K$-th moment ($K\ge 1$), which is captured by the $K$-th moment operator of \QZ{the} ensemble~\cite{roberts2017chaos, cotler2023emergent, ho2022exact, ippoliti2023dynamical},
\begin{align}
    \rho^{(K)} = \sum_z p_z\left(\ketbra{\psi_z}{\psi_z}\right)^{\otimes K},
    \label{def_rho_k}
\end{align}
Specifically, for \QZ{the} Haar ensemble in Hilbert space of dimension $d_A$, the moment operator 
\begin{align}
    \rho_\text{Haar}^{(K)}    &= \frac{\sum_{\pi \in S_K} \hat{\pi}}{\prod_{i=0}^{K-1} (d_A+i)} =\Pi_K\binom{K+d_A-1}{d_A-1}^{-1},
    \label{eq:haar_mmt}
\end{align}
where $S_K$ is the symmetric group on a $K$-element set, $\hat{\pi}$ is the operator representation of \QZ{a} permutation on $\calH^{\otimes K}$, and $\Pi_K = \sum_{\pi \in S_K} \hat{\pi}/K!$ is the projector onto the symmetric subspace.

In the main text, we adopted the frame potential to quantify the closeness to \QZ{the} Haar state ensemble, which \QZ{is} also \QZ{equal to} the purity of the moment operator, 
$
\calF^{(K)}=\tr[\rho^{(K)}{}^2].
$
One can show that the relative deviation in frame potential is equivalent to the normalized moment distance with $\|\cdot \|_2$ denoting Schatten-$2$ norm \QZ{(see Appendix \ref{app:preliminary} for more details)}
\be 
\sqrt{\delta^{(K)}} = \norm{\rho^{(K)}-\rho_\text{Haar}^{(K)}}_2/\norm{\rho_\text{Haar}^{(K)}}_2.
\ee 
\QZ{As a side note, Refs.~\cite{ho2022exact,cotler2023emergent} adopt \QZ{the} one-norm to define the deviation; while other works adopts the two-norm~\cite{ippoliti2022solvable,ippoliti2023dynamical}.}

\subsection{Statistical model mapping and proof sketch}
\QZ{In this section, we detail the mapping from the frame potential in HDT to a statistical model and provide a proof sketch \QZ{of} the Theorem~\ref{first_order_F} and Ineq.~\eqref{high_order_F}. The full derivations can be found in Appendix \ref{app:F_dynamics}.}

\QZ{Utilizing the equivalent circuit of HDT with expanded bath in Fig.~\ref{fig:stat_model} (top), the post-measurement state conditioned on \QZ{the} measurement outcome $\bfz = (z_1, \dots, z_t)$ at \QZ{the} $t$-th step can be expressed as
\be
    \ket{\psi_\bfz} = {}_{\bm B}\braket{\bfz|U_t\cdots U_1|0}_{A\bm B}/\sqrt{p_{\bfz}} \equiv \ket{\tilde{\psi}_{\bfz}}/\sqrt{p_\bfz},
\ee
where $\bm B = B_1\cdots B_t$ \QZ{refers to} the \QZ{combined} system of expanded baths, and $p_\bfz = \braket{\tilde\psi_\bfz|\tilde\psi_\bfz}$ is the corresponding probability. Here we \QZ{define} $\ket{\tilde{\psi}_\bfz}$ as the unnormalized conditional state for convenience. In this way, the $K$-th frame potential of \emph{projected ensemble} $\calE_t = \{\ket{\psi_\bfz}, p_\bfz\}$ defined in Eq.~\eqref{eq:fp_def} can be written as
\be
    \calF^{(K)}(t) = \sum_{\bfz, \bfz'} p_\bfz^{1-K} p_{\bfz'}^{1-K} |\braket{\tilde{\psi}_\bfz|\tilde{\psi}_{\bfz'}}|^{2K}.
    \label{eq:Fk_method}
\ee
To evaluate the Haar ensemble average result, we rely on the known Haar unitary twirling of \QZ{an} operator in \QZ{the} $n$-\QZ{fold tensor product of the} Hilbert space with dimension $d$~\cite{roberts2017chaos}
\be
    \int_{\rm Haar} dU U^{\otimes n} (\cdot)U^\dagger{}^{\otimes n} = \sum_{\sigma, \pi \in S_n} {\rm Wg}_d(\sigma^{-1}\pi,n) \hat{\sigma} \tr(\hat{\pi}^\dagger \cdot),
    \label{eq:haar_twirling}
\ee
where ${\rm Wg}_d(\sigma^{-1}\pi,n)$ is the Weingarten function, and $S_n$ is the permutation group of $n$ elements.
}

\begin{figure}[t]
    \centering
    \includegraphics[width=0.45\textwidth]{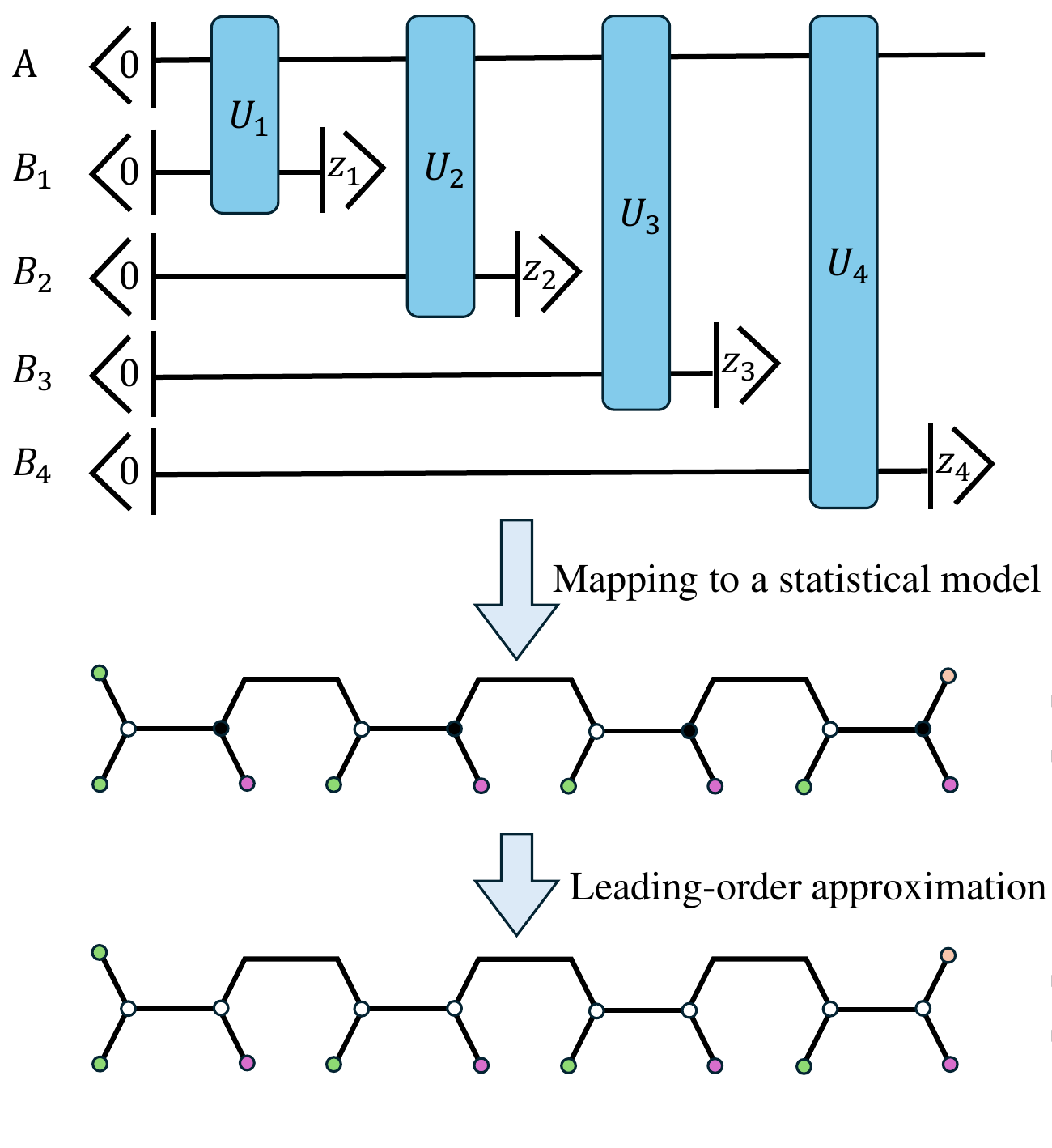}
    \caption{\QZ{{\bf Mapping from HDT to a spin statistical model.} The pseudo frame potential of the \emph{projected ensemble} can be formulated as the partition function of this statistical model (center). White and black vertices are independent spins representing permutations. Green, orange and pink vertices are boundary spins representing $\ketbra{\bm 0}{\bm 0}^{\otimes 2m}, \hat{e}, Q_B$ in Eq.~\eqref{eq:Fk_stat} correspondingly. Horizontal lines connecting `open' hexagons are Weingarten functions and other lines are inner products between operators. In the bottom, we show the statistical model under the leading-order approximation in Weingarten function in Eq.~\eqref{eq:W_approx} with white vertices only. Here we show an example of $t=4$.}}
    \label{fig:stat_model}
\end{figure}

\QZ{
{\em Proof sketch of Theorem~\ref{first_order_F}.---} The first order frame potential $\calF^{(1)}(t)$ only involves two copies of unitaries and its Hermitian adjoints (see Eq.~\eqref{eq:Fk_method}), we therefore only need to keep track of the unitary twirling results of permutations in $S_2$---identity and swap. \QZ{The evolution of} permutations in $S_2$ through the HDT \QZ{framework yields} Theorem~\ref{first_order_F}.
}

\QZ{For higher order frame potential $\calF^{(K)}(t)$ with $K > 1$, since Eq.~\eqref{eq:Fk_method} includes non-positive exponents $p_\bfz^{1-K}, p_{\bfz'}^{1-K}$,
we need to utilize the replica trick~\cite{ippoliti2022solvable, chan2024projected} and define a pseudo frame potential as
\begin{align}
    &\calF^{(n,K)} \equiv \sum_{\bfz, \bfz'} p_\bfz^{n} p_{\bfz'}^{n} |\braket{\tilde{\psi}_\bfz|\tilde{\psi}_{\bfz'}}|^{2K}\\
    &= \tr\left(\left(U_t \cdots U_1\right)^{\otimes 2m}\ketbra{\bm 0}{\bm 0}_{AB}^{\otimes 2m}\left(U_1^\dagger \cdots U_t^\dagger\right)^{\otimes 2m} Q_B\right),
    \label{eq:pseudo_fp_def}
\end{align}
where we define $m \equiv n + K$ for simplicity and $Q_B \equiv \sum_{\bm z, \bm z'}\ketbra{\bm z^{\otimes n}, \bm z'^{\otimes K}, \bm z^{\otimes K}, \bm z'^{\otimes n}}{\bm z^{\otimes n}, \bm z^{\otimes K}, \bm z'^{\otimes K}, \bm z'^{\otimes n}}$ is the boundary condition operator acting on the expanded bath system $\bm B$. 
Applying the Haar twirling result in Eq.~\eqref{eq:haar_twirling} to the sequence of unitaries from $U_t$ to $U_1$ in the pseudo frame potential above, we can express the pseudo frame potential in terms of a summation over permutations as
\begin{align}
    &\calF^{(n,K)}(t)=\nonumber\\
    &\sum_{\substack{\bm \sigma \in S_{2m}\\
    \bm \pi \in S_{2m}}} \left[\prod_{i=1}^t {\rm Wg}_d\left(\sigma_i^{-1}\pi_i,2m\right) \tr_B\left(\hat{\pi}_i^\dagger Q_B\right) \tr_B\left(\hat{\pi}_i^\dagger \ketbra{\bm 0}{\bm 0}\right)\right] \nonumber\\
    & \qquad \quad \times \tr_A\left(\hat{\pi}_t^\dagger \ketbra{\bm 0}{\bm 0}\right)  \left[\prod_{i=2}^t \tr_A\left(\hat{\pi}_i^\dagger \hat{\sigma}_{i-1}\right)\right]  \tr_A\left(\hat{\pi}_1^\dagger \hat{e}\right),
    \label{eq:Fk_stat}
\end{align}
where $\bm \sigma = (\sigma_1, \sigma_2, \dots, \sigma_t), \bm \pi = (\pi_1, \pi_2, \dots \pi_t)$ are \QZ{sequence of permutations, each independently drawn from the permutation group $S_{2m}$}. Here $\hat{e}$ is the identity operation on $2m$ replicas in Hilbert space with dimension $d = d_A d_B$.
}

\QZ{
Thanks to the studies of emergent statistical mechanics in random circuits~\cite{zhou2019emergent, hunter2019unitary}, we can interpret Eq.~\eqref{eq:Fk_stat} as a `spin' statistical model shown in Fig.~\ref{fig:stat_model} (center). 
The spin away from boundaries (black or white vertex) in this statistical model represents the independent permutations from unitary twirling.
The boundary spin represents the initial input state $\ketbra{0}{0}^{\otimes (2m)}$ (green), $\hat{e}$ (orange) and $Q_B$ (magenta) in Eq.~\eqref{eq:Fk_stat}. 
Black lines are associated weights in the coupling of spins. Specifically, the horizontal lines are Weingarten functions and other black inclined lines or polylines denote the inner product between operators. Therefore, the pseudo frame potential in  Eq.~\eqref{eq:Fk_stat} in HDT can be interpreted as the partition function of this spin statistical model, where the summation over spin configuration are indeed the summation over permutation group $S_{2m}$.
Due to the sequential circuit structure of HDT, the statistical model of HDT formulates a chain structure, instead of the hexagon lattice with spatial translational symmetry in the regular random circuits~\cite{zhou2019emergent, hunter2019unitary}.
}

\QZ{
It is overall challenging to derive the exact expression for the pseudo frame potential of HDT due to the iterative interactions between spins (permutations) in the statistical model presented in Eq.~\eqref{eq:Fk_stat}, which are absent in the case of DT. To obtain analytical understanding to the pseudo frame potential, we first adopt the leading-order approximation of Weingarten function in thermodynamic limit $d \to \infty$~\cite{chan2024projected, ippoliti2022solvable} as
\be
    {\rm Wg}_d(\sigma^{-1}\pi, 2m) = \frac{\delta_{\sigma, \pi}}{d^{2m}}\left(1 + o(1)\right),
    \label{eq:W_approx}
\ee
which simplifies the statistical model presented in Fig.~\ref{fig:stat_model} (bottom). Here every pair of white vertices connected by the direct short line are forced to be identical, while vertices from different pairs remain independent. Under this approximation, we have
\begin{align}
\calF^{(n,K)}(t) &\simeq \calF^{(n,K)}_{\rm approx}(t)
\\
&\equiv \sum_{\bm \sigma \in S_{2m}} \left[\prod_{i=1}^t {\rm Wg}_d\left(e,2m\right) \tr_B\left(\hat{\sigma}_i^\dagger Q\right)\right] \nonumber\\
        &\qquad \quad \times \left[\prod_{i=2}^t \tr_A\left(\hat{\sigma}_i^\dagger \hat{\sigma}_{i-1}\right)\right]  \tr_A\left(\hat{\sigma}_1^\dagger\right).
        \label{F_approx}
\end{align}
We can further analytically derive a lower bound as follows. 
\begin{lemma}
    \label{lem:F_pseudo_LB}
    In the thermodynamic limit of data system ($d_A \to \infty$),
    \begin{align}
        &\calF^{(n,K)}_{\rm approx}(t)
         \ge \frac{\left(\prod_{i=0}^{m-1}(d_A+i)\right)^{2t}}{d_A^{2t}}  \nonumber\\ &\quad \times\left[\frac{1}{d_B^t} + \left(1 + \frac{2^K-1}{d_B} + \calO\left(\frac{{\rm exp}(K)}{d_A d_B}\right)\right)\calF_{\rm Haar}^{(K)}\right].
    \end{align}
\end{lemma}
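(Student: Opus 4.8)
\textbf{Proof proposal for Lemma~\ref{lem:F_pseudo_LB}.}

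The plan is to evaluate the simplified partition function $\calF^{(n,K)}_{\rm approx}(t)$ in Eq.~\eqref{F_approx} by diagonalizing a transfer matrix acting on the permutation group $S_{2m}$, and then to lower-bound the resulting sum of eigenvalue-powers by retaining only the two dominant contributions. First I would make the prefactor explicit: since ${\rm Wg}_d(e,2m) = \left(\prod_{i=0}^{2m-1}(d+i)\right)^{-1}(1+o(1))$ with $d=d_A d_B$, and since each of the $t$ layers carries one such factor, I would pull out $\left({\rm Wg}_d(e,2m)\right)^t$ and absorb the $d_A$-dependence into the claimed prefactor $\left(\prod_{i=0}^{m-1}(d_A+i)\right)^{2t}/d_A^{2t}$, leaving a ``reduced'' partition function built purely from the traces $\tr_B(\hat\sigma^\dagger Q)$ and $\tr_A(\hat\sigma_i^\dagger \hat\sigma_{i-1})=d_A^{\#\text{cycles}(\sigma_i^{-1}\sigma_{i-1})}$. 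The key structural observation is that the chain is a one-dimensional transfer-matrix problem: define the $|S_{2m}|\times|S_{2m}|$ matrix $\calT_{\sigma\sigma'} = d_A^{\ell(\sigma^{-1}\sigma')}\,\tr_B(\hat\sigma'^\dagger Q)$ (with $\ell(\cdot)$ the number of cycles), a boundary row vector from $\tr_A(\hat\sigma_1^\dagger)=d_A^{\ell(\sigma_1)}$, and a boundary column from the $i=1$ term, so that $\calF^{(n,K)}_{\rm approx}(t)$ is a boundary-to-boundary contraction of $\calT^{t}$ up to the overall prefactor.

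The next step is to control the spectrum of (the $d_A\to\infty$ leading part of) $\calT$. Here I would use the standard fact that in the large-$d_A$ limit the permutation basis organizes by Cayley distance, and the two relevant ``collective'' sectors are the identity permutation $e$ and the fully-mixed/swap-type sector associated with the boundary operator $Q_B$; every other sector is suppressed by at least one extra power of $1/d_A$. Concretely, $\tr_B(\hat e^\dagger Q) = d_B^{\,m}$-type scaling picks out the channel that produces the $d_B^{-t}$ term, while the channel aligned with the Haar projector on $A$ produces the $\calF_{\rm Haar}^{(K)}$ term; computing $\tr_B(\hat\sigma^\dagger Q)$ for $\sigma$ in the relevant sectors is a finite combinatorial exercise (it only depends on how $\sigma$ matches the block structure $n,K,K,n$ defining $Q_B$). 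Diagonalizing the effective $2\times 2$ block gives two eigenvalues whose $t$-th powers are exactly $d_B^{-t}$ and $\left(1 + (2^K-1)/d_B + \calO(e^K/(d_Ad_B))\right)^{t}\calF_{\rm Haar}^{(K)}$-type factors; since $\calF^{(n,K)}_{\rm approx}(t)$ is a sum over \emph{all} sectors with manifestly nonnegative weights (each $\tr_A$ and each $\tr_B$ term is a nonnegative power of a positive integer, and the Weingarten prefactor at $\sigma=\pi$ is positive), dropping the remaining sectors can only decrease the sum, which is exactly the direction needed for the lower bound.

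Finally I would assemble: the two retained sectors give precisely the bracketed expression $d_B^{-t} + \left(1 + (2^K-1)/d_B + \calO(e^K/(d_Ad_B))\right)\calF_{\rm Haar}^{(K)}$, and multiplying back the prefactor yields the stated inequality; the $\calO(e^K/(d_Ad_B))$ error is tracked by bounding the contribution of permutations at Cayley distance $\geq 1$ from the two collective sectors, where the number of such permutations is $\mathrm{poly}(m)=\mathrm{poly}(e^K)$ and each is down by $1/(d_Ad_B)$. The main obstacle I anticipate is \emph{not} the transfer-matrix bookkeeping but the careful evaluation of $\tr_B(\hat\sigma^\dagger Q_B)$ over the sectors that matter: $Q_B$ has the asymmetric replica-block structure $(\bm z^{\otimes n},\bm z'^{\otimes K},\bm z^{\otimes K},\bm z'^{\otimes n})$ versus $(\bm z^{\otimes n},\bm z^{\otimes K},\bm z'^{\otimes K},\bm z'^{\otimes n})$, so a given permutation $\sigma\in S_{2m}$ contributes a factor of $d_B$ only for each cycle that stays within a single $\bm z$-block, and one must check which near-identity and near-swap permutations survive and with what multiplicity, including the cross terms that generate the $2^K-1$ coefficient. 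Getting that combinatorial factor exactly right — and confirming that all other sectors are genuinely $\calO(1/(d_Ad_B))$-suppressed rather than merely $\calO(1/d_A)$-suppressed — is where the real work lies; everything else is linear algebra and nonnegativity.
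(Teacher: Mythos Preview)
Your overall strategy---exploit the manifest nonnegativity of every summand in Eq.~\eqref{F_approx} and keep only a few dominant permutation configurations---is exactly what the paper does. But your identification of those configurations, and of the mechanism producing the $(2^K-1)/d_B$ coefficient, is off in a way that would block the proof.

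First, the ``two sectors'' are not the single permutations $e$ and a swap, but large subsets of $S_{2m}$. The paper decomposes $S_{2m}=\bigcup_{r=0}^{m} G_r$, where $G_r$ consists of permutations $\omega(\pi_1\otimes\pi_2)$ with $\pi_1,\pi_2\in S_m$ acting on the two halves and $\omega$ a product of $r$ disjoint transpositions crossing the halves. The $d_B^{-t}$ term comes from restricting \emph{all} $\sigma_i$ to $G_0=S_m\times S_m$; the bare $\calF_{\rm Haar}^{(K)}$ term comes from restricting all $\sigma_i$ to $G_K'=\{\tau_K(\pi_1\otimes\pi_2)\}\subset G_K$, the subset on which $\tr_B(\hat\sigma^\dagger Q_{2m})=d_B^2$ rather than $d_B$. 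That dichotomy (Eq.~\eqref{eq:id_inner}) is the complete story for the $B$-trace---it is never $d_B^m$---and both $S_m\times S_m$ sums collapse to powers of $(m+d_A-1)!/(d_A-1)!$ via Eq.~\eqref{eq:K_moment_haar}, which is where the prefactor comes from.

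Second---and this is the real gap---the $(2^K-1)/d_B$ correction is \emph{not} an eigenvalue raised to the $t$-th power and cannot come out of a $2\times2$ block diagonalization. In the paper it arises from a \emph{third} class of configurations, a ``one-domain-wall'' boundary contribution: $\sigma_2,\ldots,\sigma_t\in G_K'$ but $\sigma_1\in\bigl(\bigcup_{r=0}^{K}G_r\bigr)\setminus G_K'$. This costs one factor of $d_B$ at the boundary step and requires evaluating $\sum_{\sigma_1,\sigma_2}\tr_A(\hat\sigma_1^\dagger)\tr_A(\hat\sigma_2^\dagger\hat\sigma_1)$ over those sets. The leading piece comes from $\sigma_1\in G_r$ with all $r$ crossing swaps landing in the central $K\times K$ block, giving $\binom{K}{r}^2\calF_{\rm Haar}^{(r)}\calF_{\rm Haar}^{(K-r)}$; summing $r=0,\ldots,K-1$ in the $d_A\to\infty$ limit yields $\sum_r\binom{K}{r}-1=2^K-1$. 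Your transfer-matrix language can accommodate this (it is a correction to the boundary vector in the dominant-eigenvalue sector, not a new eigenvalue), but the combinatorics behind it is unavoidable and is not captured by the two sectors you named.
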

{\em Proof sketch of Lemma~\ref{lem:F_pseudo_LB}.---} We derive this lemma by considering two sets of permutations sequences from the summation of Eq.~\eqref{F_approx}, since all summation terms are positive. Firstly, all spins are limited within the set of disjoint permutations $G_0 = S_{m}\times S_{m}$ where there does not exist any permutation across the first and last $m$ replicas, such as identity operation $e$ from the boundary operator. This type of permutations $G_0$ contributes to the dynamical term in Lemma~\ref{lem:F_pseudo_LB}. Secondly, all spins except for the right boundary $\sigma_1$ are restricted within a set $G^*$ to maximize its inner product with $Q$ while $\sigma_1$ is allowed to be other permutations as well, which can be interpreted as a 1-domain wall picture and corresponds to the converged value in 
Lemma~\ref{lem:F_pseudo_LB}. 
}

\QZ{
By taking the replica trick $m \to 1$, we can reduce Lemma~\ref{lem:F_pseudo_LB} to Ineq.~\eqref{high_order_F} in the main text.
We leave the full characterization of the partition function of the statistical model thus the frame potential of HDT as an interesting future direction. 
}

\subsection{Details of experiments}
We validate the \emph{projected ensemble} of HDT through experiments conducted on IBM Quantum devices~\cite{Qiskit}. Our implementation utilizes the mid-circuit measurement and reset \QZ{operations}.

In Fig.~\ref{fig:Fevo}a, we implement the hardware-efficient ansatz (HEA)~\cite{kandala2017hardware} with $L=4$ layers on a system \QZ{with} $N_A=2, N_B=2$ qubits to generate sufficiently random and complex unitaries at each time step of HDT. To evaluate the first-order frame potential, we \QZ{consider} the purity of the average state of the \emph{projected ensemble} on the data system \QZ{as the purity is equal to the first-order frame potential by definition. The purity of average state can be further evaluated by}
\begin{equation}\label{eq:purity}
    \tr(\bigl(\rho^{(1)}\bigr)^2) = \frac{1}{d}\sum_{P \in \calP_{N_A}} \tr(\rho^{(1)} P)^2,
\end{equation}
where $\calP_{N_A}$ is the Pauli group on $N_A$ qubits omitting phases, and $\rho^{(1)}$ is the first-order moment operator defined in Eq.~\eqref{def_rho_k}. Each blue dot in Fig.~\ref{fig:Fevo}a represents an average over $100$ sets of randomly initialized HEA parameters, implemented on IBM Quantum Sherbrooke. For each Pauli expectation in Eq.~\eqref{eq:purity}, we take $4096$ measurement shots for an accurate estimation. For \QZ{a} future extension to large data systems, the classical shadow~\cite{huang2020predicting} provides an alternative efficient method to estimate purity. 

In Fig.~\ref{fig:Fevo}b-d, the HEA is employed again to compute the PoP of \emph{projected ensembles} with $L=N_A+N_B$ layers of circuits in each step. Unlike the frame potential experiment, a single set of circuit gate parameters is randomly chosen and kept fixed across repetitions. For each experiment, $10^5$ measurement shots are taken to gather a sufficient number of measurement strings from the joint system. \QZ{Each experiment is}  repeated $10$ times on IBM Quantum Brisbane to obtain the average PoP at different time steps in various systems. Noiseless and noisy simulations are performed using the same setup, with ideal and noisy circuit models provided by IBM Quantum Qiskit. For \QZ{an} $N_A=2$ qubit system with dimension $d_A=4$, we choose $N_B=2$ ancilla qubits and implement six steps of HDT, each step with a circuit depth of 4. For \QZ{an} $N_A=4$ qubit system ($d_A=16$), we still choose $N_B=2$ ancilla qubits and implement six steps of HDT, each step with a circuit depth of 6. We also extend the experiment to \QZ{a system of} $N_A=5$ qubits ($d_A=32$) with $N_B=3$ ancilla and three steps, each with a circuit depth of \QZ{$L=8$}. 

In Fig.~\ref{fig:qVol}d, we follow the same design as in Fig.~\ref{fig:Fevo}a on a system of $N_A=2$ data qubits but \QZ{varying the number of ancilla qubits $N_B$} and time steps $T$. The circuit depth is chosen to be $L=2 + N_B$ layers. Each blue bar is averaged over $10$ sets of randomly initialized HEA circuits implemented on IBM Quantum Brisbane.

In the end, we list some of the median calibration data of IBM Quantum Sherbrooke and Brisbane in \autoref{tab:device_err}.

\begin{table}[H]
\centering
\begin{tabular}{|c|c c|} 
\hline
Error & Sherbrooke & Brisbane \\
\hline\hline
2-qubit Gate Error & \num{7.536e-3} & \num{8.006e-3}\\
\hline
Readout Error & 0.0133 & 0.0154 \\
\hline
$T_1$ & \SI{272.47}{\us} & \SI{228.99}{\us}\\
\hline
$T_2$ & \SI{175.87}{\us} & \SI{143.04}{\us}\\
\hline
\end{tabular}
\caption{Median calibration data of the IBM \QZ{Quantum} devices.\label{tab:device_err}}
\end{table}

\subsection{Quantum machine learning enhancement}
The task of random quantum state generation can be \QZ{viewed} as a quantum transport problem from a Dirac-$\delta$ distributed ensemble of a single initial state towards the Haar ensemble of states. In this regard, we can further enhance HDT via generative quantum machine learning (QML)~\cite{zhang2024generative}.
As shown in Fig.~\ref{fig:scheme_train}a, we parameterize \QZ{unitaries as $U(\bm \theta_1), \dots, U(\bm \theta_T)$} and optimize the parameters $\{\bm \theta_t\}_{t=1}^T$ to obtain the best \QZ{approximation to the} Haar-random ensemble at the output. 

Towards this goal, we consider a divide-and-conquer strategy of only training the unitary $U(\bm \theta_t)$ in the $t$-th step \QZ{by} minimizing a loss function $\calL_t(\calE_t)$ on the state ensemble $\calE_t$. While a greedy strategy may hope for $\calE_t$ to be close to \QZ{the} Haar-random ensemble $\calE_\text{Haar}$ at each training step, we \QZ{instead adopt} a less greedy cost function $\calL_t(\calE_t)=(1-q_t)D(\calE_t, \ket{\psi_0})+q_t D(\calE_t, \calE_\text{Haar})$ as an interpolation of `distances' to the initial single-state $\ket{\psi_0}$ and Haar ensemble, with a monotonically increasing scheduling function $q_t\in [0,1]$. Here $D(\cdot,\cdot)$ \QZ{characterizes} the `distance' between two ensembles (see %Appendix~
Appendix \ref{app:QML_details}). To enable efficient evaluation of the loss function on actual quantum devices, we consider the simplest choice of `distance' $D(\cdot,\cdot)$, the super-fidelity between the average state $\rho_t=\E_{\phi\sim \calE_t}\state{\phi}$ of the ensembles---only \QZ{tracking} the first-moment information. We expect that the final generated state ensemble $\calE_T$ \QZ{can} still capture some of the higher-moment properties of true Haar ensemble through a large enough number of steps $T$ thus sufficient randomness from measurements. 

\begin{figure}[t]
    \centering
    \includegraphics[width=0.45\textwidth]{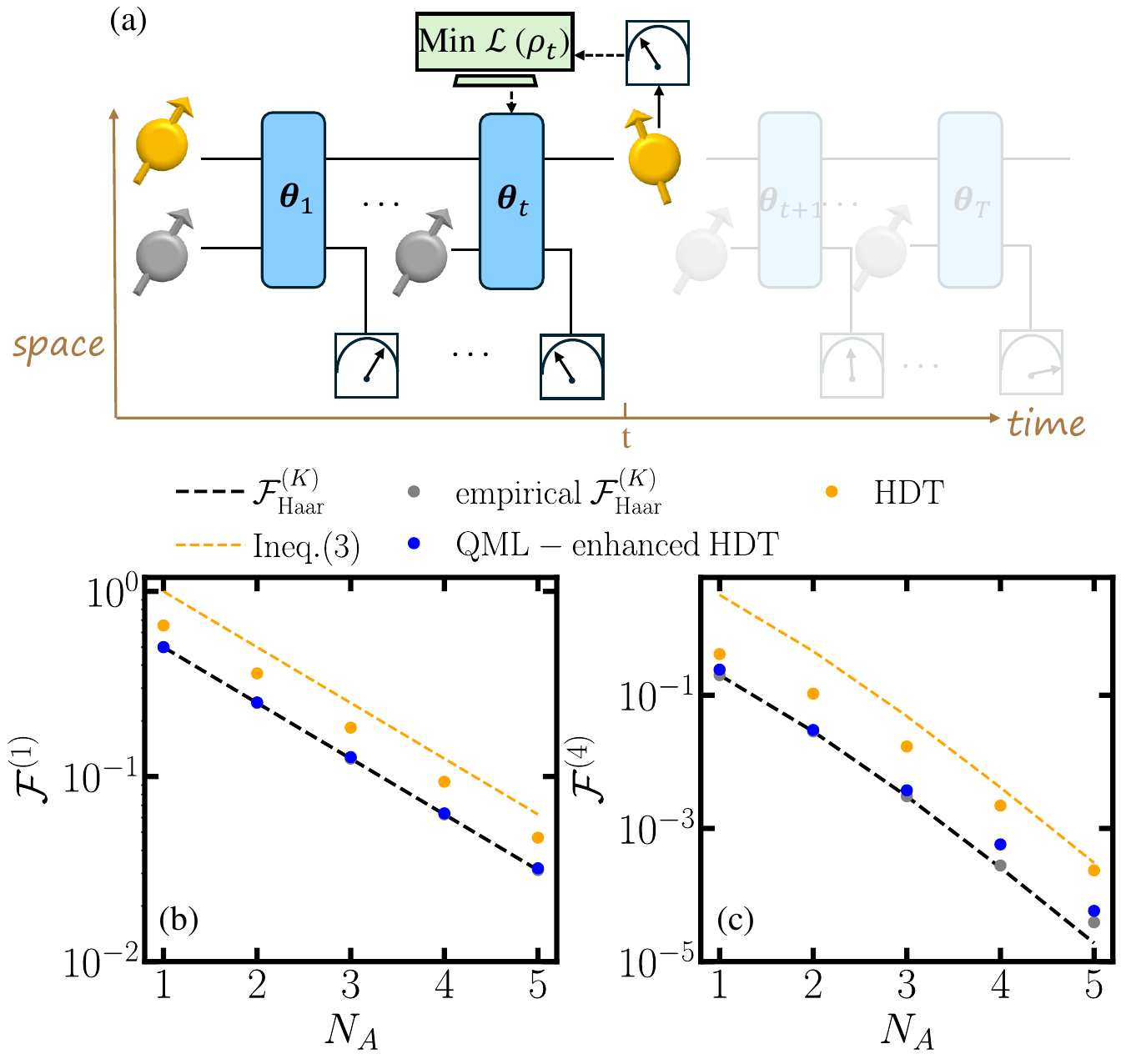}
    \caption{\QZ{\bf Quantum-machine-learning-enhanced HDT.} In (a), we show the conceptual plot of the \QZ{QML-enhanced HDT}. In (b) and (c), we show the frame potential for $K=1$ and $K=4$ of output state ensemble. Orange dots represent \QZ{the} converged frame potential in HDT with random unitary implementation. Orange dashed line is the theoretical prediction in Ineq.~\eqref{high_order_F}. Blue dots are \QZ{the} frame \QZ{potentials} of output state \QZ{ensembles} from QML-enhanced HDT. Black dashed line and gray dots show the exact and empirical \QZ{value} of \QZ{the} Haar ensemble frame potential. }
    % \QZ{maybe it's close to dual unitary}}
    \label{fig:scheme_train}
\end{figure}

We numerically simulate the training of the QML-enhanced HDT with $T=32$ and $N_B = 1$ \QZ{ancilla qubits}. The results, \QZ{presented} in Fig.~\ref{fig:scheme_train}b\QZ{-}c, \QZ{shows that} QML-enhanced frame potential (blue dots) \QZ{closely matches} the exact Haar value (black dashed \QZ{line}), \QZ{with minor} deviations \QZ{attributed to} finite \QZ{sampling} indicated by the finite-sampling Haar results (gray dots). \QZ{A constant improvement in performance} over the untrained HDT is \QZ{observed with} the QML enhancement. Details of the quantum circuit ansatz and training are provided in %Appendix~
Appendix \ref{app:QML_details}.

\subsection{Numerical simulation details}
We numerically simulate the HDT without \QZ{and with} QML enhancement \QZ{using} \texttt{TensorCircuit}~\cite{zhang2023tensorcircuit}.

In Fig.~\ref{fig:Fevo}a and Fig.~\ref{fig:qVol}a\QZ{-}b, each \QZ{data point} is an average over $20$ random Haar unitary realizations, where each ensemble consists of $5\times 10^4$ samples. 

In Fig.~\ref{fig:MI}b, the average mutual information and \QZ{the} corresponding R\'enyi lower bound for both \QZ{DT} and HDT are evaluated exactly over all possible measurement \QZ{outcomes when} $t \le 8$, meanwhile for $t>8$, we take Monte Carlo sampling over $2\times 10^4$ measurement results. We further average over $50$ randomly sampled Haar unitary realizations.

In Fig.~\ref{fig:scheme_train}, for \QZ{HDT} with \QZ{Haar} random unitary implementation, we average over $50$ Haar unitaries realizations with \QZ{each} ensemble \QZ{containing} $5\times 10^4$ samples. For QML-enhanced holographic deep thermalization, we average over $20$ post-measurement ensembles, each of which includes $5\times 10^4$ samples. The empirical $\calF^{(K)}_\text{Haar}$ is \QZ{estimated by averaging} over $50$ ensembles of $5\times 10^4$ Haar-random states.

%TC:ignore

%\bibliography{myref}
%apsrev4-2.bst 2019-01-14 (MD) hand-edited version of apsrev4-1.bst
%Control: key (0)
%Control: author (72) initials jnrlst
%Control: editor formatted (1) identically to author
%Control: production of article title (-1) disabled
%Control: page (0) single
%Control: year (1) truncated
%Control: production of eprint (0) enabled
%

% \

% {\noindent\bfseries Data availability}\\
% The data that support the findings of this study are available \QZ{in GitHub \url{https://github.com/bzGit06/HaarQML}}.

\

% {\noindent\bfseries Code availability}\\
% The theoretical results of the manuscript are reproducible from the
% analytical formulas and derivations presented therein. Codes for numerical simulation and experiment
% are available \QZ{in GitHub \url{https://github.com/bzGit06/HaarQML}}.

\ 

{\noindent\bfseries Acknowledgments.} BZ thanks Matteo Ippoliti for discussions, BZ and PX thank Runzhe Mo for suggestions on the experiment. QZ and BZ acknowledge support from NSF (CCF-2240641, OMA-2326746, 2350153), ONR N00014-23-1-2296, AFOSR MURI FA9550-24-1-0349 and DARPA (HR0011-24-9-0362, HR00112490453, D24AC00153-02). XC and PX acknowledge support from NSF (DMS-2413404, 2347760). This work was partially funded by an unrestricted gift from Google and a gift from the Simons Foundation. The experiment was conducted using IBM Quantum Systems provided through USC’s IBM Quantum Innovation Center. 

% \

% {\noindent\bfseries Author Contributions.} B.Z. and Q.Z. proposed the study. Q.Z. supervised the project. B.Z. derived all major analytical results and performed numerical simulations, with inputs from all authors. P.X. performed the experiments on IBM Quantum Systems with support from B.Z. and inputs from all authors. X.C. proved Lemma 4 in Appendix C.2. Q.Z. and B.Z. wrote the manuscript, with inputs from all authors.

% \ 

% {\noindent\bfseries Competing interests.} The authors declare that they have no competing interests.

%TC:endignore

\newpage 

\appendix

\begin{widetext}

\section{Preliminary}
\label{app:preliminary}
\subsection{Deviation to Haar state ensemble}

To quantify the approximation of an arbitrary state ensemble $\calE$ to the Haar ensemble (a $K$-design), we consider the (normalized) $p$-order moment distance and the relative deviation of frame potential as~\cite{ippoliti2023dynamical}
\begin{align}
    \Delta_{\calE,p}^{(K)} &= \frac{\norm{\rho_\calE^{(K)} - \rho_\text{Haar}^{(K)}}_p}{\norm{\rho_\text{Haar}^{(K)}}_p},\\
    \delta_\calE^{(K)} &= \frac{\calF_\calE^{(K)}}{\calF_\text{Haar}^{(K)}} - 1,
\end{align}
where $\|\cdot \|_p$ is the $p$-Schatten norm and $\rho_\calE^{(K)}$ is defined in Eq.~\eqref{def_rho_k} of the main text. Commonly used orders of norms include $p=1$ (trace norm), $p=2$ (Frobenius norm) and $p=\infty$ (operator norm). Note that $\norm{\rho_\text{Haar}^{(K)}}_1 = 1$, then
\be
    \Delta_{\calE, 1}^{(K)} = \norm{\rho_\calE^{(K)} - \rho_\text{Haar}^{(K)}}_1 = 2 D_{\rm tr}\left(\rho_\calE^{(K)}, \rho_\text{Haar}^{(K)}\right),
\ee
which is equivalent to the trace distance between $K$-th moment of $\calE$ and the Haar ensemble, widely explored in Refs.~\cite{cotler2023emergent,ho2022exact}. 
We can also show that for $p=2$, it corresponds to the frame potential as~\cite{ippoliti2023dynamical}
\begin{align}
    \Delta_{\calE, 2}^{(K)} &= \frac{\norm{\rho_\calE^{(K)} - \rho_\text{Haar}^{(K)}}_2}{\norm{\rho_\text{Haar}^{(K)}}_2}\\
    &= \frac{\tr[\left(\rho_\calE^{(K)} - \rho_\text{Haar}^{(K)}\right)^2]^{1/2}}{\tr[\left(\rho_\text{Haar}^{(K)}\right)^2]^{1/2}}\\
    &= \left(\frac{\calF_{\calE}^{(K)} - \calF_\text{Haar}^{(K)} }{\calF_\text{Haar}^{(K)}}\right)^{1/2} = \sqrt{\delta_\calE^{(K)}},
\end{align}
where we utilize $\tr(\rho_\calE^{(K)} \rho_\text{Haar}^{(K)}) = \calF^{(K)}_\text{Haar}$ in the third line. The norm with $p=1$ can also be bounded by $p=2$ norm with norm inequalities, therefore we mainly focus on the frame potential in the main text and in the following calculation.

\subsection{Important identities}

In this subsection, we introduce some important identities used to derive the frame potential in {\em deep thermalization} (DT) and {\em holographic deep thermalization} (HDT).

The $K$-th moment of the Haar state ensemble, $\rho_{\rm Haar}^{(K)}$,
\be
    \rho_{\rm Haar}^{(K)} \equiv \int_{\rm Haar} d\psi \ketbra{\psi}{\psi}^{\otimes K} = \frac{(d-1)!}{(K+d-1)!}\sum_{\pi \in S_K} \hat{\pi},
    \label{eq:K_moment_haar}
\ee
where $d$ is the dimension of Hilbert space, $S_K$ is the permutation group of $K$ elements and $\hat{\pi}$ is the permutation operator on the $K$-folded Hilbert space as
$\hat{\sigma}\ket{\psi_1, \dots, \psi_K} = \ket{\psi_{\sigma(1)}, \dots, \psi_{\sigma(K)}}$. This identity is also referred in Methods.

Next, we introduce the doubled Hilbert space representation of operators for notation simplicity. For any operator $\hat{A}$, we define an equivalent representation in the doubled Hilbert space as
\be
    \hat{A} = \sum_{ij} \braket{i|A|j} \ketbra{i}{j}  \to |\hat{A}) \equiv \sum_{ij} \braket{i|A|j} \ket{i} \ket{j}.
\ee
Therefore, the operator $A$ evolved by a unitary $U$ can be equivalently written as
\be
    U \hat{A} U^\dagger \to \left[U\otimes U^*\right]|\hat{A}), 
    \label{eq:twirling_dual}
\ee
where $U^*$ denotes the complex conjugate of $U$. With this notation, one can easily see that $(A|B) = \tr(A^\dagger B)$. 

With the doubled Hilbert space representation in hand, we introduce the Haar unitary twirling identity
\begin{align}
    \E_{\rm Haar}\left[\left(U \otimes U^*\right)^{\otimes n}\right] &= \sum_{\sigma, \pi \in S_n} {\rm Wg}(\sigma^{-1}\pi, n;d) |\hat{\sigma})(\hat{\pi}| \label{eq:haar_integral}\\
    &\simeq {\rm Wg}(e, n; d) \sum_{\sigma \in S_n} |\hat{\sigma})(\hat{\sigma}|, \label{eq:haar_integral_simplify}
\end{align}
where $\sigma, \pi$ are permutations in the permutation group $S_n$. The Weingarten function is defined as ${\rm Wg}(\sigma^{-1}\pi,n;d) = (d^{|\sigma^{-1}\pi|})^{-1}$ where $|\sigma^{-1}\pi|$ denotes the number of cycles of the permutation $\sigma^{-1}\pi$. The second line approximation holds in the thermodynamic limit $d \to \infty$ where we only need to focus on  $\sigma = \pi$ in the summation. In this thermodynamic limit, we have ${\rm Wg}(e, n, d) \to 1/d^{n}$.

To close this subsection, we consider the inner product between arbitrary permutation $\sigma \in S_{2m}$ in permutation group $S_{2m}$ and another special designed operator $Q_{2m} \equiv \sum_{z_1, z_2} \ketbra{z_1^{\otimes n}, z_2^{\otimes K}, z_1^{\otimes K}, z_2^{\otimes n}}{z_1^{\otimes n}, z_1^{\otimes K}, z_2^{\otimes K}, z_2^{\otimes n}}_B$. We will see this operator appearing in various places through calculation of frame potentials. The inner product of interest is then
\begin{align}
    (\hat{\sigma}|Q_{2m})_B &= \sum_{z_1, z_2} \braket{z_1^{\otimes n}, z_1^{\otimes K}, z_2^{\otimes K}, z_2^{\otimes m}|\hat{\sigma}^\dagger|z_1^{\otimes n}, z_2^{\otimes K}, z_1^{\otimes K}, z_2^{\otimes n}}\\
    &=\sum_{z_1} \braket{z_1^{\otimes 2n}|\hat{\sigma}|z_1^{\otimes 2n}}+ \sum_{z_1 \neq z_2} \braket{z_1^{\otimes n}, z_1^{\otimes K}, z_2^{\otimes K}, z_2^{\otimes n}|\hat{\sigma}^\dagger|z_1^{\otimes n}, z_2^{\otimes K}, z_1^{\otimes K}, z_2^{\otimes n}}\\
    &= d_B + d_B(d_B-1) \delta_{\sigma, \tau_K(\pi_1\otimes \pi_2)},
    \label{eq:id_inner}
\end{align}
where $\tau_K$ is the transposition operation on the central $2K$ replicas to exchange them, $\tau_K = (1),\dots,(n),(n+1,m+1),\dots,(m,m+K),(m+K+1),\dots,(2m)$, and $\pi_1, \pi_2 \in S_m$ are arbitrary permutations onto the first $m$ and last $m$ replicas separately, which is visualized in Fig.~\ref{fig:id_inner}. Specifically, for $m = 1$, we have
\be
    (\hat{\sigma}|Q_2)_B = \begin{cases}
        d_B, &\text{if $\sigma = e$}\\
        d_B^2, &\text{if $\sigma = \tau_1$},
    \end{cases}
    \label{eq:id_inner_2}
\ee
where $e$ and $\tau_1$ are the identity and swap operation on $2$ replicas.

\begin{figure}[t]
    \centering
    \includegraphics[width=0.7\textwidth]{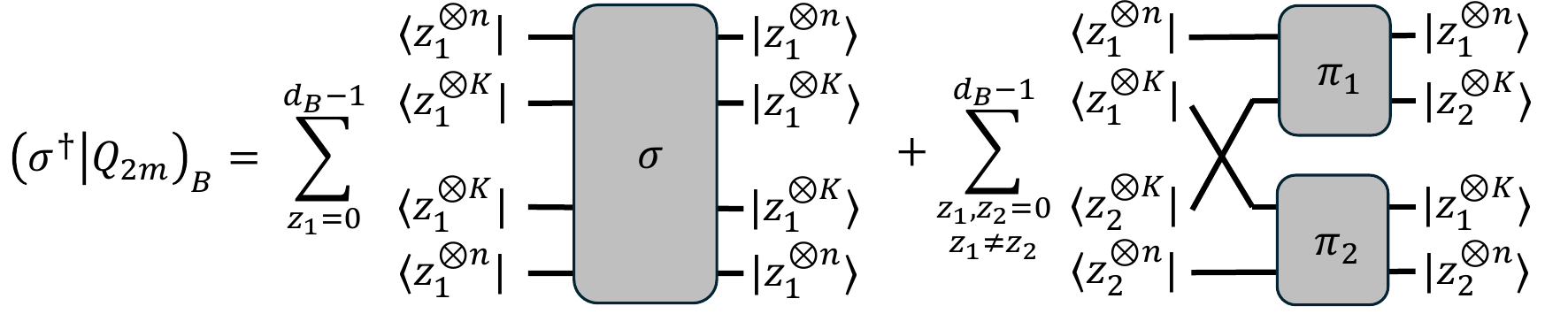}
    \caption{We visualize the identity of $(\hat{\sigma}|Q_{2m})_B$ in Eq.~\eqref{eq:id_inner}.}
    \label{fig:id_inner}
\end{figure}

\subsection{Projected ensemble and frame potential}

To close this section, we reintroduce the definition of projected ensemble and frame potential.

Consider a bipartite quantum state $\ket{\Psi}_{AB}$, we perform projective measurements on one of its subsystems, i.e. $\ketbra{z}{z}_B$, and collect the post-measurement states in subsystem $A$, which forms the so-called {\em projected ensemble} $\calE = \{p_z, \ket{\psi_z}_A\}$ with
\be
    p_z = |{}_B\braket{z|\Psi}_{AB}|^2, \ket{\psi_z}_A = {}_B\braket{z|\Psi}_{AB}/\sqrt{p_z}.
\ee
For convenience of calculation, we define the unnormalized post-measurement state by $\ket{\tilde{\psi}_z}_A \equiv {}_B\braket{z|\Psi}_{AB}$, and the probability for measurement outcome $z$ can be equivalently expressed as $p_z = \braket{\tilde{\psi}_z|\tilde{\psi}_z}$.

The $K$-th frame potential of the projected ensemble $\calE$ is 
\begin{align}
    \calF^{(K)} &= \sum_{z_1,z_2} p_{z_1} p_{z_2} |\braket{\psi_{z'}|\psi_z}|^{2K}\\
    &= \sum_{z_1, z_2} p_{z_1}^{1-K} p_{z_2}^{1-K} |\braket{\tilde{\psi}_{z'}|\tilde{\psi}_z}|^{2K} \\
    &= \sum_{z_1, z_2} \braket{\tilde{\psi}_{z_1}|\tilde{\psi}_{z_1}}^{1-K} |\braket{\tilde{\psi}_{z'}|\tilde{\psi}_z}|^{2K} \braket{\tilde{\psi}_{z_2}|\tilde{\psi}_{z_2}}^{1-K}
    \label{eq:Fk_def}
\end{align}
Specifically, for $K=1$, it reduces to $\calF^{(1)} = \sum_{z_1, z_2} |\braket{\tilde{\psi}_{z'}|\tilde{\psi}_z}|^2$.

\section{State design in deep thermalization}

In this section, we review the results of state design in DT, which can make readers familiar with the methodology for this problem and understand the connection and challenges from DT to HDT.

\subsection{First-order frame potential}

We start by evaluating the first-order frame potential of the projected ensemble from DT following Eq.~\eqref{eq:Fk_def}.
\be
    \calF^{(1)} = \sum_{z_1, z_2} |\braket{\tilde{\psi}_{z_1}|\tilde{\psi}_{z_2}}|^2 = \sum_{z_1, z_2} \tr(\ketbra{\Psi}{\Psi}_{AB}^{\otimes 2} \ketbra{z_1, z_2}{z_2, z_1}_B).
\ee
In DT, the state of the joint system is modeled as a Haar random state $\ket{\Psi}_{AB} = U\ket{\bm 0}_{AB}$ where the unitary follows Haar measure $U \sim \calU_{\rm Haar}(d)$ with $d = d_A d_B$. The ensemble-averaged first-order frame potential becomes
\be
    \E_{\rm Haar}\calF^{(1)} = \E_{\rm Haar} \tr(U^{\otimes 2}\ketbra{\bm 0}{\bm 0}_{AB}^{\otimes 2} U^\dagger{}^{\otimes 2} \sum_{z_1, z_2}\ketbra{z_2, z_1}{z_1, z_2}_B) = {}_A(0|{}_B(0|\E_{\rm Haar}\left[\left(U \otimes U^*\right)^{\otimes 2}\right]|\hat{e})_A|Q_2)_B,
    \label{eq:F1_DT}
\ee
where in the second equation we use Eq.~\eqref{eq:twirling_dual}. Here $\hat{e}$ is the identity operator applied on the $2$-fold Hilbert space, and $|Q_2)_B \equiv \sum_{z_1, z_2} \ket{z_2, z_1}_B \ket{z_1, z_2}_B$. We visualize this equation in a tensor diagram as shown in Fig.~\ref{fig:diagram_DT}a.
According to the Haar unitary identity in Eq.~\eqref{eq:haar_integral}, we can explicitly express the 2-fold Haar unitary twirling as
\be
    \E_{\rm Haar}\left[\left(U \otimes U^*\right)^{\otimes 2}\right] = \frac{1}{d^2 - 1}\left[|\hat{e})(\hat{e}| + |\hat{\tau}_1)(\hat{\tau}_1| - \frac{1}{d}|\hat{e})(\hat{\tau}_1| - \frac{1}{d}|\hat{\tau}_1)(\hat{e}|\right],
\ee
where $\hat{\tau}_1$ is the swap operator. Utilizing this identity, we can derive the ensemble average first-order frame potential
\begin{align}
    &\E_{\rm Haar}\calF_{\rm DT}^{(1)} = {}_A(0|{}_B(0|\E_{\rm Haar}\left[\left(U \otimes U^*\right)^{\otimes 2}\right]|\hat{e})_A|Q_2)_B \nonumber\\
    &= {}_A(0|{}_B(0|\frac{1}{d^2 - 1}\left[|\hat{e})(\hat{e}| + |\hat{\tau}_1)(\hat{\tau}_1| - \frac{1}{d}|\hat{e})(\hat{\tau}_1| - \frac{1}{d}|\hat{\tau}_1)(\hat{e}|\right]|\hat{e})_A|Q_2)_B\\
    &= \frac{1}{d^2 - 1} \left[(\hat{e}|\hat{e})_A(\hat{e}|Q_2)_B (\bm 0|\hat{e})_{AB} + (\hat{\tau}_1|\hat{e})_A(\hat{\tau}_1|Q_2)_B (\bm 0|\hat{\tau}_1)_{AB} - \frac{1}{d}(\hat{\tau}_1|\hat{e})_A (\hat{\tau}_1|Q_2)_B (\bm 0|\hat{e})_{AB} - \frac{1}{d}(\hat{e}|\hat{e})_A(\hat{e}|Q_2)_B (\bm 0|\hat{\tau}_1)_{AB}\right] \\
    &= \frac{1}{d^2 - 1} \left[d_A^2 d_B  + d_A d_B^2 - \frac{1}{d}d_A d_B^2  - \frac{1}{d}d_A^2 d_B\right]\\
    &= \frac{d_B(d_A^2 - 1)}{d_A^2 d_B^2-1} + \frac{d_A^2(d_B^2 - 1)}{d_A^2 d_B^2 - 1}\calF^{(1)}_{\rm Haar}, \label{eq:F1_sol_DT}
\end{align}
where in the second to last line we utilize the fact that $(\hat{\sigma}|\hat{\sigma})_A = d_A^2, (\bm 0|\hat{\sigma})_{AB} = 1, \forall \sigma \in S_2$,  $(\hat{\sigma}|\hat{e})_A = d_A$, and the inner product identity in Eq.~\eqref{eq:id_inner_2}.

\begin{figure}[t]
    \centering
    \includegraphics[width=0.65\textwidth]{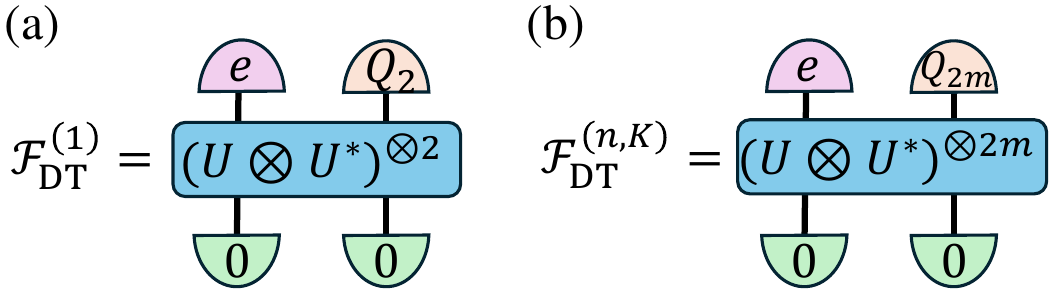}
    \caption{Tensor diagrams for the (a) first-order frame potential of Eq.~\eqref{eq:F1_DT} and (b) pseudo frame potential of Eq.~\eqref{eq:Fk_pseudo_DT} in DT. }
    \label{fig:diagram_DT}
\end{figure}

\subsection{Higher-order frame potential}

In this subsection, we evaluate the ensemble average $K$-th frame potential in projected ensemble of DT with a general assumption that $K \ge 1$. We also recommend interested readers to read Appendix D in Ref.~\cite{ippoliti2022solvable}.

To evaluate the frame potential in Eq.~\eqref{eq:Fk_def}, we utilize the replica trick and define another quantity, pseudo frame potential, 
\begin{align}
    \calF^{(n,K)} 
    &\equiv \sum_{z_1, z_2} p_{z_1}^{n}  |\braket{\tilde{\psi}_{z_1}|\tilde{\psi}_{z_2}}|^{2K} p_{z_2}^{n} = 
    \sum_{z_1, z_2} \braket{\tilde{\psi}_{z_1}|\tilde{\psi}_{z_1}}^{n} |\braket{\tilde{\psi}_{z_1}|\tilde{\psi}_{z_2}}|^{2K} \braket{\tilde{\psi}_{z_2}|\tilde{\psi}_{z_2}}^{n}\\
    &= \tr(\ketbra{\Psi}{\Psi}_{AB}^{\otimes 2m} \sum_{z_1, z_2} \ketbra{z_1^{\otimes n}, z_2^{\otimes K}, z_1^{\otimes K}, z_2^{\otimes n}}{z_1^{\otimes n}, z_1^{\otimes K}, z_2^{\otimes K}, z_2^{\otimes n}}_B).
    \label{eq:Fk_pseudo_def}
\end{align}
where we define $m\equiv n + K$ for simplicity. In the replica trick, we first evaluate the pseudo frame potential and later take the limit $n \to 1-K$ (or $m \to 1$) to obtain the true frame potential.

The ensemble average pseudo frame potential of projected ensemble in DT can be written as
\begin{align}
    \E_{\rm Haar}\calF_{\rm DT}^{(n,K)} &= \E_{\rm Haar}\sum_{z_1, z_2} \tr(U^{\otimes 2m}\ketbra{\bm 0}{\bm 0}_{AB}^{\otimes 2m}U^\dagger{}^{\otimes 2m} \ketbra{z_1^{\otimes n}, z_2^{\otimes K}, z_1^{\otimes K}, z_2^{\otimes n}}{z_1^{\otimes n}, z_1^{\otimes K}, z_2^{\otimes K}, z_2^{\otimes n}}_B)\\
    &= {}_{AB}(\bm 0|\E_{\rm Haar}\left[\left(U \otimes U^*\right)^{\otimes 2m}\right]|\hat{e})_A|Q_{2m})_B,
    \label{eq:Fk_pseudo_DT}
\end{align}
where we define $|Q_{2m})_B \equiv \sum_{z_1, z_2} \ket{z_1^{\otimes n}, z_2^{\otimes K}, z_1^{\otimes K}, z_2^{\otimes n}}_B \ket{z_1^{\otimes n}, z_1^{\otimes K}, z_2^{\otimes K}, z_2^{\otimes n}}_B$ for simplicity and here $\hat{e}$ is the identity operation on $2m$ replicas. We represent this equation in a tensor diagram as shown in Fig.~\ref{fig:diagram_DT}b.

Now we evaluate the ensemble average pseudo frame potential utilizing the approximated Haar unitary twirling identity in Eq.~\eqref{eq:haar_integral_simplify} as
\begin{align}
    \E_{\rm Haar}\calF_{\rm DT}^{(n,K)} &= {}_{AB}(\bm 0|\E_{\rm Haar}\left[\left(U \otimes U^*\right)^{\otimes 2m}\right]|\hat{e})_A|Q_{2m})_B \nonumber\\
    &\simeq {\rm Wg}(e, 2m;d) \sum_{\sigma \in S_{2m}}(\bm 0|\hat{\sigma}){}_{AB}(\hat{\sigma}|\hat{e})_A (\hat{\sigma}|Q_{2m})_B\\
    &={\rm Wg}(e, 2m;d) \left[d_B\sum_{\sigma\in S_{2m}}\tr_A(\hat{\sigma})+d_B (d_B - 1)\sum_{\pi_1, \pi_2 \in S_m} \tr_A(\hat{\tau}_K(\hat{\pi}_1\otimes \hat{\pi}_2))\right].
    \label{eq:Fk_pseudo_DT_1}
\end{align}
The approximation in the second line holds in the thermodynamic limit $d_A d_B \to \infty$, and is widely adopted in related works~\cite{ippoliti2022solvable, chan2024projected}.

Recall the definition of $K$-th moment of Haar state ensemble in Eq.~\eqref{eq:K_moment_haar}, we can reduce the first summation above to
\be
    \sum_{\sigma\in S_{2m}}\tr_A(\hat{\sigma}) = \frac{(2m+d_A - 1)!}{(d_A -1)!} \tr(\rho_{\rm Haar}^{(2m)}) = \frac{(2m+d_A - 1)!}{(d_A -1)!},
    \label{eq:S_2m_trace}
\ee
since $\rho_{\rm Haar}^{(2m)}$ is a valid quantum state. Similarly, for the second summation, we have
\begin{align}
    \sum_{\pi_1, \pi_2 \in S_m}\tr_A(\hat{\tau}_K(\hat{\pi}_1\otimes \hat{\pi}_2)) &= \left(\frac{(m+d_A - 1)!}{(d_A -1)!}\right)^2 \tr_A \left(\hat{\tau}_K \rho_{\rm Haar}^{(m)}{}^{\otimes 2}\right)\\
    &= \left(\frac{(m+d_A - 1)!}{(d_A -1)!}\right)^2 \tr_A \left(\hat{\tau}_K \rho_{\rm Haar}^{(K)}{}^{\otimes 2}\right)= \left(\frac{(m+d_A - 1)!}{(d_A -1)!}\right)^2 \calF^{(K)}_{\rm Haar},
    \label{eq:S_m2_trace}
\end{align}
where from the first equation to the second one we use
\begin{align}
    \tr_A(\hat{\tau}_K (\hat{\pi}_1\otimes \hat{\pi}_2)) &= \int_{\rm Haar} d\psi d\phi \tr_A\left(\hat{\tau}_K \ketbra{\psi^{\otimes n}, \psi^{\otimes K}, \phi^{\otimes K}, \phi^{\otimes n}}{\psi^{\otimes n}, \psi^{\otimes K}, \phi^{\otimes K}, \phi^{\otimes n}} \right) \\
    &= \int_{\rm Haar} d\psi d\phi \tr_A\left(\hat{\tau}_K \ketbra{\psi^{\otimes K}, \phi^{\otimes K}}{\psi^{\otimes K}, \phi^{\otimes K}} \right) = \tr_A\left(\hat{\tau}_K \rho_{\rm Haar}^{(K)}{}^{\otimes 2}\right).
\end{align}
Combining Eqs.~\eqref{eq:S_2m_trace} and~\eqref{eq:S_m2_trace}, we obtain the 
ensemble average pseudo frame potential from Eq.~\eqref{eq:Fk_pseudo_DT_1} as
\begin{align}
    \E_{\rm Haar}\calF_{\rm DT}^{(n,K)} = {\rm Wg}(e, 2m;d) \left[d_B \frac{(2m+d_A - 1)!}{(d_A -1)!}+d_B (d_B - 1) \left(\frac{(m+d_A - 1)!}{(d_A -1)!}\right)^2 \calF^{(K)}_{\rm Haar}\right].
\end{align}
In the last step, we take the limit $m \to 1$ in the replica trick and obtain the ensemble average $K$-th frame potential 
\begin{align}
    \E_{\rm Haar}\calF_{\rm DT}^{(K)} &= {\rm Wg}(e, 2;d) \left[d_B d_A (d_A+1) + d_B (d_B - 1) d_A^2 \calF^{(K)}_{\rm Haar}\right]\\
    &= \frac{d_A+1}{d_A d_B} + \frac{d_B-1}{d_B}\calF^{(K)}_{\rm Haar}, \label{eq:Fk_sol_DT}
\end{align}
which reaches the same conclusion as in Ref.~\cite{ippoliti2022solvable}. We want to point out that in the case $K=1$, Eq.~\eqref{eq:Fk_sol_DT} becomes $(d_A+1)/d_Ad_B+d_A(d_B-1)\calF^{(1)}_{\rm Haar}/d_A d_B$, which is different from the exact result in Eq.~\eqref{eq:F1_sol_DT} up to a small correction of $(d_A+d_B)/d_Ad_B(d_A d_B+1)\sim 1/d_A d_B^2 + 1/d_A^2 d_B$ due to the approximation in Eq.~\eqref{eq:haar_integral_simplify}.  

The necessary number of ancilla qubits to achieve an $\epsilon$-approximate $K$-design ($\calF_{\rm DT}^{(K)} \le (1+\epsilon)\calF_{\rm Haar}^{(K)}$) is
\begin{align}
    N_B &\ge 
    \log_2\left(\frac{d_A+1 -d_A \calF_{\rm Haar}^{(K)} }{d_A \calF_{\rm Haar}^{(K)} \epsilon }\right) \\
    &=\log_2\left(\frac{1}{\calF_{\rm Haar}^{(K)}}\right) + \log_2\left(\frac{1}{\epsilon}\right) + \log_2\left(\left(1 +\frac{1}{d_A}\right)\left(1- \frac{d_A \calF_{\rm Haar}^{(K)}}{d_A+1}\right)\right)\\
    &= KN_A - \log_2(K!) + \log_2\left(1/\epsilon\right) + 
    \calO(2^{-N_A}),
\end{align}
where in the last line we take the thermodynamic limit of $d_A = 2^{N_A} \to \infty$ thus $\calF_{\rm Haar}^{(K)} = K!/d_A^K$.

\section{State design in holographic deep thermalization}
\label{app:F_dynamics}

In this section, we study the frame potential of the projected ensemble produced by HDT at an arbitrary time step $t$. Similar to the DT case, we model each unitary $U_t$ in HDT to be randomly sampled from the Haar measure $\calU_\text{Haar}(d)$ with $d = d_A d_B$.
For HDT, we postpone the projective measurements on the bath at different steps to the end by equivalently expanding the bath system $\bm B = B_1 \cdots B_t$ (see Fig.~\ref{fig:diagram_HDT}a). Therefore, the output state can be expressed as
\be
    \ket{\Psi}_{A \bm B} = \prod_{i=1}^{t} U_{AB_i}\ket{\bm 0}_{A\bm B} \equiv U\ket{\bm 0}_{A \bm B},
    \label{eq:HDT_state}
\ee
where $U_{AB_i}$ is a unitary {\em only} nontrivially applied on the data system $A$ and bath system $B_i$. We also collect all the measurement outcome into a vector representation $\bfz = (z_1, \cdots, z_t)^T$ for simplicity, and the unnormalized post-measurement state becomes $\ket{\tilde{\psi}_\bfz} = {}_{\bm B}\braket{\bfz|\Psi}_{A \bm B}$.

\begin{figure}
    \centering
    \includegraphics[width=0.65\linewidth]{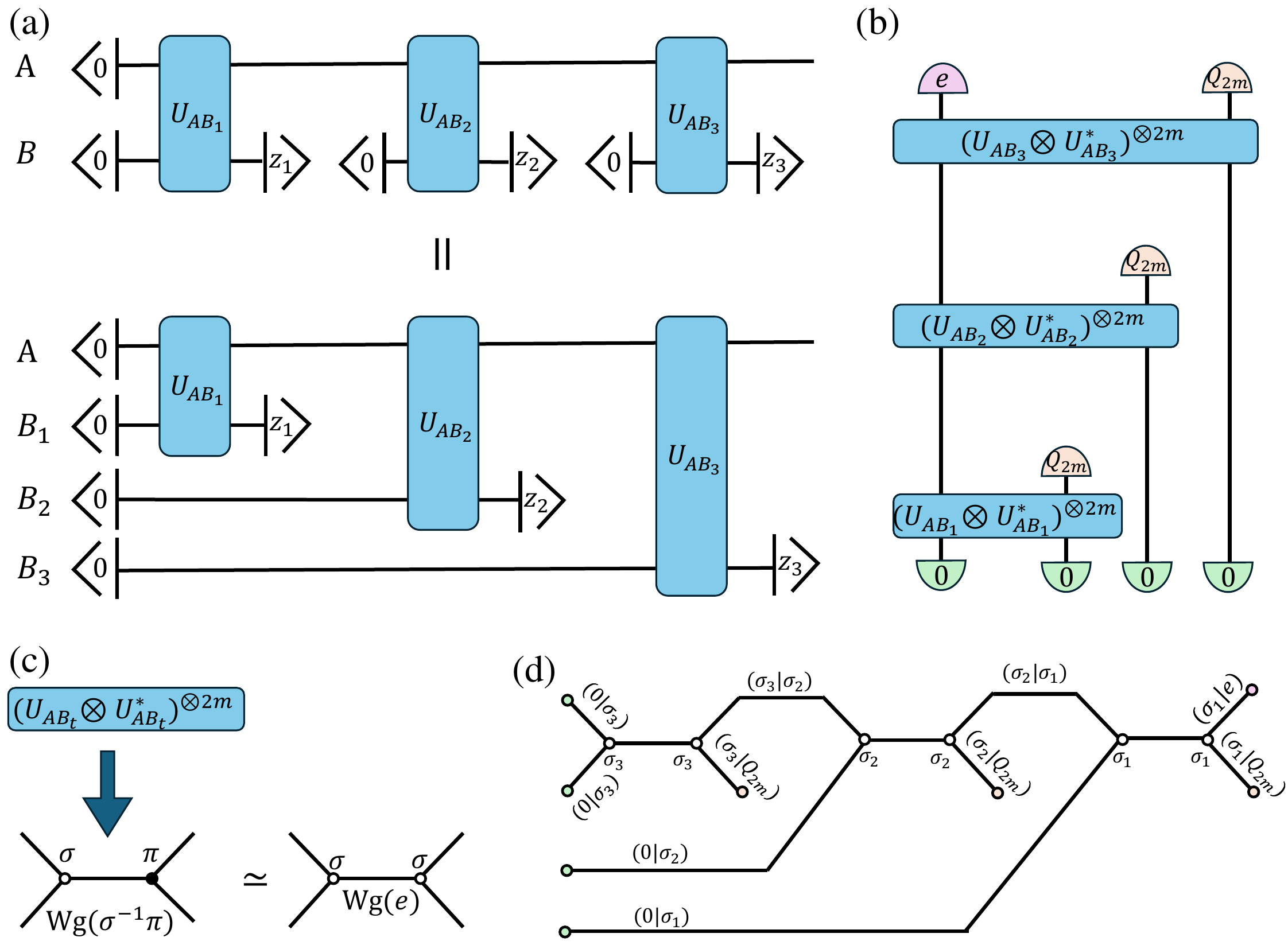}
    \caption{Diagrams for HDT. In (a), we postpone all mid-circuit measurements in HDT to the end and equivalently expand the bath system. In (b), we show the tensor diagram representation of pseudo frame potential in HDT. In (c), we map the unitary twirling to a ``spin'' representation in statistical model. The approximation forces the two ``spins'' to be the same. In (d), we express the pseudo frame potential in (b) as the partition function of a statistical model with weights determined by unitary twirling and permutation inner products (see notations). The time direction of HDT is from left to right, while our evaluation of pseudo frame potential starts follows the opposite direction. In all plots, we show a case of $t = 3$.}
    \label{fig:diagram_HDT}
\end{figure}
\subsection{First-order frame potential}

We begin with the first-order frame potential in HDT. Following the result in DT (see Eq.~\eqref{eq:F1_DT}), the ensemble average first-order frame potential becomes
\be
    \E_{\rm Haar}\calF^{(1)}_{\rm HDT}(t) = {}_A(0|\left(\otimes_{i=1}^t {}_{B_i}(0|\right) \prod_{i=1}^{t}\E_{\rm Haar}\left[\left(U_{AB_i} \otimes U_{AB_i}^*\right)^{\otimes 2}\right]|\hat{e})_A \otimes_{i=1}^t |Q_2)_{B_i}.
    \label{eq:F1_HDT}
\ee
We visualize it as a tensor diagram in  Fig.~\ref{fig:diagram_HDT}b with $m = 1$. 
We first evaluate the unitary twirling of $U_{AB_t}$ on the boundary condition as
\be
    {}_{B_t}(0|\E_{\rm Haar}[\left(U_{AB_t} \otimes U_{AB_t}^*\right)^{\otimes 2}]|\hat{e})_A |Q_2)_{B_t} = \frac{d_B(d_A^2 - 1)}{d_A^2 d_B^2-1}|\hat{e})_A + \frac{d_A(d_B^2 - 1)}{d_A^2 d_B^2 - 1}|\hat{\tau}_1)_A,
    \label{eq:F1_rule1_HDT}
\ee
which is identical to the result in DT (see Eq.~\eqref{eq:F1_sol_DT}).
Since a new boundary condition $|\hat{\tau}_1)_A$ appears, we next evaluate the unitary twirling of $U_{AB_{t-1}}$ on it as
\begin{align}
    &{}_{B_{t-1}}(0|\E_{\rm Haar}\left[\left(U_{AB_{t-1}} \otimes U_{AB_{t-1}}^*\right)^{\otimes 2}\right]|\hat{\tau}_1)_A|Q_2)_{B_{t-1}} = \frac{1}{d^2 - 1}{}_{B_t}( 0|\left[|\hat{e})(\hat{e}| + |\hat{\tau}_1)(\hat{\tau}_1| - \frac{1}{d}|\hat{e})(\hat{\tau}_1| - \frac{1}{d}|\hat{\tau}_1)(\hat{e}|\right]|\hat{\tau}_1)_A|Q_2)_{B_{t-1}}\\
    &= \frac{1}{d^2 - 1} \left[d_A d_B|\hat{e})_A + d_A^2 d_B^2|\hat{\tau}_1)_A - \frac{d_A^2 d_B^2}{d}|\hat{e})_A - \frac{d_A d_B}{d}|\hat{\tau}_1)_A\right]\\
    &= |\hat{\tau}_1)_A,
    \label{eq:F1_rule2_HDT}
\end{align}
which is invariant under the unitary twirling. 

At this point, we have collected all the necessary results to understand the unitary twirling. For calculation convenience, we can represent the above two boundary conditions using a $2$-dimensional linear system, $\ket{\underline{0}} := | \hat{e})_A$ and $\ket{\underline{1}} := |\hat{\tau}_1)_A$, the twirling results in Eqs. \ref{eq:F1_rule1_HDT} and \eqref{eq:F1_rule2_HDT} can be represented by a $2\times 2$ matrix
\begin{align}
    \calA = \frac{1}{d_A^2 d_B^2 -1}\begin{pmatrix}
        d_B(d_A^2-1) &0 \\
        d_A(d_B^2-1) & d_A^2 d_B^2-1
    \end{pmatrix}.
\end{align}
With the matrix representation, the ensemble-averaged frame potential can be derived as
\begin{align}
    \E_{\rm Haar}\calF^{(1)}_{\rm HDT}(t) &= \braket{\underline{0}|\calA^t|\underline{0}}(0|\hat{e})_A + \braket{\underline{1}|\calA^t|\underline{0}}(0|\hat{\tau}_1)_A\\
    &= \frac{(d_A-1) (d_A d_B-1)}{d_A^2 d_B + 1} \left(\frac{\left(d_A^2-1\right) d_B}{d_A^2 d_B^2-1}\right)^t+ \frac{d_A^2 (d_B+1)}{d_A^2 d_B+1} \calF^{(1)}_{\rm Haar}.
    \label{eq:F1_sol_HDT}
\end{align}
Specifically, by taking $t=1$, we can recover the DT result in Eq.~\eqref{eq:F1_sol_DT}.

\subsection{Higher-order frame potential}

In this subsection, we analyze the dynamics of the higher-order frame potential in HDT. Following the result in DT (see Eq.~\eqref{eq:Fk_pseudo_DT}), the ensemble average of pseudo frame potential is
\be
    \E_{\rm Haar}\calF^{(n,K)}_{\rm HDT}(t) = {}_A(0|\left(\otimes_{i=1}^t {}_{B_i}(0|\right) \prod_{i=1}^{t}\E_{\rm Haar}\left[\left(U_{AB_i} \otimes U_{AB_i}^*\right)^{\otimes 2m}\right]|\hat{e})_A \otimes_{i=1}^t |Q_{2m})_{B_i}.
    \label{eq:Fk_pseudo_HDT}
\ee
We visualize it as a tensor diagram in  Fig.~\ref{fig:diagram_HDT}b. Furthermore, we can map the Haar unitary twirling in Eq.~\eqref{eq:haar_integral} to a ``spin'' representation with black and white vertices denoting the permutations $\sigma$ and $\pi$~\cite{zhou2019emergent, hunter2019unitary}. The horizontal line connecting the vertices represents the Weingarten function ${\rm Wg}(\sigma, 2m;d)$. In the large system size limit $d_A d_B \to \infty$, we can do a further simplification with $\sigma = \pi$ (white vertices only) and the horizontal line now becomes ${\rm Wg}(e, 2m;d)$. With this statistical mapping, we can map the pseudo frame potential in Eq.~\eqref{eq:Fk_pseudo_HDT} into the partition function of a statistical physics model, shown in Fig.~\ref{fig:diagram_HDT}d. The white vertex represents a permutation $\sigma \in S_{2m}$ as stated above. The colored vertices represent the corresponding boundary condition shown in Fig.~\ref{fig:diagram_HDT}b. The direct horizontal line connecting two white vertices represents ${\rm Wg}(\sigma, 2m;d)$, and the inclined line or polyline are inner products between permutations or boundaries, as marked in the Fig.~\ref{fig:diagram_HDT}d. Therefore, we can express the pseudo frame potential in the large system limit ($d_A d_B \to \infty$) as
\be
    \E_{\rm Haar}\calF^{(n,K)}_{\rm HDT}(t) = {\rm Wg}(e,2m;d)^t \sum_{\sigma_1, \dots, \sigma_t \in S_{2m}} \left[(0|\hat{\sigma}_t)_A\prod_{i=1}^{t-1} (\hat{\sigma}_{i+1}|\hat{\sigma}_{i})_A (\hat{\sigma}_1|e)_A \right]\left[\prod_{i=1}^{t} (0|\hat{\sigma}_i)_{B_i} (\hat{\sigma}_{i}|Q_{2m})_{B_i} \right].
    \label{eq:Fk_pseudo_HDT_stat}
\ee
Calculating the above quantity for the desired frame potential is challenging due to the iterative interaction between different permutations $(\hat{\sigma}_{i+1}|\hat{\sigma})_A$, which is absent in the calculation of conventional DT in Eq.~\eqref{eq:Fk_pseudo_DT}. In the following, we first evaluate a summation on a subset of $\sigma_1, \dots, \sigma_t \in S_{2m}$ to obtain a lower bound on the pseudo frame potential to provide analytical insight.
Before starting the calculation, we first introduce a lemma to decompose the permutation group $S_{2m}$ into two partitioning subsets.
\begin{lemma}\label{lem:permutation_decomp}
    For any permutation $\sigma \in S_{2m}$, we can express it as
\be
    \sigma = \omega (\pi_1 \otimes \pi_2),
\ee
where $\pi_1, \pi_2 \in S_m$ are two permutations onto the first and last $m$ replicas, and $\omega$ is a product of disjoint swaps crossing the first and last $m$ replicas as $\omega = \prod_{j_1 \in C_1, j_2 \in C_2} \tau_{(j_1 j_2)}$ with $C_1 \subset B_1 :=\{1, \dots, m\}$ and $C_2 \subset B_2 := \{m+1, \dots, 2m\}$.
\end{lemma}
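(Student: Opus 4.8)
The plan is to prove Lemma~\ref{lem:permutation_decomp} by an explicit construction that reads off $\omega$ directly from the image of $\sigma$. Regard $S_{2m}$ as the group of bijections of $B_1\sqcup B_2$, with $B_1=\{1,\dots,m\}$ and $B_2=\{m+1,\dots,2m\}$. Given $\sigma$, set $k:=|\sigma(B_1)\cap B_2|$; since $\sigma$ is a bijection and $|B_1|=|B_2|=m$, one has $B_1\setminus\sigma(B_1)\subseteq\sigma(B_2)$, so $|\sigma(B_2)\cap B_1|=k$ as well. Define the two \emph{crossing sets} $C_1:=\sigma(B_2)\cap B_1\subseteq B_1$ and $C_2:=\sigma(B_1)\cap B_2\subseteq B_2$, which therefore both have cardinality $k$. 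Fix \emph{any} bijection $f:C_1\to C_2$ and let $\omega:=\prod_{j\in C_1}\tau_{(j\,f(j))}$; this is a product of $k$ pairwise disjoint transpositions, each exchanging an element of $B_1$ with one of $B_2$, so $\omega$ has the form stated in the lemma and is an involution. Since $f$ is arbitrary the resulting decomposition is not unique, consistent with the statement claiming only existence.

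The second step is to verify that $\omega\sigma$ preserves both blocks, whence $\omega\sigma=\pi_1\otimes\pi_2$ for suitable $\pi_1,\pi_2\in S_m$ and, using $\omega^{-1}=\omega$, $\sigma=\omega(\pi_1\otimes\pi_2)$. Take $x\in B_1$. If $\sigma(x)\in B_1$, then $\sigma(x)\in\sigma(B_1)$, while $C_1\subseteq\sigma(B_2)$ and $\sigma(B_1)\cap\sigma(B_2)=\emptyset$ by injectivity of $\sigma$; hence $\sigma(x)\notin C_1$, it lies outside the support of $\omega$, and $(\omega\sigma)(x)=\sigma(x)\in B_1$. If instead $\sigma(x)\in B_2$, then $\sigma(x)\in\sigma(B_1)\cap B_2=C_2$, which is exactly the ``$B_2$-side'' of the support of $\omega$, so $\omega$ sends it into $C_1\subseteq B_1$. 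Either way $(\omega\sigma)(x)\in B_1$, so $(\omega\sigma)(B_1)\subseteq B_1$, with equality by counting; the argument for $x\in B_2$ is symmetric, now using $C_2\subseteq\sigma(B_1)$. This finishes the construction.

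I do not anticipate a real obstacle, as the statement is elementary. The only points needing care are (i) that the crossing sets must be defined through the \emph{image} of the opposite block, $C_1=\sigma(B_2)\cap B_1$ and $C_2=\sigma(B_1)\cap B_2$, rather than through preimages (otherwise the cancellation in step two fails), and (ii) keeping the block-preservation check airtight, which rests entirely on the disjointness $\sigma(B_1)\cap\sigma(B_2)=\emptyset$. A second, non-constructive route is induction on $k$: if $k>0$, pick $x\in B_1$ with $\sigma(x)\in B_2$ and $z\in B_1\cap\sigma(B_2)$, observe that $\tau_{(\sigma(x)\,z)}\sigma$ has one fewer crossing, apply the inductive hypothesis, and check that the new transposition is disjoint from the $\omega'$ thereby produced; but the explicit construction above is cleaner and is the one I would write up.
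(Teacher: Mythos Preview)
Your proposal is correct and follows essentially the same route as the paper: both identify the crossing sets $C_1=\sigma(B_2)\cap B_1$ and $C_2=\sigma(B_1)\cap B_2$, choose an arbitrary bijection $f:C_1\to C_2$, and take $\omega$ to be the product of the corresponding disjoint transpositions. Your verification that $\omega\sigma$ preserves each block is in fact more explicit than the paper's version, which asserts the resulting block decomposition without spelling out the case analysis.
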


\begin{proof}[Proof of Lemma~\ref{lem:permutation_decomp}]
    For any permutation $\sigma \in S_{2m}$, we first partition $\sigma(\{1,\dots,2m\}) = (\sigma_1, \sigma_2)$ into two blocks of size $m$. For $i \in \{1,2\}$, let $A_i = \sigma_i \cap B_i$ be the subset of entries staying in the $i$-th permutation block. Denote $C_i = B_i \setminus A_i$. Clearly, there are $|C_1| = m - |A_1|$ entries coming from $B_2$, which implies that $m - |C_1| = |A_1|$ entries staying in $B_2$. So $|A_1| = |A_2| \leq m$ and we can choose a bijection $f : C_1 \to C_2$ such that for each $j \in C_1$, we have $f(j) \in C_2$, and vice versa. Then, $(A_1, f^{-1}(C_2))$ is a permutation $\pi_1$ of $B_1$ and $(A_2, C_2)$ is a permutation $\pi_2$ of $B_2$. Since the bijection $f$ can be represented by a product of disjoint swaps, this proves the lemma.
\end{proof}

One can do a simple cardinality counting for the sanity check on this Lemma. For the disjoint swap $\omega$ with $|C_1| = |C_2| = r$, the number of these swaps are $\binom{m}{r}^2$, and the number of $(\pi_1 \otimes \pi_2)$ is $m!^2$, therefore the number of permutations of $\{\omega(\pi_1 \otimes \pi_2)\}$ is $m!^2 \sum_{r=0}^m \binom{m}{r}^2 = (2m)!$, which is exactly the cardinality of permutation group $S_{2m}$. For convenience of notion, we denote $G_r \equiv \{\omega (\pi_1 \otimes \pi_2)| |C_1|=|C_2|=r\}$. We can further have a corollary which directly follows from Eq.~\eqref{eq:id_inner}.
\begin{corollary}
\label{cor:id_inner_decomposition}
    For any permutation $\sigma \in S_{2m}$, we have
    \be
    (\hat{\sigma}|Q_{2m})_B = \begin{cases}
        d_B^2, 
        &\text{if $\sigma \in G_K'$} \\
        d_B, &\text{otherwise}
    \end{cases},
    \ee
    where $G_K' \equiv \{\tau_K(\pi_1 \otimes \pi_2)|\pi_1, \pi_2 \in S_m\}$.
\end{corollary}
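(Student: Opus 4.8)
The plan is to obtain the Corollary as an immediate case split of the identity Eq.~\eqref{eq:id_inner}, which already establishes $(\hat{\sigma}|Q_{2m})_B = d_B + d_B(d_B-1)\,\delta_{\sigma,\,\tau_K(\pi_1\otimes\pi_2)}$, the Kronecker delta being $1$ exactly when $\sigma$ can be written as $\tau_K(\pi_1\otimes\pi_2)$ for some $\pi_1,\pi_2\in S_m$. The first step is to recognize that this decomposability condition is, by the definition $G_K' = \{\tau_K(\pi_1\otimes\pi_2) : \pi_1,\pi_2\in S_m\}$, precisely the membership statement $\sigma\in G_K'$; hence the delta is the indicator of the event $\sigma\in G_K'$.

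The second and final step is arithmetic. If $\sigma\in G_K'$, substituting $\delta=1$ into Eq.~\eqref{eq:id_inner} gives $(\hat{\sigma}|Q_{2m})_B = d_B + d_B(d_B-1) = d_B^2$; if $\sigma\notin G_K'$, the delta vanishes and $(\hat{\sigma}|Q_{2m})_B = d_B$. Every $\sigma\in S_{2m}$ falls into exactly one of these two cases, so the two branches of the claimed piecewise formula are reproduced and the argument closes.

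I do not expect a genuine obstacle, since the entire content sits inside Eq.~\eqref{eq:id_inner}. The only place that would demand care in a fully self-contained proof---one that does not cite Eq.~\eqref{eq:id_inner}---is re-deriving the structural fact behind it: the off-diagonal sum $\sum_{z_1\neq z_2}\langle z_1^{\otimes n},z_1^{\otimes K},z_2^{\otimes K},z_2^{\otimes n}|\hat{\sigma}^\dagger|z_1^{\otimes n},z_2^{\otimes K},z_1^{\otimes K},z_2^{\otimes n}\rangle$ attains its maximal value $d_B(d_B-1)$ iff $\hat{\sigma}^\dagger$ sends $|z_1^{\otimes n},z_2^{\otimes K},z_1^{\otimes K},z_2^{\otimes n}\rangle$ to $|z_1^{\otimes n},z_1^{\otimes K},z_2^{\otimes K},z_2^{\otimes n}\rangle$ for all distinct $z_1,z_2$. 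That constraint forces $\sigma$ to preserve the first $n$ and last $n$ tensor slots as sets, to interchange the two central blocks of $K$ slots, and to act arbitrarily within the two resulting size-$m$ blocks---that is, $\sigma=\tau_K(\pi_1\otimes\pi_2)$ with $\pi_1,\pi_2\in S_m$, i.e.\ $\sigma\in G_K'$. Granting Eq.~\eqref{eq:id_inner}, however, the Corollary follows in the two lines above.
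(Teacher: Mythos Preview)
Your proposal is correct and matches the paper's approach exactly: the paper states that the Corollary ``directly follows from Eq.~\eqref{eq:id_inner}'' without further elaboration, and your case split on the Kronecker delta is precisely that deduction. The supplementary third paragraph, while not needed, correctly sketches the combinatorial origin of Eq.~\eqref{eq:id_inner}.
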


Utilizing the decomposition Lemma~\ref{lem:permutation_decomp} and Corrollary~\ref{cor:id_inner_decomposition}, we can re-express the pseudo frame potential and write out a lower bound for $t \ge 2$ as
\begin{align}
    &\E_{\rm Haar}\calF^{(n,K)}_{\rm HDT}(t) = {\rm Wg}(e,2m;d)^t \sum_{\sigma_1, \dots, \sigma_t \in \left(\cup_{r=0}^m G_r\right) } \left[\prod_{i=1}^{t-1} (\hat{\sigma}_{i+1}|\hat{\sigma}_{i})_A (\hat{\sigma}_1|e)_A \right]\left[\prod_{i=1}^{t} \left(d_B + d_B(d_B-1)\delta_{\sigma_i \in G_m} \right)\right]\\
    & \ge {\rm Wg}(e,2m;d)^t \left[d_B^t  \sum_{\sigma_1, \dots, \sigma_t \in G_0} \prod_{i=1}^{t-1} (\hat{\sigma}_{i+1}|\hat{\sigma}_{i})_A (\hat{\sigma}_1|e)_A + d_B^{2t} \sum_{\sigma_1, \dots, \sigma_t \in G_K'}\prod_{i=1}^{t-1} (\hat{\sigma}_{i+1}|\hat{\sigma}_{i})_A (\hat{\sigma}_1|e)_A \nonumber\right.\\
    &\left. \qquad \qquad \qquad \qquad+ d_B^{2t-1} \sum_{\substack{\sigma_1 \in \left(\cup_{r=0}^{K} G_r\right)\setminus G_K' \\
    \sigma_2, \dots, \sigma_t \in G_K'}} \prod_{i=1}^{t-1} (\hat{\sigma}_{i+1}|\hat{\sigma}_{i})_A (\hat{\sigma}_1|e)_A\right],
    \label{eq:Fk_pseudo_HDT_lb}
\end{align}
where $G_K' \equiv \{\tau_K(\pi_1 \otimes \pi_2)|\pi_1, \pi_2 \in S_m\} \in G_K$ is a subset of $G_K$ to maximize Eq.~\eqref{eq:id_inner} to be $d_B^2$, as stated in the Corollary above.
This lower bound naturally holds as all inner products $(\hat{\sigma}_{i+1}|\hat{\sigma}_{i})_A$ are always non-negative. Before proceeding with further calculations, we provide a physical interpretation of the three terms. For the first term, as the initial boundary of calculation in Fig.~\ref{fig:diagram_HDT}d on the subsystem $A$ is $|e)_A \in G_0$, we can regard it as a dynamical term, which only exists in the HDT scheme. The second term involves the largest contribution of bath system of $d_B^{2t}$ in each step, and according to the previous calculation in DT (see Eq.~\eqref{eq:Fk_pseudo_DT_1}), it corresponds to the leading order contribution for the asymptotic frame potential. The above two terms both correspond to the 0-domain wall picture as the permutations through the temporal trajectory chain remain the same.
Finally, for the last one, we iterate $\sigma_1$ over all $(\cup_{r=0}^{K} G_r)\setminus G_K'$ while fixing the rest permutations $\sigma_2, \cdots \sigma_t$ in the set $G_K'$ only, which can be interpreted as an 1-domain wall (1-DW) picture. 

In the following, we evaluate the three terms in Eq.~\eqref{eq:Fk_pseudo_HDT_lb} separately to derive a lower bound for the pseudo frame potential.

For the first term in Eq.~\eqref{eq:Fk_pseudo_HDT_lb}, we have
\begin{align}
   & d_B^t  \sum_{\sigma_1, \dots, \sigma_t \in G_0} \prod_{i=1}^{t-1} (\hat{\sigma}_{i+1}|\hat{\sigma}_{i})_A (\hat{\sigma}_1|e)_A = d_B^t  \sum_{\sigma_1, \dots, \sigma_t \in G_0}   \prod_{i=1}^{t-1}  \tr_A(\hat{\sigma}_{i+1}^\dagger \hat{\sigma}_{i})  \tr_A(\hat{\sigma}_1^\dagger) \\
   &=  d_B^t  \sum_{\sigma_1, \dots, \sigma_{t-1} \in G_0} \tr_A(\hat{\sigma}_1^\dagger) \prod_{i=1}^{t-2} \tr_A(\hat{\sigma}_{i+1}^\dagger \hat{\sigma}_{i}) \sum_{\hat{\sigma}_t \in G_0} \tr_A(\hat{\sigma}_t^\dagger \hat{\sigma}_{t-1}) \\
   &= d_B^t  \sum_{\sigma_1, \dots, \sigma_{t-1} \in G_0} \tr_A(\hat{\sigma}_1^\dagger) \prod_{i=1}^{t-2} \tr_A(\hat{\sigma}_{i+1}^\dagger \hat{\sigma}_{i}) \sum_{\pi_t, \pi_t' \in S_m} \tr_A\left(\left(\hat{\pi}_t^\dagger  \otimes \hat{\pi}_t'^\dagger\right) \left(\hat{\pi}_{t-1} \otimes \hat{\pi}_{t-1}\right)\right) \\
    &= d_B^t  \sum_{\sigma_1, \dots, \sigma_{t-1} \in G_0} \tr_A(\hat{\sigma}_1^\dagger) \prod_{i=1}^{t-2} \tr_A(\hat{\sigma}_{i+1}^\dagger \hat{\sigma}_{i}) \sum_{\tilde{\pi}_t, \tilde{\pi}_t' \in S_m} \tr_A\left(\hat{\tilde{\pi}}_t\otimes \hat{\tilde{\pi}}_t'\right) \\
    &= d_B^t  \sum_{\sigma_1, \dots, \sigma_{t-1} \in G_0} \tr_A(\hat{\sigma}_1^\dagger) \prod_{i=1}^{t-2} \tr_A(\hat{\sigma}_{i+1}^\dagger \hat{\sigma}_{i}) \left(\frac{(m+d_A-1)!}{(d_A-1)!}\right)^2 \tr_A\left(\rho_{\rm Haar}^{(m)}\right)^2 \\
    &\quad \vdots \nonumber \\
    &= d_B^t   \left[\left(\frac{(m+d_A-1)!}{(d_A-1)!}\right)^2 \tr_A\left(\rho_{\rm Haar}^{(m)}\right)^2\right]^t\\
    &= d_B^t   \left(\frac{(m+d_A-1)!}{(d_A-1)!}\right)^{2t},
    \label{eq:Fk_pseudo_HDT_lb_1}
\end{align}
where in the third line we apply the definition of $G_0$ and in the forth line we utilize the property that permutation group is closed. In the fifth line, we utilize the definition of Haar state moment in Eq.~\eqref{eq:K_moment_haar}. The vertical dots represent repeating the above process over the rest permutations from $\sigma_{t-1}$ to $\sigma_1$. 

For the second term in Eq.~\eqref{eq:Fk_pseudo_HDT_lb}, we have
\begin{align}
    &d_B^{2t} \sum_{\sigma_1, \dots, \sigma_t \in G_K'}\prod_{i=1}^{t-1} (\hat{\sigma}_{i+1}|\hat{\sigma}_{i})_A (\hat{\sigma}_1|e)_A = d_B^{2t} \sum_{\sigma_1, \dots, \sigma_t \in G_K'}\prod_{i=1}^{t-1} \tr_A(\hat{\sigma}_{i+1}^\dagger \hat{\sigma}_{i}) \tr_A(\hat{\sigma}_1^\dagger) \\
    &= d_B^{2t} \sum_{\sigma_1, \dots, \sigma_{t-1} \in G_K'}\tr_A(\hat{\sigma}_1^\dagger) \prod_{i=1}^{t-2} \tr_A(\hat{\sigma}_{i+1}^\dagger \hat{\sigma}_{i}) \sum_{\sigma_t \in G_K'} \tr_A(\hat{\sigma}_t^\dagger \hat{\sigma}_{t-1})\\
    &= d_B^{2t} \sum_{\sigma_1, \dots, \sigma_{t-1} \in G_K'}\tr_A(\hat{\sigma}_1^\dagger) \prod_{i=1}^{t-2} \tr_A(\hat{\sigma}_{i+1}^\dagger \hat{\sigma}_{i}) \sum_{\pi_t, \pi'_t \in S_m} \tr_A\left(\left(\hat{\pi}_t^\dagger \otimes \hat{\pi}_t'^\dagger\right) \hat{\tau}_K^\dagger \hat{\tau}_K \left(\hat{\pi}_{t-1} \otimes \hat{\pi}'_{t-1}\right)\right)\\
    &= d_B^{2t} \sum_{\sigma_1, \dots, \sigma_{t-1} \in G_K'}\tr_A(\hat{\sigma}_1^\dagger) \prod_{i=1}^{t-2} \tr_A(\hat{\sigma}_{i+1}^\dagger \hat{\sigma}_{i}) \sum_{\tilde{\pi}_t, \tilde{\pi}'_t \in S_m} \tr_A\left(\hat{\tilde{\pi}}_{t} \otimes \hat{\tilde{\pi}}'_{t}\right)\\
    &= d_B^{2t} \sum_{\sigma_1, \dots, \sigma_{t-1} \in G_K'}\tr_A(\hat{\sigma}_1^\dagger) \prod_{i=1}^{t-2} \tr_A(\hat{\sigma}_{i+1}^\dagger \hat{\sigma}_{i}) \left(\frac{(m+d_A-1)!}{(d_A-1)!}\right)^2 \tr_A\left(\rho_{\rm Haar}^{(m)}\right)^2 \\
    &\quad \vdots \nonumber\\
    &= d_B^{2t} \sum_{\sigma_1 \in  G_K'}\tr_A(\hat{\sigma}_1^\dagger) \left(\frac{(m+d_A-1)!}{(d_A-1)!}\right)^{2(t-1)} \tr_A\left(\rho_{\rm Haar}^{(m)}\right)^{2(t-1)} \\
    &= d_B^{2t}  \left(\frac{(m+d_A-1)!}{(d_A-1)!}\right)^{2} \calF_{\rm Haar}^{(K)} \left(\frac{(m+d_A-1)!}{(d_A-1)!}\right)^{2(t-1)} \tr_A\left(\rho_{\rm Haar}^{(m)}\right)^{2(t-1)} \\
    &= d_B^{2t} \left(\frac{(m+d_A-1)!}{(d_A-1)!}\right)^{2t} \calF_{\rm Haar}^{(K)},
    \label{eq:Fk_pseudo_HDT_lb_2}
\end{align}
where in the third line we apply the definition of $G_K'$ and in the forth line we utilize the property that permutation group is closed. In the fifth line, we utilize the definition of Haar state moment in Eq.~\eqref{eq:K_moment_haar}. The vertical dots represent repeating the above process over the rest permutations from $\sigma_{t-1}$ to $\sigma_2$. In the second to last line, we utilize the result already derived in Eq.~\eqref{eq:S_m2_trace}.

Finally, for the last term (1-DW approximation) in Eq.~\eqref{eq:Fk_pseudo_HDT_lb}, we have
\begin{align}
    &d_B^{2t-1} \sum_{\substack{\sigma_1 \in \left(\cup_{r=0}^{K} G_r\right)\setminus G_K' \\
    \sigma_2, \dots, \sigma_t \in G_K'}} \prod_{i=1}^{t-1} (\hat{\sigma}_{i+1}|\hat{\sigma}_{i})_A (\hat{\sigma}_1|e)_A = d_B^{2t-1} \sum_{\substack{\sigma_1 \in \left(\cup_{r=0}^{K} G_r\right)\setminus G_K' \\
    \sigma_2, \dots, \sigma_t \in G_K'}} \prod_{i=1}^{t-1} \tr_A(\hat{\sigma}_{i+1}^\dagger\hat{\sigma}_{i}) \tr_A(\hat{\sigma}_1^\dagger)\\
    &= d_B^{2t-1} \sum_{\substack{\sigma_1 \in \left(\cup_{r=0}^{K} G_r\right)\setminus G_K' \\
    \sigma_2 \in G_K'}}
     \tr_A(\hat{\sigma}_1^\dagger)  \tr_A(\hat{\sigma}_2^\dagger \hat{\sigma}_1) \sum_{\sigma_3, \dots, \sigma_{t} \in G_K'}\prod_{i=2}^{t-1} \tr_A(\hat{\sigma}_{i+1}^\dagger\hat{\sigma}_{i}) \\
    &= d_B^{2t-1}  \left(\frac{(m+d_A-1)!}{(d_A-1)!}\right)^{2(t-2)} \tr_A\left(\rho_{\rm Haar}^{(m)}\right)^{2(t-2)} \sum_{\substack{\sigma_1 \in \left(\cup_{r=0}^{K} G_r\right) \setminus G_K' \\
    \sigma_2 \in G_K'}} \tr_A(\hat{\sigma}_1^\dagger) \tr_A(\hat{\sigma}_2^\dagger \hat{\sigma}_1) \\
    &= d_B^{2t-1}  \left(\frac{(m+d_A-1)!}{(d_A-1)!}\right)^{2(t-2)} \left[ \left(\frac{(m+d_A-1)!}{(d_A-1)!}\right)^{4} \calF_{\rm Haar}^{(K)} +  \sum_{\substack{\sigma_1 \in \left(\cup_{r=1}^{K} G_r\right) \setminus G_K' \\
    \sigma_2 \in G_K'}} \tr_A(\hat{\sigma}_1^\dagger)\tr_A(\hat{\sigma}_2^\dagger \hat{\sigma}_1)\right],
    \label{eq:Fk_pseudo_HDT_lb_3}
\end{align}
where in the second line we utilize the result through derivation of Eq.~\eqref{eq:Fk_pseudo_HDT_lb_2}. The first term in the last line corresponds to $\sigma_1 \in G_0, \sigma_2 \in G_K'$, which 
can be solved by utilizing the closed group property again and Eq.~\eqref{eq:S_m2_trace}.
Now we are left with a summation, and we first calculate $\sigma_1 \in G_1, \sigma_2 \in G_K'$ to provide insight into the full summation. 
For convenience, we divide the total $2m = 2(n+K)$ replicas into four sets as $D_1, D_2, D_3, D_4$ to denote indices in the corresponding range of $[1, n], [n+1, m], [m+1, m+K], [m+K+1,2m]$. 

For $\sigma_1 \in G_1$, we have
\begin{align}
    &\sum_{\substack{\sigma_1 \in G_1 \\
    \sigma_2 \in G_K'}} \tr_A(\hat{\sigma}_1^\dagger)\tr_A(\hat{\sigma}_2^\dagger \hat{\sigma}_1) = \sum_{\omega}\sum_{\pi_1, \pi_1', \pi_2, \pi_2'}  \tr_A\left(\left(\hat{\pi}_1^\dagger \otimes \hat{\pi}_1'^\dagger\right)\hat{\omega}^\dagger\right) \tr_A\left(\left(\hat{\pi}_2^\dagger \otimes \hat{\pi}_2'^\dagger\right) \hat{\tau}_K^\dagger \hat{\omega} \left(\hat{\pi}_1 \otimes \hat{\pi}_1'\right)\right) \\
    &= \left(\frac{(m+d_A-1)!}{(d_A-1)!}\right)^{2} \sum_{\omega} \sum_{\pi_1, \pi_1' \in S_m} \tr_A\left(\left(\hat{\pi}_1^\dagger \otimes \hat{\pi}_1'^\dagger\right)\hat{\omega}^\dagger\right) \tr_A\left(\hat{\omega} \rho_{\rm Haar}^{(m)}{}^{\otimes 2} \hat{\tau}^\dagger_K\right)\\
    &= \left(\frac{(m+d_A-1)!}{(d_A-1)!}\right)^{4} \left(\sum_{D_1\times D_3} + \sum_{D_2 \times D_4} + \sum_{D_1 \times D_4} + \sum_{D_2 \times D_3}\right) \tr_A\left(\rho_{\rm Haar}^{(m)}{}^{\otimes 2}\hat{\omega}^\dagger\right) \tr_A\left(\hat{\omega} \rho_{\rm Haar}^{(m)}{}^{\otimes 2} \hat{\tau}^\dagger_K\right),
    \label{eq:perm_sum_G1}
\end{align}
where in the first line we apply the definition of $G_1$ and in the last line the summation $D_i \times D_j$ represent the case where the swap in $\omega$ exchanges replicas from $D_i$ and $D_j$. For $D_1 \times D_3$, it becomes
\begin{align}
    &\sum_{D_1 \times D_3} \tr_A\left( \rho_{\rm Haar}^{(m)}{}^{\otimes 2}\hat{\omega}^\dagger\right) \tr_A\left(\hat{\omega} \rho_{\rm Haar}^{(m)}{}^{\otimes 2} \hat{\tau}^\dagger_K\right) \nonumber\\
    &= \sum_{D_1 \times D_3} \int_{\rm Haar} d\psi d\phi \tr_A\left( \ketbra{\psi^{\otimes m}, \phi^{\otimes m}}{\psi^{\otimes m}, \phi^{\otimes m}} \hat{\omega}^\dagger\right)
    \int_{\rm Haar} d\psi d\phi \tr_A\left(\hat{\omega} \ketbra{\psi^{\otimes m}, \phi^{\otimes m}}{\psi^{\otimes m}, \phi^{\otimes m}}\hat{\tau}_K^\dagger \right)\\
    &=\sum_{D_1 \times D_3} \calF_{\rm Haar}^{(1)} \calF_{\rm Haar}^{(K)} = \frac{n K}{d_A} \calF_{\rm Haar}^{(K)}.
\end{align}
Similarly, one can check that 
\be
    \sum_{D_2 \times D_4} \tr_A\left( \rho_{\rm Haar}^{(m)}{}^{\otimes 2}\hat{\omega}^\dagger\right) \tr_A\left(\hat{\omega} \rho_{\rm Haar}^{(m)}{}^{\otimes 2} \hat{\tau}^\dagger_K\right) = \frac{n K}{d_A} \calF_{\rm Haar}^{(K)}.
\ee
Next, for $D_1 \times D_4$, we have
\begin{align}
    &\sum_{D_1 \times D_4} \tr_A\left( \rho_{\rm Haar}^{(m)}{}^{\otimes 2}\hat{\omega}^\dagger\right) \tr_A\left(\hat{\omega} \rho_{\rm Haar}^{(m)}{}^{\otimes 2} \hat{\tau}^\dagger_K\right) \nonumber\\
    &= \sum_{D_1 \times D_4} \int_{\rm Haar} d\psi d\phi \tr_A\left( \ketbra{\psi^{\otimes m}, \phi^{\otimes m}}{\psi^{\otimes m}, \phi^{\otimes m}} \hat{\omega}^\dagger\right)
    \int_{\rm Haar} d\psi d\phi \tr_A\left(\hat{\omega} \ketbra{\psi^{\otimes m}, \phi^{\otimes m}}{\psi^{\otimes m}, \phi^{\otimes m}}\hat{\tau}_K^\dagger \right)\\
    &=\sum_{D_1 \times D_4} \calF_{\rm Haar}^{(1)} \calF_{\rm Haar}^{(K+1)} = \frac{n^2 (K+1)}{d_A (d_A + K)} \calF_{\rm Haar}^{(K)},
\end{align}
and the last one with $D_2 \times D_3$, 
\begin{align}
    &\sum_{D_2 \times D_3} \tr_A\left( \rho_{\rm Haar}^{(m)}{}^{\otimes 2}\hat{\omega}^\dagger\right) \tr_A\left(\hat{\omega} \rho_{\rm Haar}^{(m)}{}^{\otimes 2} \hat{\tau}^\dagger_K\right) \nonumber\\
    &= \sum_{D_2 \times D_3} \int_{\rm Haar} d\psi d\phi \tr_A\left( \ketbra{\psi^{\otimes m}, \phi^{\otimes m}}{\psi^{\otimes m}, \phi^{\otimes m}} \hat{\omega}^\dagger\right)
    \int_{\rm Haar} d\psi d\phi \tr_A\left(\hat{\omega} \ketbra{\psi^{\otimes m}, \phi^{\otimes m}}{\psi^{\otimes m}, \phi^{\otimes m}}\hat{\tau}_K^\dagger \right)\\
    &=\sum_{D_2 \times D_3} \calF_{\rm Haar}^{(1)} \calF_{\rm Haar}^{(K-1)}\\
    &=\frac{K^2}{d_A}\calF_{\rm Haar}^{(K-1)}
    = \frac{K(d_A+K-1)}{d_A} \calF_{\rm Haar}^{(K)}.
\end{align}
Combining them together, we obtain the summation for $\sigma_1 \in G_1$ in Eq.~\eqref{eq:perm_sum_G1} as
\begin{align}
    \sum_{\substack{\sigma_1 \in G_1 \\
    \sigma_2 \in G_K'}}\tr_A(\hat{\sigma}_1^\dagger)\tr_A(\hat{\sigma}_2^\dagger \hat{\sigma}_1) &=  \left(\frac{(m+d_A-1)!}{(d_A-1)!}\right)^{4} \left(\sum_{D_1\times D_3} + \sum_{D_2 \times D_4} + \sum_{D_1 \times D_4} + \sum_{D_2 \times D_3}\right) \tr_A\left(\rho_{\rm Haar}^{(m)}{}^{\otimes 2}\hat{\omega}^\dagger\right) \tr_A\left(\hat{\omega} \rho_{\rm Haar}^{(m)}{}^{\otimes 2} \hat{\tau}^\dagger_K\right) \nonumber\\
    &= \left(\frac{(m+d_A-1)!}{(d_A-1)!}\right)^{4} \left(\frac{K(d_A+K-1)}{d_A} + \frac{2 n K}{d_A} + \frac{n^2(K+1)}{d_A(d_A + K)} \right)\calF_{\rm Haar}^{(K)}\\
    &= \left(\frac{(m+d_A-1)!}{(d_A-1)!}\right)^{4} \left(K + \calO\left(\frac{K(K-1+2n)}{d_A}\right) \right)\calF_{\rm Haar}^{(K)}. 
\end{align}
Therefore, we see that only the case $\omega 
$ involves swaps to exchange replicas from $D_2$ and $D_3$ contributes in the leading order of $d_A$ because $\tr_A(\hat{\sigma}_1^\dagger)$ becomes additional frame potential and it can only be compensated by reducing the equal number of swaps in $\tau_K$. Following the same spirit, we can write out the summation over $\sigma_1 \in G_r$ for arbitrary $r\in [1, K-1]$ as
\begin{align}
   \sum_{\substack{\sigma_1 \in G_r \\
    \sigma_2 \in G_K'}} \tr_A(\hat{\sigma}_1^\dagger)\tr_A(\hat{\sigma}_2^\dagger \hat{\sigma}_1) &= \left(\frac{(m+d_A-1)!}{(d_A-1)!}\right)^{4}\left(\binom{K}{r}^2 \calF_{\rm Haar}^{(r)}\calF_{\rm Haar}^{(K-r)} + \calO\left(\frac{{\rm poly}(n, K)}{d_A}\right)\calF_{\rm Haar}^{(K)} \right)\\
    &= \left(\frac{(m+d_A-1)!}{(d_A-1)!}\right)^{4} \left(\binom{K}{r}\frac{(d_A-1)!(d_A+K-1)!}{(d_A+K-1-r)!(d_A-1+r)!} + \calO\left(\frac{{\rm poly}(n,K)}{d_A}\right)\right)\calF_{\rm Haar}^{(K)}.
    \label{eq:perm_sum_Gr_small}
\end{align}
Similarly, for $\sigma_1 \in G_K \setminus G_K'$, we have
\begin{align}
    \sum_{\substack{\sigma_1 \in G_K\setminus G_K' \\
    \sigma_2 \in G_K'}} \tr_A(\hat{\sigma}_1^\dagger)\tr_A(\hat{\sigma}_2^\dagger \hat{\sigma}_1) &= \left(\frac{(m+d_A-1)!}{(d_A-1)!}\right)^{4} \sum_\omega \tr_A\left(\rho_{\rm Haar}^{(m)}{}^{\otimes 2}\hat{\omega}^\dagger\right) \tr_A\left(\hat{\omega} \rho_{\rm Haar}^{(m)}{}^{\otimes 2} \hat{\tau}^\dagger_K\right)\\
    &= \left(\frac{(m+d_A-1)!}{(d_A-1)!}\right)^{4} \calF_{\rm Haar}^{(K)} \sum_\omega \tr_A\left(\hat{\omega} \rho_{\rm Haar}^{(m)}{}^{\otimes 2} \hat{\tau}^\dagger_K\right).
\end{align}
The leading order contribution of the above summation requires $\omega$ with $K-1$ disjoint swaps across $B_2$ and $B_3$ to compensate the effect of swaps from $\hat{\tau}_K^\dagger$ and an additional swap across $B_1\times B_3, B_2 \times B_4, B_1 \times B_4$. Furthermore, one can find that for either $B_1\times B_3$ or $B_2 \times B_4$, $\sum_\omega \tr_A\left(\hat{\omega} \rho_{\rm Haar}^{(m)}{}^{\otimes 2} \hat{\tau}^\dagger_K\right) = nK^2\calF_{\rm Haar}^{(1)}$ while for $B_1 \times B_4$, the summation becomes $n^2 K^2 \calF_{\rm Haar}^{(2)}$. Therefore we can conclude that
\be
    \sum_{\substack{\sigma_1 \in G_K\setminus G_K' \\
    \sigma_2 \in G_K'}} \tr_A(\hat{\sigma}_1^\dagger)\tr_A(\hat{\sigma}_2^\dagger \hat{\sigma}_1) = \left(\frac{(m+d_A-1)!}{(d_A-1)!}\right)^{4} \calO\left(\frac{{\rm poly}(n,K)}{d_A}\right) \calF_{\rm Haar}^{(K)},
\ee
which is a higher order term compared to the one Eq.~\eqref{eq:perm_sum_Gr_small} for arbitrary $r\in [1, m-1]$ thus can be omitted in following calculation.
Therefore, we can calculate Eq.~\eqref{eq:Fk_pseudo_HDT_lb_3} as
\begin{align}
    &d_B^{2t-1} \sum_{\substack{\sigma_1 \in \left(\cup_{r=0}^{K} G_r\right)\setminus G_K' \\
    \sigma_2, \dots, \sigma_t \in G_K'}} \prod_{i=1}^{t-1} (\hat{\sigma}_{i+1}|\hat{\sigma}_{i})_A (\hat{\sigma}_1|e)_A \nonumber\\
    &= d_B^{2t-1}  \left(\frac{(m+d_A-1)!}{(d_A-1)!}\right)^{2(t-2)} \left[ \left(\frac{(m+d_A-1)!}{(d_A-1)!}\right)^{4} \calF_{\rm Haar}^{(K)} +  \sum_{\substack{\sigma_1 \in \left(\cup_{r=1}^{K} G_r\right) \setminus G_K' \\
    \sigma_2 \in G_K'}} \tr_A(\hat{\sigma}_1^\dagger)\tr_A(\hat{\sigma}_2^\dagger \hat{\sigma}_1)\right] \nonumber\\
    &= d_B^{2t-1}  \left(\frac{(m+d_A-1)!}{(d_A-1)!}\right)^{2(t-2)} \left(\frac{(m+d_A-1)!}{(d_A-1)!}\right)^{4}\left[1 + \sum_{r=1}^{K-1} \binom{K}{r} \frac{(d_A-1)!(d_A+K-1)!}{(d_A+K-1-r)!(d_A-1+r)!} + \calO\left(\frac{{\rm poly}(n, K)}{d_A}\right)  \right]  \calF_{\rm Haar}^{(K)}\\
    &= d_B^{2t-1}  \left(\frac{(m+d_A-1)!}{(d_A-1)!}\right)^{2t} \left[1 + \frac{(d_A-1)! (2d_A+2K-2)!}{(d_A+K-1)! (2 d_A+K-2)!}-2 + \calO\left(\frac{{\rm poly}(n, K)}{d_A}\right)  \right]  \calF_{\rm Haar}^{(K)}\\
    &= d_B^{2t-1}  \left(\frac{(m+d_A-1)!}{(d_A-1)!}\right)^{2t} \left(\frac{(d_A-1)! (2d_A+2K-2)!}{(d_A+K-1)! (2 d_A+K-2)!}-1 + \calO\left(\frac{{\rm poly}(n, K)}{d_A}\right)  \right)  \calF_{\rm Haar}^{(K)},
    \label{eq:Fk_pseudo_HDT_lb_3sol}
\end{align}
where in the third line we apply the result in Eq.~\eqref{eq:perm_sum_Gr_small}. 

Finally, we can combine Eq.~\eqref{eq:Fk_pseudo_HDT_lb_1}, \eqref{eq:Fk_pseudo_HDT_lb_2} and \eqref{eq:Fk_pseudo_HDT_lb_3sol} to obtain the lower bound of pseudo frame potential (Eq.~\eqref{eq:Fk_pseudo_HDT_lb}) in 1-DW picture as
\begin{align}
    &\E_{\rm Haar}\calF_{\rm HDT}^{(n,K)}(t) \nonumber\\
    &\ge {\rm Wg}(e,2m;d)^t \left[d_B^t   \left(\frac{(m+d_A-1)!}{(d_A-1)!}\right)^{2t} + d_B^{2t} \left(\frac{(m+d_A-1)!}{(d_A-1)!}\right)^{2t} \calF_{\rm Haar}^{(K)} \nonumber \right.\\
    &\left. \qquad \qquad \qquad \qquad +d_B^{2t-1}  \left(\frac{(m+d_A-1)!}{(d_A-1)!}\right)^{2t} \left(\frac{(d_A-1)! (2d_A+2K-2)!}{(d_A+K-1)! (2 d_A+K-2)!}-1 + \calO\left(\frac{{\rm poly}(n, K)}{d_A}\right)  \right)  \calF_{\rm Haar}^{(K)}\right]\\
    &= \frac{1}{d_A^{2t}} \left(\frac{(m+d_A-1)!}{(d_A-1)!}\right)^{2t} \left[\frac{1}{d_B^t} + \calF_{\rm Haar}^{(K)} + \frac{1}{d_B}\left(\frac{(d_A-1)! (2d_A+2K-2)!}{(d_A+K-1)! (2 d_A+K-2)!}-1 + \calO\left(\frac{{\rm poly}(n,K)}{d_A}\right)\right)\calF_{\rm Haar}^{(K)}\right].
\end{align}
Take the replica trick $n \to 1-K, m \to K$, we obtain the lower bound for $K$-th order frame potential in HDT as
\be
    \E_{\rm Haar}\calF_{\rm HDT}^{(K)}(t) \ge \frac{1}{d_B^t} + \calF_{\rm Haar}^{(K)} + \frac{1}{d_B}\left(\frac{(d_A-1)! (2d_A+2K-2)!}{(d_A+K-1)! (2 d_A+K-2)!}-1 + \calO\left(\frac{{\rm poly}(K)}{d_A}\right)\right)\calF_{\rm Haar}^{(K)}.
    \label{eq:Fk_HDT_lb}
\ee
In particular, in the thermodynamic limit of $d_A \to \infty$, this lower bound can be reduced to
\be
   \E_{\rm Haar}\calF_{\rm HDT}^{(K)}(t) \ge \frac{1}{d_B^t} + \left(1 + \frac{2^K-1}{d_B} + \calO\left(\frac{{\rm exp}(K)}{d_A d_B}\right)\right)\calF_{\rm Haar}^{(K)},
\ee
which is Ineq.~\ref{high_order_F} presented in the main text.

The minimum number of ancilla in HDT to achieve $\epsilon$-approximate $K$-design ($\calF_{\rm DT}^{(K)} \le (1+\epsilon)\calF_{\rm Haar}^{(K)}$) is
\begin{align}
    N_B \ge \log_2\left(\frac{2^K-1}{\epsilon}\right) = K + \log_2(1/\epsilon) + \log_2\left(1-2^{-K}\right).
\end{align}

\section{Average mutual information in deep thermalization}
\label{app:mutual_info}

In this section, we derive the lower bound for the measurement-averaged mutual information in regular DT. We begin by connecting the average mutual information to average purity as follows.
\begin{align}
    \E_{\bfz}I(R:A_{\rm out}|{\bfz}) &\equiv \E_{\bfz} \left[S(\rho_{R|\bfz}) + S(\rho_{A_{\rm out}|\bfz}) - S(\rho_{RA_{\rm out}|\bfz})\right] \label{eq:avgMI_supp1}\\
    &= \E_{\bfz} 2S(\rho_{A_{\rm out}|\bfz})  \label{eq:avgMI_supp2}\\
    & \ge \E_{\bfz} 2S_2(\rho_{A_{\rm out}|\bfz}) = -2\E_{\bfz} \log_2\left(\tr(\rho_{A_{\rm out}|\bfz}^2)\right) \label{eq:avgMI_supp3}\\
    & \ge -2\log_2\left(\E_{\bfz} \tr(\rho_{A_{\rm out}|\bfz}^2)\right), \label{eq:avgMI_supp4}
\end{align}
where in Eq.~\eqref{eq:avgMI_supp1} $S(\cdot)$ is the von Neumann entanglement entropy and $\rho_{A_{\rm out}|\bfz} = \tr_{R}(\ketbra{\psi_{\bfz}}{\psi_{\bfz}}_{R A_{\rm out}})$ is the reduced state and similarly for the other one. Eq.~\eqref{eq:avgMI_supp2} follows the fact that the conditional output state $\rho_{RA_{\rm out}|\bfz} = \ketbra{\psi_{\bfz}}{\psi_{\bfz}}_{RA_{\rm out}}$ is a pure state. We utilize the monotonicity in R\'enyi entropy to obtain
Ineq.~\eqref{eq:avgMI_supp3}. The last inequality of Eq.~\eqref{eq:avgMI_supp4} originates from the concavity of the logarithmic function. Due to the same reason, we can further lower bound the Haar-averaged of Eq.~\eqref{eq:avgMI_supp4} as
\be
\E_\text{Haar}\E_{\bfz}I(R:A_{\rm out}|{\bfz}) \ge -2\E_\text{Haar}\log_2\left(\E_{\bfz} \tr(\rho_{A_{\rm out}|\bfz}^2)\right) \ge -2\log_2\left(\E_\text{Haar}\E_{\bfz} \tr(\rho_{A_{\rm out}|\bfz}^2)\right).
\ee

Now we evaluate the quantity of interest $\E_\text{Haar}\E_{\bfz} \tr(\rho_{A_{\rm out}|\bfz}^2)$. Through the global unitary $U$, the output pre-measurement state is
\be
\ket{\psi}_{RA_{\rm out} B} = (\bI_{R}\otimes U)\ket{\Phi}_{RA_\textup{in}}\ket{0}_B,
\ee
where $\ket{\Phi}_{RA_\text{in}} = \frac{1}{\sqrt{d_A}}\sum_{i=0}^{d_A-1} \ket{i}_R\ket{i}_{A_{\rm out}}$ is the maximally entangled state between $R$ and $A_\text{in}$. Here the unitary is only applied on systems of $A_\text{in}B$ and $R$ serves as a reference system for tracking the initial quantum information, as illustrated in Fig.~\ref{fig:MI}a1 of the main text. Via the projective measurement on the ancilla $B$, the conditional post-measurement state is
\be
    \ket{\psi_{\bfz}}_{RA_{\rm out}} = {}_B\braket{\bfz|\psi_\bfz}_{RA_{\rm out} B}/\sqrt{p_\bfz},
\ee
where $p_\bfz$ is the corresponding measurement probability
\be
    p_\bfz = \braket{\psi_\bfz|\psi_\bfz}_{RA_{\rm out}} = \tr\left(U\left(\ketbra{\Phi}{\Phi}_{RA_{\rm in}} \otimes\ketbra{0}{0}_B\right) U^\dagger \ketbra{\bfz}{\bfz}_B\right). 
\ee
The reduced state $\rho_{A_{\rm out}}$ is thus
\begin{align}
\rho_{A_{\rm out}|\bfz} &= \tr_{A_{\rm out}}(\ketbra{\psi_\bfz}{\psi_\bfz}_{RA_{\rm out}})\\
&= p_\bfz^{-1}\tr_{R B}(U \left(\ketbra{\Phi}{\Phi}_{RA_{\rm in}} \otimes\ketbra{0}{0}_B\right) U^\dagger \ketbra{\bfz}{\bfz}_B)\\
&= p_\bfz^{-1} \tr_{B}\left(U \left(\frac{\bI_{A_{\rm in}}}{d_A} \otimes \ketbra{0}{0}_B\right) U^\dagger \ketbra{\bfz}{\bfz}_B \right) \equiv p_\bfz^{-1} \tilde{\rho}_{A_{\rm out}|\bfz},
\end{align}
where we define $\tilde{\rho}_{A_{\rm out}|\bfz}$ to be the unnormalized state for convenience, and thus the measurement probability is simply $p_\bfz = \tr(\tilde{\rho}_{A_{\rm out}|\bfz})$. With these definitions, we now compute the average purity given by
\be
\gamma \equiv \E_{\bfz}\tr(\rho_{A_{\rm out}|\bfz}^2) =  \sum_\bfz p_\bfz^{-1} \tr(\tilde{\rho}_{A_{\rm out}|\bfz}^2)
= \sum_\bfz \left[\tr(\tilde{\rho}_{A_{\rm out}|\bfz})^{-1}\tr(\tilde{\rho}_{A_{\rm out}|\bfz}^{\otimes 2} \hat{\tau}_1)\right],
\ee
where $\hat{\tau}_1$ is the swap operator between the two replicas as we have seen in Appendix \ref{app:F_dynamics}. We utilize the replica trick and first consider the pseudo purity $\gamma^{(n)}$ as
\begin{align}
    \gamma^{(n)} &\equiv \sum_\bfz \tr(\tilde{\rho}_{A_{\rm out}|\bfz})^n \tr(\tilde{\rho}_{A_{\rm out}|\bfz}^{\otimes 2} \hat{\tau}_1)\\
    &= \sum_\bfz \tr(\tilde{\rho}_{A_{\rm out}|\bfz}^{\otimes (n+2)} \bI_n \otimes \hat{\tau}_1)\\
    &= \sum_\bfz \tr(\tr_{B}\left(U \left(\frac{\bI}{d_A} \otimes\ketbra{0}{0}_B\right) U^\dagger \ketbra{\bfz}{\bfz}_B \right)^{\otimes (n+2)} \bI_n \otimes \hat{\tau}_1 )\\
    &= \sum_\bfz \tr(U^{\otimes (n+2)} \left(\frac{\bI^{\otimes (n+2)}}{d_A^{(n+2)}} \otimes \ketbra{0}{0}_B^{\otimes (n+2)}\right) U^\dagger{}^{\otimes (m+2)} \left(\bI_n \otimes \hat{\tau}_1\right)_{A_{\rm out}} \otimes \ketbra{\bfz}{\bfz}_B^{\otimes (n+2)})\\
    &= \frac{1}{d_A^{n+2}} {}_{A}(\hat{e}| {}_B(0|\left[U^{\otimes (n+2)} \otimes U^*{}^{\otimes (n+2)}\right] |\hat{e}_n\hat{\tau}_1)_{A} |\hat{H}_{n+2})_B,
    \label{eq:avgMI_supp5}
\end{align}
where we define $\hat{H}_{n+2} \equiv \sum_\bfz \ketbra{\bfz}{\bfz}^{\otimes (n+2)}$, and utilize the doubled Hilbert space representation in the last line. Here we utilize $\hat{e}_n$ to represent identity operator on the $n$ replicas.

As we have seen from Eq.~\eqref{eq:avgMI_supp5}, the boundary condition is $\hat{e}_n\otimes \hat{\tau}_1$ and $\hat{H}_{n+2}$ on subsystem $A$ and $B$ separately. 
The ensemble average of $\gamma^{(n)}$ can be written as utilizing Eq.~\eqref{eq:haar_integral}
\begin{align}
    \E_\text{Haar}\left[\gamma^{(n)}\right] &= \frac{1}{d_A^{n+2}} {}_A(\hat{e}| {}_B(0| \E_\text{Haar} \left[U^{\otimes (n+2)} \otimes U^*{}^{\otimes (n+2)}\right]|\hat{e}_n\hat{\tau}_1)_A|\hat{H}_{n+2})_B\\
    &= \frac{1}{d_A^{n+2}} {}_A(\hat{e}| {}_B(0| \left({\rm Wg}(e, n+2; d) \sum_{\sigma} |\hat{\sigma})(\hat{\sigma}| + \sum_{\sigma \neq \pi} {\rm Wg}(\sigma^{-1}\pi, n+2) |\hat{\sigma})(\hat{\pi}|\right) |\hat{e}_n\hat{\tau}_1)_A|\hat{H}_{n+2})_B. \label{eq:avgMI_supp7}
\end{align}
For the first term in Eq.~\eqref{eq:avgMI_supp7}, we have
\begin{align}
    &{}_A(\hat{e}| {}_B(0| {\rm Wg}(e, n+2; d) \sum_{\sigma \in S_{n+2}}  |\hat{\sigma})_{AB} (\hat{\sigma}|\hat{e}_n\hat{\tau}_1)_A (\hat{\sigma}|\hat{H}_{n+2})_B = {\rm Wg}(e, n+2;d) \sum_{\sigma \in S_{n+2}} \tr_A(\hat{\sigma}) \tr_A(\hat{\sigma}^\dagger \left(\bI_n\otimes \hat{\tau}_1\right)) d_B \label{eq:avgMI_supp8}\\
    &= {\rm Wg}(e, n+2;d) d_B \sum_{\sigma \in S_{n+2}} d_A^{|\sigma| + |\sigma^{-1}(e_n\otimes \tau_1)|}\\
    &= {\rm Wg}(e, n+2;d) d_B \left[d_A^{|e|+|e_n\otimes \tau_1|} + d_A^{|e_n\otimes \tau_1| + |e|} + \dots \right]\\
    &= {\rm Wg}(e, n+2;d) d_B \left(2 d_A^{2n+3} + \calO\left(d_A^{2n+1}\right)\right) = \frac{d_A^{n+1}}{d_B^{n+1}}\left(2 + \calO\left(\frac{1}{d_A^2}\right)\right),
\end{align}
where in the first equation we utilize the fact that $(\hat{\sigma}|\hat{H}_{n+2}) = d_B, \forall \sigma \in S_{n+2}$, and $|\sigma|$ is again the number of cycles of permutation $\sigma$.  In the last equation we utilize ${\rm Wg}(e,n+2;d) = 1/d^{n+2}$ in the thermodynamic limit of $d_A \to \infty$ thus $d \to \infty$.
% The first two leading order term in the summation above is
% \be
%     \sum_{\sigma} d_A^{\abs{\sigma^\dagger \tau_2}+\abs{\sigma}} = d_A^{m+2 + m+1} + d_A^{m+1 + m+2} + \sum_{\sigma \in \calS_{m+2} \setminus \{\tau_2, e\}} d_A^{\abs{\sigma^\dagger \tau_2}+\abs{\sigma}} \simeq 2d_A^{2m+3},
% \ee
% where the first and second term origns from $\sigma = \tau_2$ and $e$. Therefore, we can asymptotically have the first term in Eq.~\eqref{eq:avgMI_supp7} as
% \be
%     \frac{1}{d_A^{m+2}} (e|_A(0|_B \sum_{\sigma} {\rm Wg}(e, m+2) \vertpar{\sigma}(\sigma| |\tau_2)_A |D_{m+2})_B \simeq \frac{2 d_A^{2m+3} d_B}{d_A^{m+2}} {\rm Wg}(e, m+2) \simeq 2 d_A^{-1}d_B^{-m-1},
% \ee
% where we utilize ${\rm Wg}(e, m+2) \simeq 1/(d_A^{m+2}d_B^{m+2})$ in the last approximation. 
Similarly, the second term in Eq.~\eqref{eq:avgMI_supp7} becomes 
\begin{align}
    &{}_A(\hat{e}| {}_B(0| \sum_{\sigma \neq \pi} {\rm Wg}(\sigma^{-1}\pi, n+2;d)   |\hat{\sigma})_{AB} (\hat{\pi}|\hat{e}_n\hat{\tau}_1)_A (\hat{\pi}|\hat{H}_{n+2})_B = \sum_{\sigma \neq \pi} {\rm Wg}(\sigma^{-1}\pi, n+2;d) \tr_A(\hat{\sigma}) \tr_A(\hat{\pi}^\dagger (\hat{e}_n\otimes \hat{\tau}_1)) d_B\\
    &= d_B \sum_{\sigma \neq \pi} {\rm Wg}(\sigma^{-1}\pi, n+2;d) d_A^{|\sigma| + |\pi^{-1}(e_n\otimes \tau_1)|}.
    \label{eq:avgMI_supp6}
\end{align}
The above summation can be evaluated as follows.
\begin{align}
     &\sum_{\sigma \neq \pi} {\rm Wg}(\sigma^{-1}\pi, n+2;d) d_A^{|\sigma| + |\pi^{-1}(e_n\otimes \tau_1)|} \nonumber\\
     &= \sum_{\pi \neq e} {\rm Wg}(e\pi, n+2;d) d_A^{|e| + |\pi^{-1}(e_n\otimes \tau_1)|} + \sum_{\pi \neq e_n\otimes \tau_1}  {\rm Wg}((e_n\otimes \tau_1^{-1})\pi, n+2;d) d_A^{|e_n \otimes \tau_1| + |\pi^{-1}(e_n\otimes \tau_1)|} + \dots \\
     % &= \sum_{\sigma \neq \tau_2} {\rm Wg}(\sigma^{-1}\tau_2, m+2) d_A^{m+2 + |\sigma|} + \sum_{\pi \neq \tau_2} {\rm Wg}(\tau_2^{-1}\pi, m+2) d_A^{|\pi^\dagger \tau_2| + m+1} + \sum_{\substack{\sigma\neq \pi\\ \sigma, \pi \neq \tau_2}}{\rm Wg}(\sigma^{-1}\pi, m+2) d_A^{|\pi^\dagger \tau_2|+|\sigma|}\\
     &\ge -\frac{(n+2)!}{d_A^{n+3} d_B^{n+3}} d_A^{2(n+2)} - \frac{(n+2)!}{d_A^{n+3} d_B^{n+3}}d_A^{2(n+1)}      + \calO\left(\frac{d_A^{2n+1}}{d_A^{n+4}d_B^{n+4}}\right)\\
     &= -\frac{d_A^{n-1}}{d_B^{n+3}}\left((n+2)!(d_A^2 + 1) + \calO\left(\frac{1}{d_A^2 d_B}\right)\right).
     % &= {\rm Wg}(e(e_n\otimes \tau_1), n+2;d) \left(d_A^{2(n+2)} + \calO\left(d_A^{2n+3}\right)\right) + \sum_{\sigma \neq \tau_2, e} {\rm Wg}(\sigma^{-1}\tau_2, m+2) d_A^{m+2 + |\sigma|} + {\rm Wg}(\tau_2^{-1}e, m+2) d_A^{m+1 + m+1} \nonumber\\
     % &\quad + \sum_{\pi \neq \tau_2, e} {\rm Wg}(\tau_2^{-1}\pi, m+2) d_A^{|\pi^\dagger \tau_2| + m+1} + \sum_{\substack{\sigma\neq \pi\\ \sigma, \pi \neq \tau_2}}{\rm Wg}(\sigma^{-1}\pi, m+2) d_A^{|\pi^\dagger \tau_2|+|\sigma|} \label{eq:avgMI_supp9} \\
     % &\simeq -\frac{d_A^{2m+4}}{d_A^{m+3}d_B^{m+3}} - \frac{d_A^{2m+2}}{d_A^{m+3}d_B^{m+3}} + \cdots\\
     % &\simeq  -d_B^{-(m+3)}\left(d_A^{m+1} + d_A^{m-1}\right),
\end{align}
where in the inequality we utilize ${\rm Wg}(e(e_n\otimes \tau_1), n+2;d) = -1/d_A^{n+3}d_B^{n+3}$ and $\sum_{i=1}^N {\rm sgn}(x_i)|x_i| |y_i| \ge -N \max_i |x_i||y_i|$. The second term of Eq.~\eqref{eq:avgMI_supp6} thus becomes
\begin{align}
    {}_A(\hat{e}| {}_B(0| \sum_{\sigma \neq \pi} {\rm Wg}(\sigma^{-1}\pi, n+2;d)   |\hat{\sigma})_{AB} (\hat{\pi}|\hat{e}_n\hat{\tau}_1)_A (\hat{\pi}|\hat{H}_{n+2})_B =-\frac{d_A^{n-1}}{d_B^{n+2}}\left((n+2)!(d_A^2 + 1) + \calO\left(\frac{1}{d_A^2 d_B}\right)\right) .
\end{align}
Therefore, the ensemble average of pseudo purity in Eq.~\eqref{eq:avgMI_supp7} becomes
\begin{align}
    &\E_\text{Haar}\left[\gamma^{(n)}\right] = \frac{1}{d_A^{n+2}}\left[\frac{d_A^{n+1}}{d_B^{n+1}}\left(2 + \calO\left(\frac{1}{d_A^2}\right)\right) - \frac{d_A^{n-1}} {d_B^{n+2}}\left((n+2)!(d_A^2+1) + \calO\left(\frac{1}{d_A^2 d_B}\right)\right)\right]\\
    &=\frac{2}{d_A d_B^{n+1}} - \frac{(n+2)!}{d_Ad_B^{n+2}}  +\calO\left(\frac{1}{d_A^3 d_B^{n+1}}\right),
\end{align}
and by taking the replica trick $n \to -1$, the ensemble average purity is
\be
    \E_\text{Haar}\left[\gamma\right] = \frac{2}{d_A} - \frac{1}{d_A d_B}  +\calO\left(\frac{1}{d_A^3}\right).
\ee

% \be
%     \E_\text{Haar}\left[\gamma\right] =  2d_A^{-1}- \left(d_A^{-1} + d_A^{-3}\right)d_B^{-1} = \frac{2}{d_A}\left(1 - \frac{1}{2d_B} - \frac{1}{2d_A^2 d_B}\right).
% \ee
The expected average mutual information is thus lower bounded in the thermodynamic limit $d_A \to \infty$ by 
\be
\E_\text{Haar}\E_\bfz I(R:A_{\rm out}|z) \ge -2\log_2\left(\frac{2}{d_A} - \frac{1}{d_A d_B}  +\calO\left(\frac{1}{d_A^3}\right)\right),
\ee
which is the Lemma~\ref{lemma:security_DT} in the main text.

\section{Additional details on quantum circuit size}
\label{app:qVol}

In this section, we discuss the scaling of quantum circuit size with different choices of $N_B$ in HDT. For convenience, we reprint the definition of quantum circuit size here
\begin{align}
    \mbox{Q-size} &\equiv  \mbox{ depth per step}\times T \times (N_A+N_B)\\
    &\propto T \times (N_A + N_B)^2,
\end{align}
where the second line originates from the unitary design~\cite{brandao2016local, brandao2017quantum}.

To generate an $\epsilon$-approximate $K$-design state ensemble in HDT, we need $N_B \ge K + \log_2(1/\epsilon)$. From Ineq.~\eqref{NB_T_tradeoff} in the main text, the minimum Q-size is
\begin{align}
&\mbox{Q-size} \propto \frac{(N_A+N_B)^2}{N_B}
\times\left[K N_A - \log_2(K!) + \log_2 \left(\frac{1}{\epsilon}\right)\right].
\label{eq:qvol}
\end{align}
Following Eq.~\eqref{eq:qvol}, we notice that the minimum quantum circuit size is achieved when $N_B = N_A$ (assuming that $N_A \ge K + \log_2(1/\epsilon)$),
\begin{align}
    \mbox{Q-size}\rvert_{N_B = N_A} &\gtrsim  4N_A \left[KN_A -\log_2(K!) + \log_2\left(\frac{1}{\epsilon-(2^K-1)2^{-N_A}}\right)\right]\\
    &\sim 4K N_A^2 + 4N_A \log_2 (1/K!\epsilon).
\end{align}

At the minimum necessary bath $N_B = K + \log_2(1/\epsilon)$, from the expression of the relative frame potential error,
\begin{align}
    &\delta^{(K)} = \frac{2^K-1}{d_B} + d_B^{-T} \frac{1}{\calF_\text{Haar}^{(K)}} = \frac{2^K-1}{d_B} + d_B^{-T} \frac{d_A^K}{K!} = \epsilon,
\end{align}
where the second equation holds at thermodynamic limit $d_A \to \infty$,
we can solve the necessary time steps
\begin{align}
T\rvert_{N_B = K+\log_2(1/\epsilon)} = K \frac{N_A}{N_B} - \frac{1}{N_B} \log_2(K!) + 1.
\end{align}
Then the quantum circuit size becomes
\begin{align}
    \mbox{Q-size}\rvert_{N_B = K+\log_2(1/\epsilon)} &\propto T (N_A+N_B)^2\\
    &= \left(K N_A - \log_2(K!) + K + \log_2(1/\epsilon) \right) \left(\frac{N_A}{K+\log_2(1/\epsilon)} + 1\right)(N_A + K+\log_2(1/\epsilon)).
    \label{eq:qvol_minB}
\end{align}
On the other hand, for $T=1$ corresponding to the conventional DT, to ensure a small error
\begin{align}
    &\delta^{(K)} = \frac{1}{d_B}\left(1+\frac{1}{\calF_\text{Haar}^{(K)}}\right) = \frac{1}{d_B}\left(1+\frac{d_A^K}{K!}\right) \le \epsilon,
\end{align}
we require
\begin{align}
    N_B \ge \log_2\left(1+\frac{d_A^K}{K!}\right) + \log_2(1/\epsilon) \simeq K N_A - \log_2(K!) + \log_2(1/\epsilon),
\end{align}
where the first line we utilize the result from Ref.~\cite{ippoliti2022solvable}, which is slightly more precise than our result and the second approximation holds when $d_A^K/K! \gg 1$. The quantum circuit size for conventional DT is
\begin{align}
    \mbox{Q-size}\rvert_{N_B = K N_A - \log_2(K!) + \log_2(1/\epsilon)} = \left[(K+1) N_A - \log_2(K!) + \log_2(1/\epsilon)\right]^2,
    \label{eq:qvol_dt}
\end{align}
which is always a quadratic scaling in $N_A$.

\begin{figure}[t]
    \centering
    \includegraphics[width=0.35\textwidth]{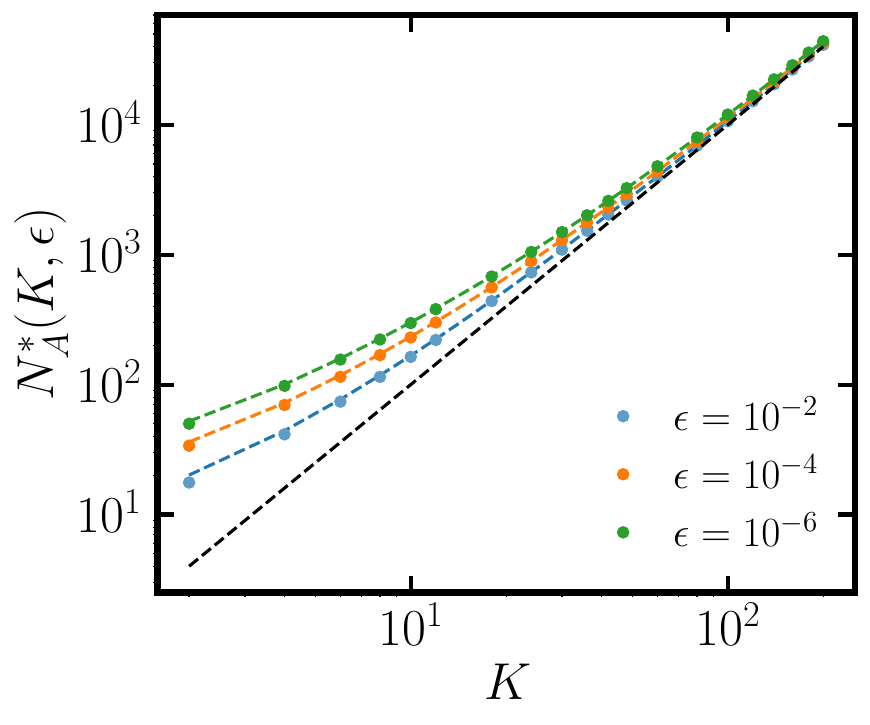}
    \caption{The critical system size $N_A^*(K, \epsilon)$ for quantum circuit size with minimum bath equals conventional deep thermalization. Colored dots represent $N_A^*(K, \epsilon)$ with different $\epsilon$. Colored dashed lines represent the solution from Eq.~\eqref{eq:N_crit} with corresponding $\epsilon$, and black dashed line indicate the scaling of $\sim K^2$.}
    \label{fig:turningpoint}
\end{figure}
To compare the quantum circuit size requirement in Eq.~\eqref{eq:qvol_minB} and Eq.~\eqref{eq:qvol_dt}, we can analytically solve the critical $N_A^*(K, \epsilon)$ such that their quantum circuit size is equal. To reveal the scaling of $N_A^*(K, \epsilon)$ explicitly, we have
\begin{align}
    &\left(K N_A - \log_2(K!) + K + \log_2(1/\epsilon) \right) \frac{(N_A + K+\log_2(1/\epsilon))^2}{K + \log_2(1/\epsilon)} = ((K+1) N_A - \log_2(K!) + \log_2(1/\epsilon))^2\\
    &\left(K N_A - K + K + \log_2(1/\epsilon) \right) \frac{(N_A + K+\log_2(1/\epsilon))^2}{K + \log_2(1/\epsilon)} = ((K+1) N_A + \log_2(1/\epsilon))^2\\
    &N_A^*(K, \epsilon) \simeq \frac{1}{2} \left(\sqrt{(K+\log_2(1/\epsilon)) \left(K^3 + K^2\log_2(1/\epsilon)+4\log_2(1/\epsilon)\right)}+K^2+K \log_2(1/\epsilon)\right)
    \label{eq:N_crit}
\end{align}
where in the second line we take $\log_2(K!) \to K$ in the L.H.S and get rid of the one in R.H.S. for convenience. Eq.~\eqref{eq:N_crit}
indicates that the critical system size $N_A^*(K, \epsilon)$ undergoes a continuous transition from approximation-error-dependent scaling $\sim K \log_2(1/\epsilon)$ to universal scaling $K^2$ with increasing order $K$ shown in Fig.~\ref{fig:turningpoint}, and we can roughly estimate the transition point to be $K \sim \log_2(1/\epsilon)$. 
The threshold can be summarized as
\be 
 N_A^*(K, \epsilon)=\left\{ \begin{array}{ll}
        K \log_2(1/\epsilon), & \mbox{if $K \ll \log_2(1/\epsilon)$};\\
        K^2, & \mbox{if $K \gg \log_2(1/\epsilon)$}.\end{array} \right.
\ee 
When $N_A\ge  N_A^*(K, \epsilon)$, the minimum bath case requires higher quantum circuit size, otherwise conventional deep thermalization requires higher quantum circuit size. 

In the following, we analyze the scaling of quantum circuit size with $K \lessgtr \log_2 (1/\epsilon)$. 

For $K < \log_2 (1/\epsilon)$, the quantum circuit size of minimum bath scales as
\begin{enumerate}
    \item $N_A < N_A^* = K \log_2 (1/\epsilon)$. $\mbox{Q-size} \simeq \left(K + 1\right)\left(K N_A  + \log_2(1/\epsilon) \right) (N_A+\log_2(1/\epsilon))$.
    
    \item $N_A > N_A^* = K \log_2 (1/\epsilon)$. $ \mbox{Q-size} \simeq \frac{K N_A^3}{K+\log_2(1/\epsilon)}$.
\end{enumerate}
On the other hand for $K > \log_2(1/\epsilon)$, the quantum circuit size of minimum bath scales as
\begin{enumerate}
    \item $N_A < N_A^* = K^2$. $\mbox{Q-size} \simeq \left(K N_A - \log_2(K!) + K\right) \frac{K^2}{K + \log_2(1/\epsilon)}(N_A + K) \simeq \frac{K^3}{K + \log_2(1/\epsilon)}(N_A + 1 - \log_2 K)(N_A + K)$.
    
    \item $N_A > N_A^* = K^2$. $ \mbox{Q-size} \simeq \frac{K N_A^3}{K+\log_2(1/\epsilon)}$.
\end{enumerate}
where we apply $\log_2(K!) \simeq K \log_2(K)$ in the first line. We demonstrate the above scalings in Fig.~\ref{fig:qVol_minbath}a and b for the two cases of $K \lessgtr \log_2(1/\epsilon)$ separately.

\begin{figure}[t]
    \centering
    \includegraphics[width=0.5\textwidth]{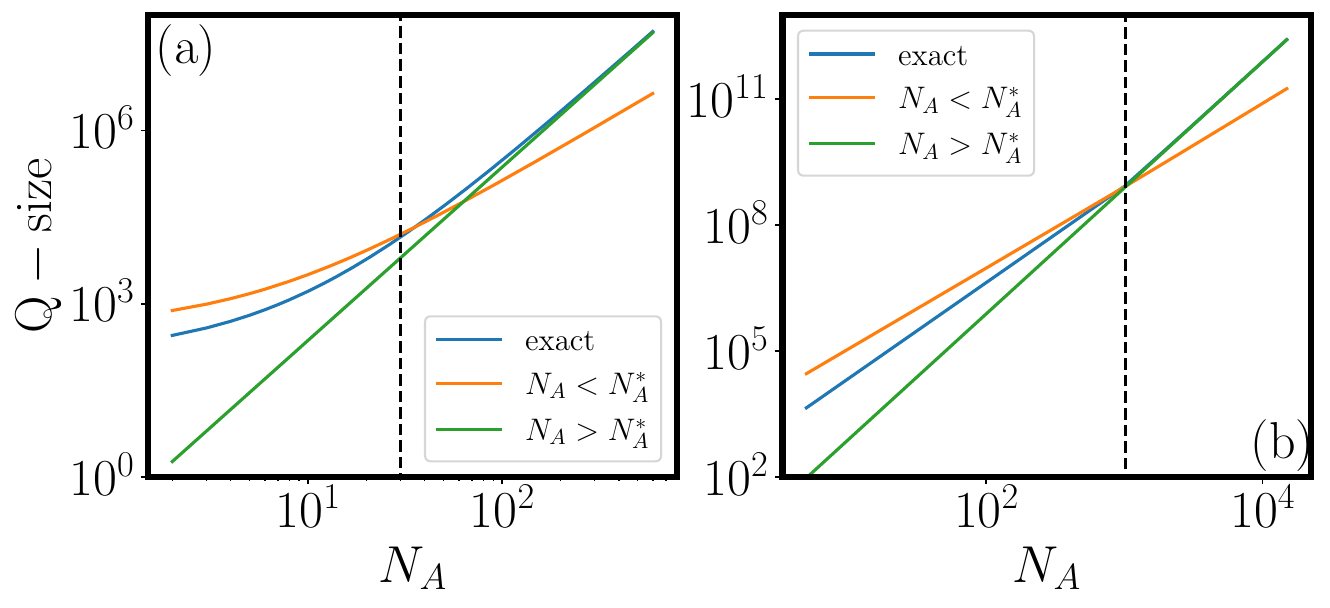}
    \caption{The scaling of quantum circuit size with minimum bath for (a) $K < \log_2(1/\epsilon)$ and (b) $K > \log_2 (1/\epsilon)$. Parameters: $\epsilon=10^{-3}$; (a) $K = 3$ (b) $K = 32$. Blue lines show exact result in Eq.~\eqref{eq:qvol_minB}. Orange lines represent scaling analyses as (a) $(K+1)(K N_A + \log_2(1/\epsilon))(N_A + \log_2(1/\epsilon))$ and (b) $ K^3 (N_A +1 - \log_2(K)) (N_A + K)/(K + \log_2(1/\epsilon))$. The green lines represent $K N_A^3/(K + \log_2 (1/\epsilon)$. Black vertical dashed lines indicate (a) $N_A^* \sim K \log_2(1/\epsilon)$ and (b) $N_A^* \sim K^2$. }
    \label{fig:qVol_minbath}
\end{figure}
% \QZ{We will assume that $K\ll \log_2(1/\epsilon)$.}
% To reveal the scaling of $N_A^*(K, \epsilon)$ explicitly, we have
% \begin{align}
%     &\left(K N_A  + \log_2(1/\epsilon) \right) \frac{(N_A+\log_2(1/\epsilon))^2}{\log_2(1/\epsilon)} = ((K+1) N_A  +\log_2(1/\epsilon))^2,
% \end{align}
% resulting in the solution $N_A = \frac{1}{2} \left(\sqrt{K^2+4}+K\right) \log_2(1/\epsilon) \simeq K \log_2(1/\epsilon)$.

In the end, we compare the scaling analysis of quantum circuit size with numerics for three cases of interest illustrated in Fig.~\ref{fig:qVol}c in the main text: minimum bath HDT, minimum Q-size HDT and conventional DT. In Fig.~\ref{fig:qVol_addition}a with large system $N_A \gg K\log_2(1/\epsilon)$, Q-size of the minimum Q-size HDT (orange dots) and DT (green dots) grows quadratically with the system size $N_A$, though reduces circuit size requirement by a $K$-dependent constant. However, for minimum bath HDT, its Q-size grows cubically with $N_A$. The numerical results agree with our theory analyses (dashed lines) summarized in Fig.~\ref{fig:qVol}c in the main text.
On the other hand, for relative small system $N_A \ll K \log_2(1/\epsilon)$, the minimum Q-size (orange) and conventional DT (green) remain quadratic growth with $N_A$, shown Fig.~\ref{fig:qVol_addition}b. However, the minimum bath case (blue) undergoes a continuous transition from quadratic toward cubic growth and requires smaller necessary resources compared to conventional DT.

\begin{figure}[t]
    \centering
    \includegraphics[width=0.5\textwidth]{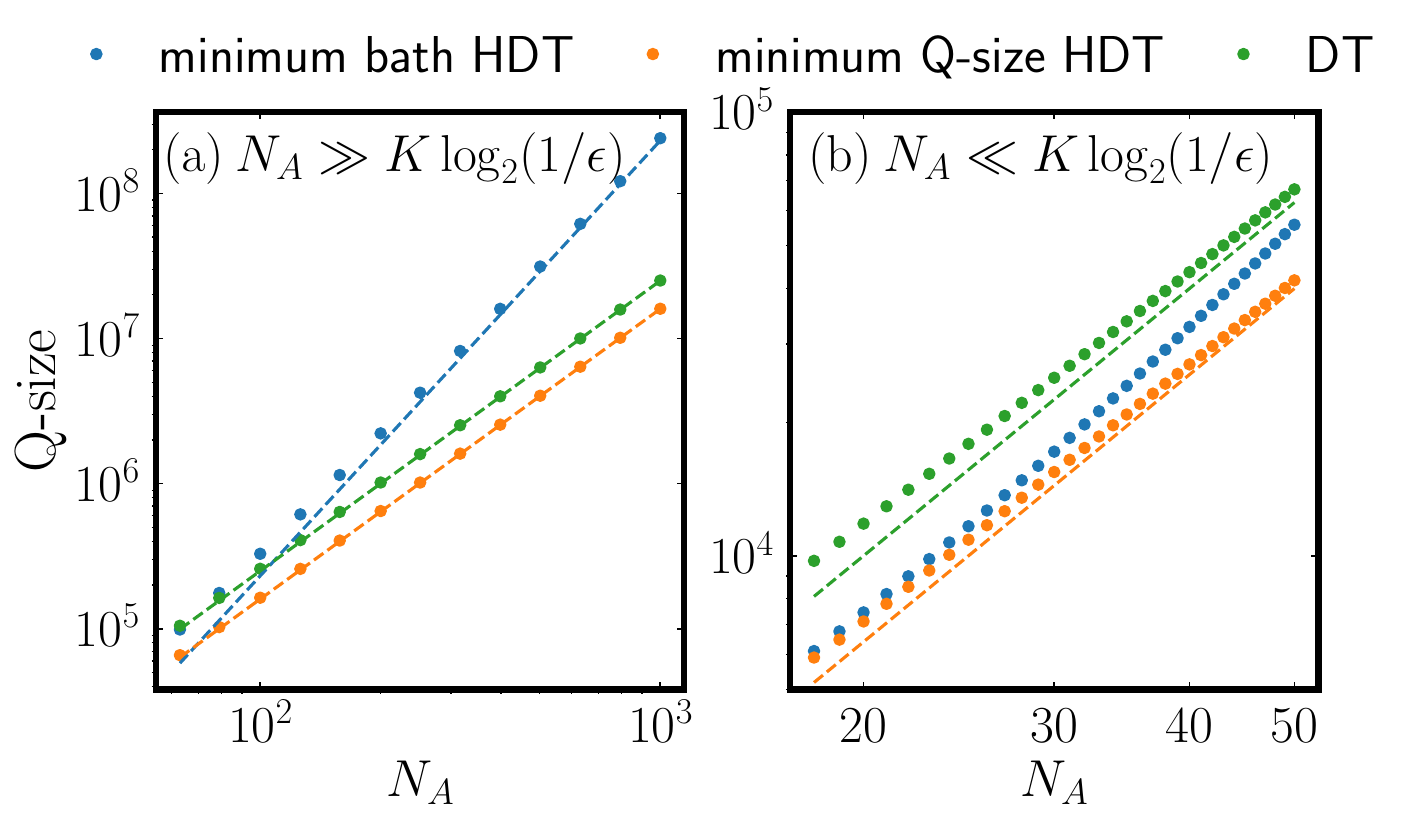}
    \caption{Scaling of Q-size for (a) large and (b) small system. In (a) we choose $K=4$ and $\epsilon=10^{-4}$ with bath size specified namely $N_B=17,N_A ,4N_A+13$ for minimum bath HDT, minimum Q-size HDT and conventional DT. Dots are numerically solved. Orange and green dashed lines in (a)-(b) represent scaling of $4KN_A^2$ and $(K+1)^2 N_A^2$. Blue dashed line in (a) represent $KN_A^3/(K + \log_2(1/\epsilon))$.}
    \label{fig:qVol_addition}
\end{figure}

% \section{Comparison among interpolation training strategies}

% \begin{figure}[h]
%     \centering
%     \includegraphics[width=0.5\textwidth]{Figures/training_strategy.pdf}
%     \caption{Comparison with different interpolation strategies}
%     \label{fig:interpolate_strategy}
% \end{figure}

\section{Additional numerical details}
\label{app:numerical_details}

\begin{figure}[t]
    \centering
    \includegraphics[width=0.75\textwidth]{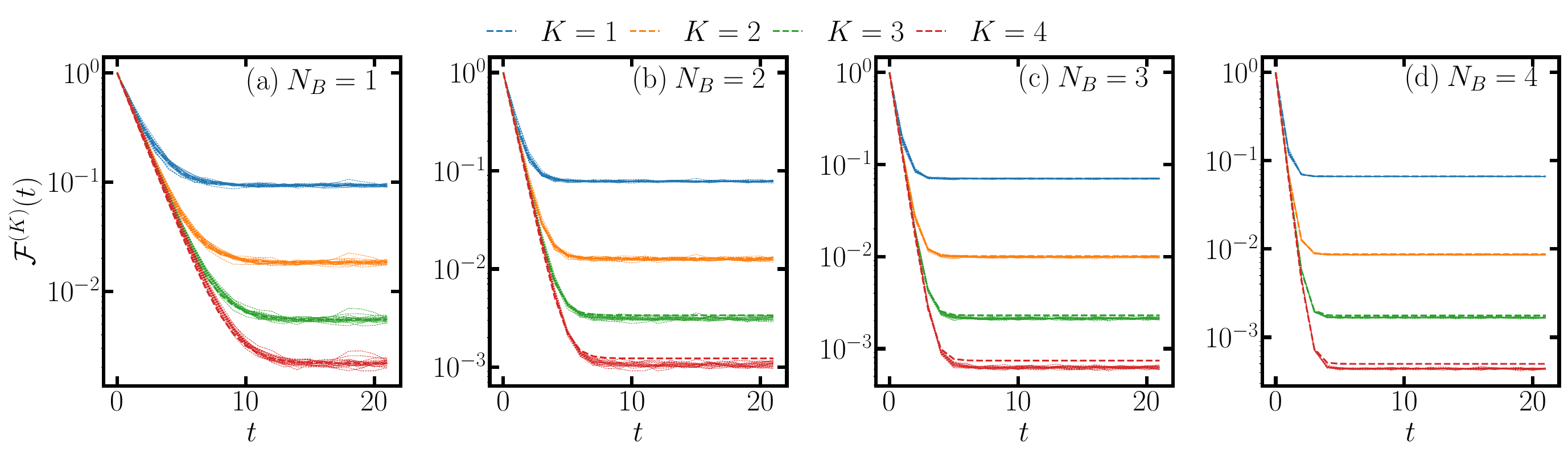}
    \caption{Dynamics of frame potential $\calF^{(K)}$ with different $K$ in a system of $N_A=4$ and $N_B=1,2,3, 4$ (from left to right). Thin dashed lines represent the dynamics of frame potential with each sequence of randomly sampled Haar unitaries, and the thick dashed line show corresponding theoretical prediction in Eq.~\eqref{F1_exact} ($K=1$) and Ineq.~\eqref{high_order_F} ($K>1$) in the main text.
    \label{fig:Fevo2}}
    % \centering
    % \includegraphics[width=0.75\textwidth]{Fig12_qml_addition.pdf}
    % \caption{QML-enhanced HDT frame potential $\calF^{(K)}$ with (a) $K=2$ and (b) $K=3$.
    % Orange dots represent converged frame potential in HDT with random unitary implementation. Orange dashed line is the theoretical prediction in Eq.~\eqref{high_order_F}. Blue dots are frame potential of output state ensemble from QML-enhanced HDT. Black dashed line and gray dots show the exact and empirical of Haar ensemble frame potential.     \label{fig:qml2}}
    \centering
    \includegraphics[width=0.45\textwidth]{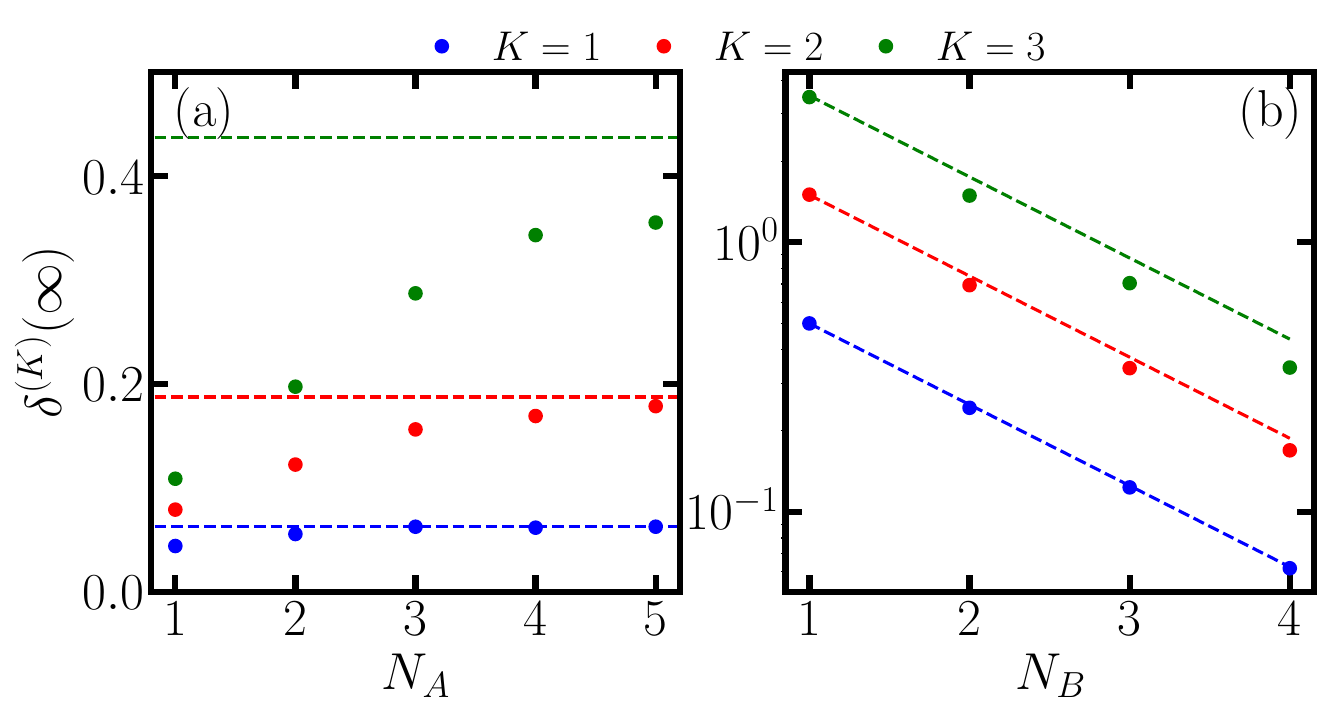}
    \caption{Relative deviation of converged frame potential $\delta^{(K)}(\infty) = \calF^{(K)}(\infty)/\calF^{(K)}_\text{Haar} - 1$. The system consists of (a) $N_B=4$ bath qubits with various $N_A$ data qubits and (b) $N_A=4$ data qubits with various $N_B$ bath qubits.
    Dots represents numerical simulation results and dashed lines represent theoretical prediction in Eq.~\eqref{F1_exact} (blue) and Ineq.~\eqref{high_order_F} in the main text (red and green) as $(2^K-1)/d_B$. Each dot is an average over $50$ Haar unitaries realizations with ensemble of $5\times 10^4$ samples at step $T=32$. Here we take the empirical $\calF^{(K)}_\text{Haar}$ by averaging over $50$ ensembles of $5\times 10^4$ Haar random states to reduce the finite size effect.
    \label{fig:F_inf}}
\end{figure}

In this section, we provide additional details on the numerical simulation results. In Fig.~\ref{fig:Fevo2}, we present the dynamics of the frame potential of \emph{projected ensemble} with each Haar unitary implementation in detail. 
Considering the different random unitary initialization (thin dashed lines), the results generally align with the theoretical prediction for ensemble average though with small fluctuations, indicating the Haar ensemble theory can well capture the dynamics for a typical realization of random unitaries in HDT.
% In the top panel Fig.~\ref{fig:Fevo2}a1-d1, the numerical simulation result (dots) with different $N_B$ agrees with our theoretical predictions (dashed lines). In the bottom panel, we plot the relative fluctuation $\BZ{{\rm Fluc}[\calF^{(K)}(t)]} \equiv {\rm Std}[\calF^{(K)}(t)]/\E_\text{Haar}[\calF^{(K)}(t)]$, and find that the relative fluctuation remains relatively small, indicating the ensemble theory can represent the dynamics for a typical realization of unitaries in HDT. 

We further show the relative deviation of converged frame potential $\delta^{(K)}(\infty) = \calF^{(K)}(\infty)/\calF^{(K)}_\text{Haar} - 1$ versus $N_A$ and $N_B$, which is a direct measure on the approximation error to $K$-design. In Fig.~\ref{fig:F_inf}a, given a bath of $N_B$ qubits, the relative deviation (dots) saturates towards a constant (dashed lines) independent of data system size $N_A$, though it increases with $K$ due to the greater randomness required for higher-order designs. Meanwhile, increasing the bath system size can reduce the deviation exponentially in Fig.~\ref{fig:F_inf}b as we have seen in Fig.~\ref{fig:Fevo} of the main text. Overall, the numerical results support our empirical result in Ineq.~\eqref{high_order_F} of the main text.

% In Fig.~\ref{fig:qml2}, we show the frame potential for output state ensemble in QML-enhanced HDT with $K = 2$ and $K=3$, and we still find the constant advantage from QML enhancement (blue dots) compared to the random unitary implementation (orange dots). A comprehensive study on the advantage provided by QML is an interesting open question, especially its dependence on the bath system size and frame potential order $K$. The average state based loss function is experimentally friendly for large number of iterations though it may loose the full control over the ensemble property compared to other state-wise measures, especially when the number of steps is limited.

%\BZ{fluctuation and sampling error analysis/demonstration.}

\section{Additional details on quantum machine learning}
\label{app:QML_details}
In this section, we provide additional details on the loss function and circuit design in the QML-enhanced HDT.

We have chosen an interpolated loss function
\begin{equation}\label{eq:loss_def}
    \begin{aligned}
    \calL_t(\rho_t)&=(1-\tau_t) \bigl(1-F_\text{sup} \left(\rho_t, \ket{\psi_0}\right) \bigr) + \tau_t \bigl(1-F_\text{sup} \left(\rho_t, \bI/d_A\right)\bigr),
    \end{aligned}
\end{equation}
where $\rho_t$ is the average state at step $t$ and $\bI/d_A$ is the maximally mixed state on the $N_A$ data qubit. We adopt the \emph{superfidelity}~\cite{cerezo2021sub} for efficient evaluation in training without significant impact on performance
\begin{align}
    F_{\rm sup}(\rho, \sigma) = \tr(\rho \sigma) + \sqrt{(1- \tr\rho^2) (1-\tr\sigma^2)}. 
\end{align}
Specifically, for the two states $\ket{\psi_0}$ and $\bI/d_A$ considered in loss function, the \emph{superfidelity} can be further simplified to
$F_\text{sup}(\rho_t, \ket{\psi_0}) = \braket{\psi_0|\rho_t|\psi_0}$ and $
    F_\text{sup}(\rho_t, \bI/2^N) = 1/d_A + \sqrt{(d_A - 1)(1- \tr\rho_t^2)/d_A}$.
Therefore, one can regard the loss function in Eq.~\eqref{eq:loss_def} as a nonlinear schedule for the decay of purity of $\rho_t$ towards $1/d_A$. The loss function considered here is also closely connected to the MMD distance in QuDDPM~\cite{zhang2024generative}.

Note that utilizing the average-state-dependent loss function avoids the precise control and tracking of measurement results on ancillary qubits, though a Monte-Carlo sampling error exists due to finite sampling. In particular, both $\braket{\psi_0|\rho_t|\psi_0}$ and $\tr(\rho_t^2)$ are experimental tractable to measure. For the purity measure, one can either run two quantum devices simultaneously to obtain two copies of the average state, and then estimate the purity via a control-SWAP circuit with an extra control qubit. Another recent proposed option relies on the random measurement toolbox, e.g., classical shadow~\cite{huang2020predicting}. For the control-SWAP circuit, it requires the application of a global control-SWAP gate on the two copies of $N$-qubit state conditioned on control qubit's state, which can be decomposed into $N$ control-SWAP gates applied on two qubits only. In comparison, classical shadow only requires an extra layer of random single qubit Pauli gate for efficient estimation, though at the cost of more random measurements. 

In each step, we parameterize the unitary via a hardware-efficient ansatz~\cite{kandala2017hardware} where each layer of circuit involves single qubit X and Y rotations followed by control-Z gates in brickwall style. In numerical simulation, we choose $L = 2(N_A + N_B)$ layers for each step.

\section{Additional details on experiment}

In this section, we provide additional details on the IBM Quantum experiments. In Fig.~\ref{fig:tradeoff_detail}a-c, we show the results of the first-order frame potential for each choice of randomly selected circuit parameters (grey dashed lines). Despite fluctuations due to finite system size and randomly selected parameters, the decay of $\calF^{(1)}(T)$ versus $N_B$ and $T$ is well captured by the Haar ensemble average results.

\begin{figure*}[t]
    \centering
    \includegraphics[width=0.65\textwidth]{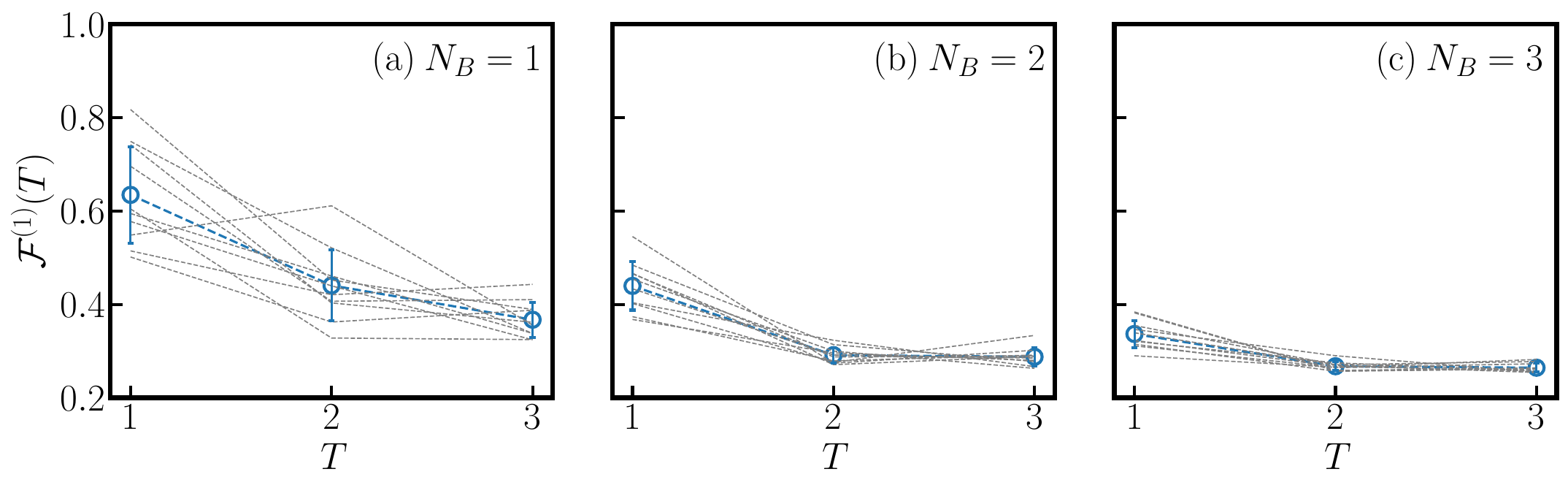}
    \caption{Decay of first order frame potential in a system of $N_A=2$ data qubits and various $N_B$ bath qubits versus time steps $T$ of IBM Quantum Brisbane. Grey dashed line represent result with each set of parameters. Blue circles with errorbars show the mean and standard deviation. 
    \label{fig:tradeoff_detail}}
%     \centering
%     \includegraphics[width=0.45\textwidth]{Figures/PoP_benchmark.pdf}
%     \caption{PoP of \emph{projected ensemble} from conventional deep thermalization with different data system size $N_A$. The bath size is $N_B = 10-N_A$ and depth of random circuit is $L = 100$. Here we perform $10^6$ shot in ideal simulation. \BZ{needed?}}
%     \label{fig:pop_comparison}
\end{figure*}

\end{widetext}

\end{document}